\documentclass[sigconf]{acmart}
\AtBeginDocument{%
  \providecommand\BibTeX{{%
    \normalfont B\kern-0.5em{\scshape i\kern-0.25em b}\kern-0.8em\TeX}}}
    
\setcopyright{acmcopyright}
\copyrightyear{2024}
\acmYear{2024}
\acmDOI{XXXXXXX.XXXXXXX}

\acmConference[SIGMOD'24]{}{}{}
\acmPrice{}
\acmISBN{}

\usepackage{flushend}
\usepackage[a-1b]{pdfx}   
\usepackage{balance}
\usepackage{pifont}

\usepackage{color}
\usepackage[labelfont=bf]{caption}
\usepackage{subcaption}
\usepackage{hyperref}
\usepackage{makecell}
\usepackage{fixltx2e}
\usepackage{rotating,multirow}
\usepackage{etoolbox,xspace}
\usepackage{algorithmicx}
\usepackage{algorithm}
\usepackage{fontawesome}
\usepackage{algpseudocode}
\usepackage{wrapfig}
\algtext*{EndWhile}
\algtext*{EndIf}
\algtext*{EndFunction}
\algtext*{EndFor}
\algrenewcommand\algorithmicrequire{\textbf{Input:}}
\algrenewcommand\algorithmicensure{\textbf{Output:}}

\usepackage{amsthm}

\usepackage{tabularx}
\usepackage{ragged2e}  
\usepackage{booktabs}  
\usepackage{textcomp}
\usepackage{url}
\usepackage{comment}
\usepackage{enumitem}
\usepackage[simplified]{pgf-umlcd}
\usepackage{tikz}


\newcolumntype{Y}{>{\RaggedRight\arraybackslash}X}
\newtheorem{proposition}{Proposition}
\newtheorem{definition}{Definition}
\newtheorem{theorem}{Theorem}
\newtheorem{lemma}{Lemma}
\newtheorem{corollary}{Corollary}

\newcommand{\eop}{\hspace*{\fill}\mbox{$\Box$}}     
\newenvironment{proof}{\paragraph{Proof:}}{\hfill$\square$}
\newcounter{example}
\renewcommand{\theexample}{\arabic{example}}
\newenvironment{example}{
        \vspace{1ex}
        \refstepcounter{example}
        {\noindent\bf Example \theexample:}}
	{\eop\vspace{1ex}}

\newcommand{\squishlist}{
 \begin{list}{$\bullet$}
  { \setlength{\itemsep}{0pt}
     \setlength{\parsep}{1pt}
     \setlength{\topsep}{1pt}
     \setlength{\partopsep}{0pt}
     \setlength{\leftmargin}{1em}
     \setlength{\labelwidth}{1em}
     \setlength{\labelsep}{0.5em} } }

\newcommand{\squishend}{
  \end{list}
}

\definecolor{americanrose}{rgb}{1.0, 0.01, 0.24}
\definecolor{airforceblue}{rgb}{0.36, 0.54, 0.66}
\definecolor{ao(english)}{rgb}{0.0, 0.5, 0.0}
\definecolor{ao}{rgb}{0.0, 0.0, 1.0}

\newcommand{\nima}[1]{\textcolor{red}{Nima: #1}}

\newcommand{\eat}[1]{}


\usepackage{xspace}
\newcommand{\stitle}[1]{\vspace{2mm}\noindent{\bf #1:}}


\newcommand{\InputSet}{P}
\newcommand{\point}{p}
\newcommand{\GroupPoints}{\gee}
\newcommand{\hashmap}{\mathcal{H}}
\newcommand{\hashfunction}{h}
\newcommand{\bucket}{b}
\renewcommand{\vector}{w}
\newcommand{\Intervals}{\mathcal{I}}
\newcommand{\interval}{I}
\newcommand{\Arrang}{\mathcal{A}}

\newcommand{\Gee}{\mathcal{G}}
\newcommand{\gee}{\mathbf{g}}

\newcommand{\eps}{\varepsilon}

\newcommand{\at}[1]{{\tt \small #1}\xspace}

\newcommand{\fairhash}{\textsc{FairHash}\xspace}
\newcommand{\ranker}{\textsc{Ranking}\xspace}
\newcommand{\rtwoD}{\textsc{Ranking$_{2D}$}\xspace}
\newcommand{\rankerProb}{\textsc{Ranking}$^+$\xspace}
\newcommand{\pd}{\textsc{Sweep\&Cut}\xspace}
\newcommand{\necklaceb}{\textsc{Necklace$_{2g}$}\xspace}
\newcommand{\necklacek}{\textsc{Necklace$_{kg}$}\xspace}

\newcommand{\cdf}{\textsc{CDF}-based\xspace}
\newcommand{\fag}{\textsc{Fairness-agnostic}\xspace}

\newcommand{\Groups}{\Gee}
\newcommand{\group}{\gee}
\renewcommand{\Re}{\mathbb{R}}%
\def\mparagraph#1{\par\noindent\textbf{#1.}\quad}
\newcommand{\floor}[1]{\left \lfloor #1 \right \rfloor}
\newcommand{\poly}{\operatorname{poly}}
\newcommand{\polylog}{\operatorname{polylog}}
\newcommand{\optRank}{\eps_{R}}
\newcommand{\optDisc}{\eps_{D}}

\newcommand{\adult}{\textsc{Adult}\xspace}
\newcommand{\compas}{\textsc{Compas}\xspace}
\newcommand{\diabetes}{\textsc{Diabetes}\xspace}
\newcommand{\popsim}{\textsc{ChicagoPop}\xspace}

\DeclareMathOperator*{\argmin}{arg\,min}

\begin{document}

\title{\fairhash: A Fair and Memory/Time-efficient Hashmap}

\author{Nima Shahbazi}
\email{nshahb3@uic.edu}
\orcid{0000-0001-7016-3807}
\affiliation{%
  \institution{University of Illinois Chicago}
  \country{USA}
}

\author{Stavros Sintos}
\email{stavros@uic.edu}
\orcid{0000-0002-2114-8886}
\affiliation{%
  \institution{University of Illinois Chicago}
  \country{USA}
}

\author{Abolfazl Asudeh}
\email{asudeh@uic.edu}
\orcid{0000-0002-5251-6186}
\affiliation{%
  \institution{University of Illinois Chicago}
  \country{USA}
}

%
\renewcommand{\shortauthors}{Shahbazi, et al.}

\begin{abstract}
Hashmap is a fundamental data structure in computer science.
There has been extensive research on constructing hashmaps that minimize the number of collisions leading to efficient lookup query time.
Recently, the data-dependant approaches, construct hashmaps tailored for a target data distribution that guarantee to uniformly distribute data across different buckets and hence minimize the collisions.
Still, to the best of our knowledge, none of the existing technique guarantees group fairness among different groups of items stored in the hashmap. 

Therefore, in this paper, we introduce \fairhash, a data-dependant hashmap that guarantees uniform distribution at the group-level across hash buckets, and hence, satisfies the statistical parity notion of group fairness. We formally define, three notions of fairness and, unlike existing work, \fairhash satisfies all three of them simultaneously. 
We propose three families of algorithms to design fair hashmaps, suitable for different settings. Our ranking-based algorithms reduce the unfairness of data-dependant hashmaps without any memory-overhead. 
The cut-based algorithms guarantee zero-unfairness in all cases, irrespective of how the data is distributed, but those introduce an extra memory-overhead.  Last but not least, the discrepancy-based algorithms enable trading off between various fairness notions.
In addition to the theoretical analysis, we perform extensive experiments to evaluate the efficiency and efficacy of our algorithms on real datasets. Our results verify the superiority of \fairhash compared to the other baselines on fairness at almost no performance cost.
\end{abstract}

\begin{CCSXML}
<ccs2012>
   <concept>
       <concept_id>10002951.10002952.10002971</concept_id>
       <concept_desc>Information systems~Data structures</concept_desc>
       <concept_significance>500</concept_significance>
       </concept>
 </ccs2012>
\end{CCSXML}

\ccsdesc[500]{Information systems~Data structures}

\keywords{Fair Data Structures, Algorithmic Fairness, Responsible Data Management}


\maketitle

\section{Introduction}\label{sec: intro}

\subsection{Motivation}
As data-driven technologies become ingrained in our lives, their drawbacks and potential harms become increasingly evident~\cite{angwin2022machine, ntoutsi2020bias,stoyanovich2022responsible}.
Subsequently, algorithmic fairness has become central in computer science research to minimize machine bias~\cite{barocas2023fairness,mehrabi2021survey,balayn2021managing,nargesian2022responsible}.
Unfortunately, despite substantial focus on data preparation, machine learning, and algorithm design, {\em data structures and their potential to induce unfairness in downstream tasks have received limited attention}~\cite{zhang2022technical}.

Towards filling the research gap to understand potential harms and designing fair data structures, 
this paper revisits the {\em hashmap} data structure through the lens of fairness.
To the best of our knowledge, {\em this is the {\bf first} paper to study group fairness in a data structure design}.
Hashmaps are a founding block in many applications such as bloom filters for set membership~\cite{bruck2006weighted, pagh2005optimal, bloom1970space}, hash sketches for cardinality estimation~\cite{flajolet2007hyperloglog, durand2003loglog}, count sketches for frequency estimation~\cite{cormode2005improved}, min-hashes in similarity estimation~\cite{broder1997resemblance, chum2009geometric}, hashing techniques for security applications~\cite{al2011cryptographic, silva2003overview}, and many more.

Collision in a hashmap happens when the hash of two different entities is the same.
Collisions are harmful as those cause {\em false positives}. For example, in the case of bloom filters when the hash of a query point collides with a point in a queried set, the query point is falsely classified as a set member. In such cases, in the least further computations are needed to resolve the false positives. However, this would require to explicitly storing all set members in the memory, which may not be possible in all cases.
Note that false negative is impossible in case of hashmaps. That simply is because when $x=x'$, the hash of $x$ and $x'$ is always the same.
To further motivate the problem, let us consider Example~\ref{ex-1}.

\begin{example}\label{ex-1}
Consider an airline security application, which aims to identify passengers who may pose a threat, and subject them for further screening and potential prevention from boarding flights.
A set of criminal records is used to create a no-fly list. Due to privacy reasons, the criminals' identities are hashed and the list is a pair of \{hash, gender\} of individuals.
The passenger hashes are matched against the no-fly list for this purpose. 
False positives in airline security can lead to significantly inconveniencing passengers.
\end{example}

Traditional $k$-wise independent hashing~\cite{ostlin2003uniform, siegel1989universal} aims to randomly map a key (an entity) to a {\em random} value (bucket) in a specific output range.
However, given a set of points, it is unlikely that independent random value assignment to the points uniformly distribute the points to the buckets. For example, Fig.~\ref{fig:randNotUniform} shows the distribution of 100 independent and identically distributed (iid) random numbers, we generated in range [0,9]. While in a uniform distribution of the points, each bucket would have exactly 10 points, the random assignment did not satisfy it.
On the other hand, the number of collision is minimized, when the uniform distribution is satisfied.
In order to resolve this issue, 
{\em data-informed approaches} are designed, where given a set of data, the goal is to ``learn'' a proper hash function that uniformly distributes the data across different buckets~\cite{sabek2022can, kraska2018case, mitzenmacher2018model}. 
Particularly, given a data set of $n$ entities, the cumulative density function (CDF) of (the distribution represented by) the data set is constructed.
Then the hashmap is created by partitioning the range of values into $m$ buckets such that each buckets contains $\frac{n}{m}$ entities.
We refer to this approach~\cite{kraska2018case} as \cdf hashmap.
It has been shown that such index structures~\cite{sabek2022can, kraska2018case} can outperform traditional hashmaps on practical workloads.

To the best of our knowledge, none of the existing hashmap schemes consider fairness in terms of equal performance for different demographic groups. Given the wide range of hashmap applications, this can cause discrimination against minority groups, at least for social applications.
Therefore, in this paper, we study {\em group fairness} defined as {\em equal collision probability (false positive rate) for different demographic groups}, in hashmaps.
Specifically, targeting to prevent disparate treatment~\cite{green2011future,asudeh2019designing}, we propose \fairhash, a hashmap that satisfies group fairness.
We consider the \cdf hashmap for designing our fair data structure.

While there are many definitions of fairness, at a high level group fairness notions fall under three categories of independence, seperation, and sufficiency~\cite{barocas2023fairness}. 
Our proposed notion of group fairness falls under the \emph{independence} category, which is satisfied when the output of an algorithm is independent of the demographic groups (protected attributes).
Specifically, we adopt \emph{statistical parity}, A well-known definition under the independence category.

\subsection{Applications}\label{sec:intro:app}
\fairhash is a {\bf data-informed} hashmap, extended upon \cdf hashmap.
Data-informed hashmaps are proper for applications where a large-enough workload is available for learning to uniformly distribute the data across various buckets.
Particularly, data-informed hashmaps are preferred when the underlying data distribution is not uniform.
Besides, the choice between data-informed hashmaps and traditional hashmaps hinges on other factors such as conflict resolution policy and memory constraints.

Nevertheless, data-informed hashmaps effectively address a broad range of practical challenges where traditional approaches fall short. For instance, when dealing with larger payloads or distributed hash maps, it is advantageous to minimize conflicts, making data-informed hashmaps more beneficial. On the other hand, in scenarios involving small keys, small values, or data following a uniform distribution, traditional hash functions are likely to perform effectively~\cite{kraska2018case}.
In addition to Example~\ref{ex-1}, in the following we outline a few examples of the applications of data-informed hashmaps, which demonstrate the need for \fairhash in real-world scenarios.

\stitle{Table Joins in Data Lakes}
Consider an enterprise seeking to share its data with third-party companies. The data is stored in a data lake and includes sensitive information such as email, phone number, or even social security number of the clients serving as the primary keys for select tables. Disclosing such sensitive information constitutes a breach of client privacy, hence it needs to be masked through hashing.
In such cases, where the join operation is on the hashed columns, hash-value collisions would add invalid rows to the result table.
As a result, higher occurrence of false positives 
correlated with particular demographic groups can make the result table biased against that group. 
This underscores the need to employ a fair hashing scheme when anonymizing sensitive columns.

\stitle{Machine Learning Datasets}
Consider a scenario for constructing an ensemble model, where a large dataset is partitioned into multiple smaller random subsets, each used for training for a base model.
In this setting, ensuring that the sampling process does not introduce any bias is crucial; hence, each subset of the data should maintain the same ratio of each subpopulation as present in the original distribution. 
While the conventional random sampling method fails to ensure such distribution alignment, employing \fairhash to bucketize the data, guarantees this goal.

\stitle{Distributed Hashmaps and Load Balancing}
Collisions incur a substantial cost in distributed hashmaps, as each collision necessitates an extra lookup request on the remote machine through RDMA, taking on the order of microseconds. Therefore, higher collision rates linked to keys from a specific demographic group may cause a notable performance disadvantage against that group. 
Using \fairhash, for example, 
web servers can distribute incoming requests across multiple server instances. This guarantees an equitable distribution of the load, mitigating potential harm to a specific client if a server becomes unavailable.


\color{black}

\begin{figure*}[!tb] 
\begin{minipage}[t]{0.35\linewidth}
    \centering
    \includegraphics[width=\textwidth]{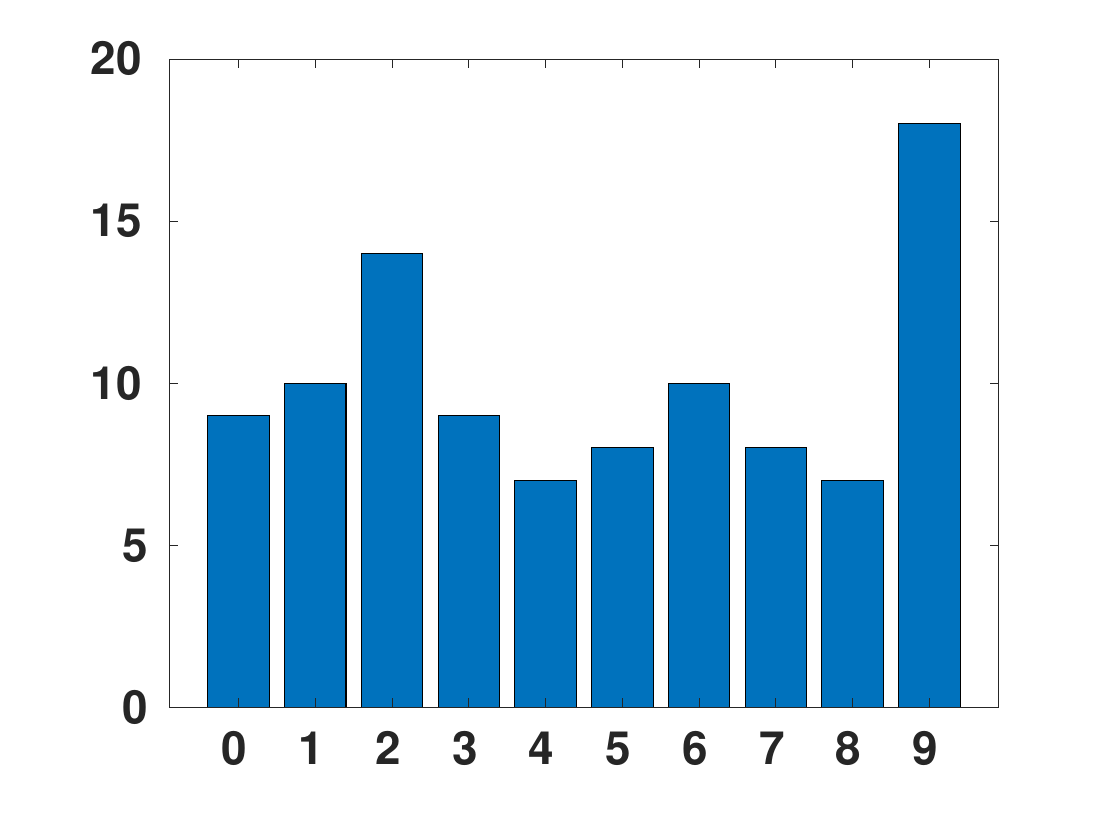}
    \caption{Distribution of 100 random integers in [0,9].}
    \label{fig:randNotUniform}
\end{minipage}
\begin{minipage}[t]{0.64\linewidth}
    \begin{subfigure}[t]{0.49\textwidth}
        \centering
        \includegraphics[width=\textwidth]{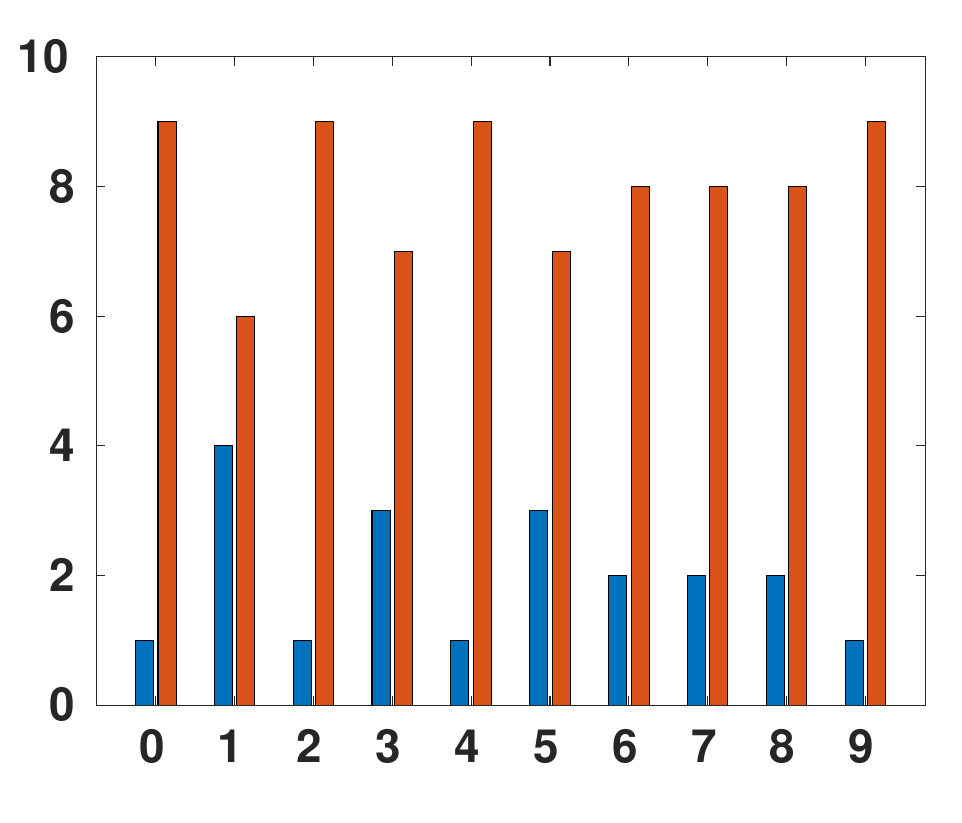}
        \vspace{-8mm}
        \caption{\cdf hashmap}
        \label{fig:cdfNotUniformGroup}
    \end{subfigure}
    \hfill
    \begin{subfigure}[t]{0.49\textwidth}
        \centering
        \includegraphics[width=\textwidth]{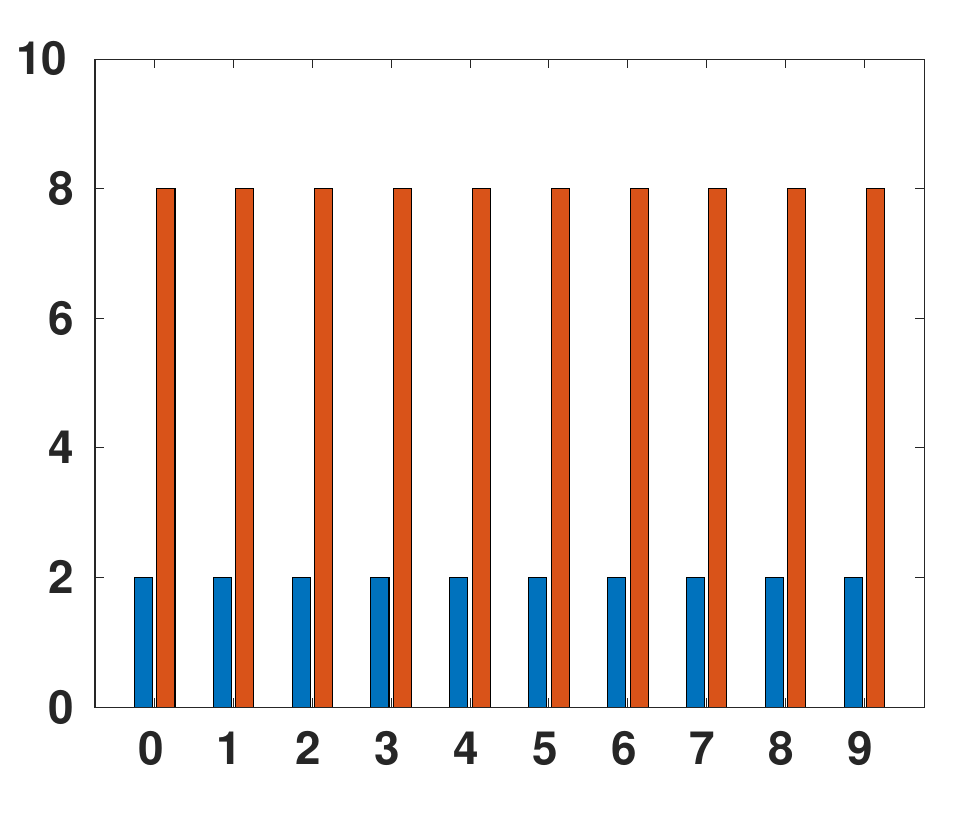}
        \vspace{-8mm}
        \caption{\fairhash}
        \label{fig:fairUniform}
    \end{subfigure}
    \vspace{-2mm}
\caption{Distribution of 100 points belonging to two groups {\tt blue} and {\tt red} in 10 buckets.}
\end{minipage}
\end{figure*}

\subsection{Technical Contributions} 

\noindent{\bf (I) Proposing \fairhash.}
We begin our technical contributions by proposing two notions of group fairness (single and pairwise) based on collision probability disparity between demographic groups.
Intuitively, single fairness is satisfied when the data is uniformly distributed across different buckets, i.e., the buckets are equi-size. 
Consequently, as reflected in Figure~\ref{fig:randNotUniform} traditional hashmaps do not satisfy single fairness. On the other hand, \cdf hashmap satisfies single fairness as all of its buckets have the same size.
Pairwise fairness is a stronger notion of fairness that, not only requires equi-size buckets but it also demands equal ratio of demographic groups across all buckets.
To better clarify this, let us consider the distribution of 100 random (synthetic) points from two groups red and blue into 10 buckets, using \cdf hashmap and \fairhash in Figure~\ref{fig:cdfNotUniformGroup} and Figure~\ref{fig:fairUniform}.
Although assigning equal number of points to each bucket, \cdf hashmap fails to satisfy equal ratio of groups across all buckets, and hence fails on the pairwise fairness. On the other hand, using \fairhash equal group ratios, hence pairwise fairness, is satisfied.
Also, from the figures it is evident that pairwise fairness is a stronger notion, not only requiring equi-size buckets but also equal group ratios. We shall prove of this in \S~\ref{sec:pre} after providing the formal terms and definitions.

\noindent{\bf (II) Ranking-based Algorithms.}
 Next, making the observation that only the {\em ranking} between the points (not their value distribution) impacts the fairness of the \cdf hashmap, we use geometric techniques to find alternative ranking of points for fair hashing. We propose multiple algorithmic results with various benefits. At a high level, our ranking-based approach, maintains the {\em same time and memory} efficiency as of \cdf hashing, while minimizing the unfairness (but not guaranteeing zero unfairness).

\noindent{\bf (III) Cut-based Algorithms.}
Our next contribution is based on the idea of adding more bins than the number of hash buckets. 
We propose \pd to prove that independent of the initial distribution of points, there {\em always exists a fair hashing} based on the cutting approach.
While guaranteeing fair hashing, \pd is not memory efficient.
Therefore, we make an interesting connection to the {\em necklace splitting} problem~\cite{alon1986borsuk,alon1987splitting,meunier2014simplotopal}, and using some of the recent advancements~\cite{alon2021efficient} on it, 
provide multiple algorithms for \fairhash. While guaranteeing the fair hashing, our algorithms achieve the same time efficiency as of the \cdf hashing with a small increase in the memory requirement. 

\noindent{\bf (IV) Discrepancy-based Algorithms.}
We also propose discrepancy-based algorithms that trade-off single fairness to achieve improve pairwise fairness and memory efficiency.
In addition to the theoretical analysis, we conduct experiments to verify the efficiency and effectiveness of our algorithms.
\section{Preliminaries}\label{sec:pre}
Let $\InputSet$ be a set of $n$ points\footnote{Throughout this paper, we assume access to $P$ is the complete set of points. For cases where instead an unbiased set of samples from $P$ is available, our results in Tables~\ref{tab:results} and \ref{tab:results2} will remain in an expected manner.} in $\Re^d$ (each point represents a tuple with $d$ attributes), where $d\geq 1$.
Let $\Groups=\{\group_1,\ldots, \group_k\}$ be a set of $k$ demographic groups\footnote{Demographic groups can be defined as the intersection of multiple sensitive attributes, such as \at{(race, gender)} as $\{$\at{black-female}$, \cdots \}$.} 
(e.g., \at{male}, \at{female}).
Each point $\point\in \InputSet$ belongs to group $\group(\point)\in \Groups$.
By slightly abusing the notation we use
$\gee_i$ to denote the set of points in group $\gee_i$.

Let $\hashmap$ be a hashmap with $m$ buckets, $\bucket_1,\ldots, \bucket_m$, and a hash function $\hashfunction: \Re^d\rightarrow [1,m]$ that maps each input point $p\in \InputSet$ to one of the $m$ buckets.
Given the pair $(P,\hashmap)$, we define three quantitative requirements that are used to define fairness in hash functions.


\vspace{2mm}
\begin{enumerate}
    \item {\em Collision Probability (Individual fairness)}: For any pair of points $p, q$ taken uniformly at random from $P$, it should hold that $Pr[\hashfunction(p)=\hashfunction(q)]= \frac{1}{m}$.
    \item {\em Single fairness}: For each $i\leq k$, for any point $p_i$ taken uniformly at random from $\gee_i$ and any point $x$ taken uniformly at random from $P$, it should hold that $Pr[\hashfunction(p_i)=\hashfunction(x)]=\ldots= Pr[\hashfunction(p_k)=\hashfunction(x)]=\frac{1}{m}$.
    \item {\em Pairwise fairness}: 
    For each $i\leq k$, for any pair of points $p_i, q_i$ taken uniformly at random from $\gee_i$, it should hold that $Pr[\hashfunction(p_i)=\hashfunction(q_i)]=\ldots= Pr[\hashfunction(p_k)=\hashfunction(q_k)]=\frac{1}{m}$.
\end{enumerate}


\stitle{Ensuring Pareto-optimality}
A major challenge when formulating fair algorithms is that those may generate Pareto-dominated solutions~\cite{nilforoshan2022causal}. 
That is, those may produce a fair solution that are worse for all groups, including minorities, compared to the unfair ones.
In particular, in a utility assignment setting, let the utility assigned to each group $\gee_i$ by the fair algorithm be $u_i$. There may exist another (unfair) assignment that assigns $u'_i>u_i$, $\forall \gee_i\in \Gee$.
This usually can happen when fairness is defined as the parity between different groups, without further specifications.

In this paper, the requirements 2 and 3 (single and pairwise fairness) have been specifically defined in a way to prevent generating Pareto-dominated fair solutions.
To better explain the rational behind our definitions, let us consider pairwise fairness (the third requirement). Only requiring equal collision probability between various groups, the fairness constraint would translate to $Pr[\hashfunction(p_i)=\hashfunction(q_i)]= Pr[\hashfunction(p_k)=\hashfunction(q_k)]$, where $p_i$ and $q_i$ belong to group $\gee_i$.
Now let us consider a toy example with two groups $\{\gee_1,\gee_2\}$,  two buckets $\{b_1,b_2\}$, and $n$ points where half belong to $\gee_1$. Let the hashmap $\hashmap$ map each point $p\in \gee_1$ to $b_1$ and each point $q\in\gee_2$ to $b_2$.
%
%
In this example, the collision probability between a random pair of points is 0.5, simply because each bucket contains half of the points.
It also satisfies the collision probability equality between pairs of points from the same groups:
$Pr[\hashfunction(p_i)=\hashfunction(q_i)]= Pr[\hashfunction(p_k)=\hashfunction(q_k)] = 1$.

This, however, is the worst assignment for both groups as their pairs {\em always collide}. In other words, it is fair, in a sense that it is equally bad for both groups. Any other hashmap would have a smaller collision probability for both groups and, even if not fair, would be more beneficial for both groups.
In other words, this is a fair but Pareto-dominated solution.

In order to ensure Pareto-optimality while developing fair hashmaps, in requirements 2 and 3 (single and pairwise fairness), not only we require the probabilities to be equal, but also we require them to be equal to the {\em best case}, where the collision probability is $\frac{1}{m}$. As a result, we guarantee that (1) our hashmap is fair and (2) no other hashmap can do better for any of the groups.

{
\stitle{Calculation of probabilities}
Next, we give the exact close forms for computing the collision probability, the single fairness, and the pairwise fairness.
Let $\alpha_{i,j}$ be the number of items from group $i$ at bucket $j$ and let $n_j=\sum_{i=1}^k\alpha_{i,j}$.
The probabilities are calculated as follows,
collision probability: $\sum_{j=1}^m\left(\frac{n_j}{n}\right)^2$, single fairness: $\sum_{j=1}^m\frac{\alpha_{i,j}}{|\gee_i|}\cdot \frac{n_j}{n}$, pairwise fairness: $\sum_{j=1}^m\left(\frac{\alpha_{i,j}}{|\gee_i|}\right)^2$.
We observe that, if and only if $|\gee_i|/m$ tuples from each group $\gee_i$ are placed in every bucket then all quantities above are equal to the optimum value $\frac{1}{m}$.
}

\stitle{The relationship between the three requirements}
In this paper we aim to construct a hashmap $\hashmap$ that satisfies (approximately) all the three requirements.
\begin{proposition}\label{prop:1}
Collision probability is satisfied if and only if all buckets contain exactly the same number of points i.e., for each bucket $b_j$, $|b_j|=\frac{n}{m}$.
\end{proposition}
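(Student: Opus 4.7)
The plan is to work directly with the closed form for collision probability already given in the excerpt, namely $\sum_{j=1}^m (n_j/n)^2$, together with the budget constraint $\sum_{j=1}^m n_j = n$. The claim then reduces to a standard equality case of a classical inequality.

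The forward direction is immediate: if $|b_j| = n_j = n/m$ for every $j$, then $\sum_{j=1}^m (n_j/n)^2 = m \cdot (1/m)^2 = 1/m$, so the collision probability attains the target value.

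For the reverse direction, the key step is to invoke the Cauchy--Schwarz inequality (equivalently, the QM--AM inequality) in the form
\begin{equation*}
\Bigl(\sum_{j=1}^m n_j \cdot 1\Bigr)^2 \;\leq\; m \cdot \sum_{j=1}^m n_j^2,
\end{equation*}
which using $\sum_j n_j = n$ rearranges to $\sum_{j=1}^m (n_j/n)^2 \geq 1/m$, with equality if and only if all $n_j$ are equal. Hence if the collision probability equals $1/m$, the inequality is tight and every bucket must contain exactly $n/m$ points.

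There is no real obstacle here; the only thing to be careful about is the ``only if'' direction, which needs the equality case of Cauchy--Schwarz rather than just the inequality itself. A brief sentence acknowledging that the minimum of a sum of squares subject to a fixed linear sum is uniquely attained at the uniform distribution should suffice. The proof can be stated in a few lines and does not require any of the group-structural information $\alpha_{i,j}$, since collision probability only depends on bucket sizes $n_j$.
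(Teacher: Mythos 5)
Your proof is correct and follows essentially the same route as the paper: both directions rest on the closed form $\sum_{j=1}^m (n_j/n)^2$ and the fact that, subject to $\sum_j n_j = n$, the sum of squares is minimized (with value $n^2/m$) exactly when all $n_j$ are equal. Your explicit appeal to the equality case of Cauchy--Schwarz is just a named justification of the step the paper states directly, so there is no substantive difference.
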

\begin{proposition}\label{prop:2}
If collision probability is satisfied then single fairness is also satisfied.
\end{proposition}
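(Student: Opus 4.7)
The plan is to derive the statement as an immediate consequence of Proposition~\ref{prop:1} combined with the closed-form expression for single fairness given in the preliminaries. First, assume the collision-probability requirement holds. Then by Proposition~\ref{prop:1}, every bucket satisfies $n_j = n/m$ for $j = 1, \ldots, m$.

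Next, I would plug this into the closed-form single-fairness probability $\sum_{j=1}^{m} \frac{\alpha_{i,j}}{|\gee_i|} \cdot \frac{n_j}{n}$ for an arbitrary group $\gee_i$. Since $n_j/n = 1/m$ is independent of $j$, I can factor $1/m$ out of the sum, obtaining $\frac{1}{m |\gee_i|} \sum_{j=1}^{m} \alpha_{i,j}$. The remaining sum counts the members of $\gee_i$ across all buckets, and because each point of $\gee_i$ is placed into exactly one bucket, this sum equals $|\gee_i|$ exactly. The expression therefore collapses to $1/m$, which is precisely the single-fairness target. Since $i$ was arbitrary, the same calculation holds uniformly for every group $\gee_1, \ldots, \gee_k$, so single fairness is satisfied for all groups simultaneously.

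There is no substantial obstacle here: once Proposition~\ref{prop:1} and the closed-form identities from the probability-calculation paragraph are in hand, the proof reduces to a one-line algebraic manipulation. The only point worth double-checking is that the closed-form expression itself is correctly derived; specifically, that the two draws $p_i \in \gee_i$ and $x \in P$ are independent, so the joint collision probability decomposes as a sum over buckets of the product of the per-bucket hit probabilities, which justifies the identity $\sum_{j} (\alpha_{i,j}/|\gee_i|)(n_j/n)$ on which the argument rests.
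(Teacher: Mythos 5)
Your proof is correct and follows the paper's argument exactly: invoke Proposition~\ref{prop:1} to get $n_j = n/m$, substitute into the closed-form single-fairness expression, factor out $1/m$, and use $\sum_j \alpha_{i,j} = |\gee_i|$ to conclude the value is $1/m$ for every group. No differences worth noting.
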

\begin{proposition}\label{prop:3}
Pairwise fairness is satisfied if and only if for every group $\gee_i\in \Gee$, every bucket $b_j$ contains the same number of points from group $\gee_i$, i.e., $b_j$ contains $\frac{|\gee_i|}{m}$ items from group $\gee_i$. 
{
If pairwise fairness is satisfied then both single fairness and the collision probability are satisfied but the reverse may not necessarily hold.}
\end{proposition}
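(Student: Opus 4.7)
The plan is to work directly from the closed-form expression for pairwise fairness already given in the excerpt, namely $\sum_{j=1}^m(\alpha_{i,j}/|\gee_i|)^2$ for each group $\gee_i$, and reduce both parts of the proposition to a single convexity argument. First I would fix a group $\gee_i$ and observe that the quantities $x_j := \alpha_{i,j}/|\gee_i|$ are nonnegative and sum to $1$, since every point of $\gee_i$ lies in exactly one bucket. Then pairwise fairness for group $\gee_i$ says $\sum_{j=1}^m x_j^2 = 1/m$.

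For the \textbf{if and only if} direction, I would invoke the Cauchy--Schwarz (or equivalently power-mean / Jensen) inequality: for nonnegative reals $x_1,\dots,x_m$ with $\sum_j x_j = 1$, one has $\sum_j x_j^2 \geq (\sum_j x_j)^2/m = 1/m$, with equality exactly when all $x_j$ are equal, i.e., $x_j=1/m$. This immediately gives $\alpha_{i,j}=|\gee_i|/m$ for every bucket $j$, and conversely substituting $\alpha_{i,j}=|\gee_i|/m$ into the closed form yields $m\cdot(1/m)^2 = 1/m$. Since this holds for every group $\gee_i$, the characterization follows.

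For the second part, assume pairwise fairness, so $\alpha_{i,j}=|\gee_i|/m$ for all $i,j$. Then $n_j = \sum_{i=1}^k \alpha_{i,j} = \sum_{i=1}^k |\gee_i|/m = n/m$, which by Proposition~\ref{prop:1} gives collision probability, and by Proposition~\ref{prop:2} then also single fairness. Alternatively one can verify single fairness directly from the closed form: $\sum_{j=1}^m (\alpha_{i,j}/|\gee_i|)(n_j/n) = \sum_{j=1}^m (1/m)(1/m) = 1/m$.

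For the \emph{reverse may not hold} clause I would exhibit a small counterexample: take $k=2$, $m=2$, and $n=4$, with $\gee_1=\{p_1,p_2\}$ placed in $b_1$ and $\gee_2=\{q_1,q_2\}$ placed in $b_2$. Each bucket has exactly $n/m=2$ points, so collision probability and (by Proposition~\ref{prop:2}) single fairness hold, yet $\alpha_{1,1}=2\neq 1 = |\gee_1|/m$, so pairwise fairness fails. The main subtlety, and the only place that requires care, is making sure the equality case of Cauchy--Schwarz is applied per group (not aggregated), since pairwise fairness imposes the constraint one group at a time; beyond that the argument is just algebraic substitution.
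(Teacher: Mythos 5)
Your proof is correct and follows essentially the same route as the paper: the paper establishes the first part by applying Proposition~\ref{prop:1} with $P=\gee_i$ (whose proof is exactly your per-group convexity/Cauchy--Schwarz equality argument), derives collision probability and single fairness from $n_j=\sum_i\alpha_{i,j}=n/m$ via Propositions~\ref{prop:1} and~\ref{prop:2}, and refutes the converse with a counterexample of the same form as yours (two groups each confined to its own bucket, taken from the proof of Lemma~\ref{lem:1}).
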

\begin{proof}
{
We give the proofs to all propositions above.

For Proposition~\ref{prop:1}, if each bucket contains $n/m$ items then the collision probability is $\sum_{j=1}^m \left(\frac{n/m}{n}\right)^2=\frac{1}{m}$, so it is satisfied. For the other direction, we assume that the collision probability holds.
Notice that $\sum_{j=1}^m \left(\frac{n_j}{n}\right)^2=\frac{1}{m}\Leftrightarrow \sum_{j=1}^m n_j^2=\frac{n^2}{m}$. For any integer values $n_j\geq 0$ with $\sum_{j=1}^m n_j=n$, it holds that $\sum_{j=1}^m n_j^2\geq \sum_{j=1}^m \left(\frac{n}{m}\right)^2=\frac{n^2}{m}$ and the minimum value is achieved only for $n_j=\frac{n}{m}$ for each $j\in[1,m]$. The result follows.

For Proposition~\ref{prop:2}, we have that if the collision probability is satisfied then from Proposition~\ref{prop:1}, it holds $n_j=\frac{n}{m}$. Then the single fairness for any group $\gee_i$ is computed as $\sum_{j=1}^m\frac{\alpha_{i,j}}{|\gee_i|}\frac{n_j}{n}=\sum_{j=1}^m\frac{\alpha_{i,j}}{|\gee_i|}\frac{n/m}{n}=\frac{1}{m\cdot |\gee_i|}\sum_{j=1}^m \alpha_{i,j}=\frac{1}{m}$, so it is satisfied.

The first  part of Proposition~\ref{prop:3} follows directly from Proposition~\ref{prop:1}, because the pairwise fairness of group $\gee_i$ is equivalent to the collision probability assuming that $P=\gee_i$. For the second part, if pairwise fairness is satisfied, then from the first part of Proposition~\ref{prop:3}, we know that $n_j=\frac{n}{m}$. Then from Proposition~\ref{prop:1} the collision probability is satisfied, so from Proposition~\ref{prop:2}, the single fairness is also satisfied. Finally, the construction in the proof of Lemma~\ref{lem:1} shows that the reverse may not necessarily hold.
}
\end{proof}

From Propositions~\ref{prop:1} to \ref{prop:3}, in order to satisfy the three requirements of collision probability, single fairness, and pairwise fairness, it is enough to develop a hashmap that satisfies pairwise fairness (which will generate equal-size buckets).
In other words, {\em pairwise fairness is the strongest property}, compared to the other two.


Our goal is to design hashmaps that, while satisfying collision probability and single fairness requirements, satisfies {\em pairwise fairness} as the stronger notion of fairness\footnote{To simplify the terms, in the rest of the paper, we use ``fairness'' to refer to pairwise fairness. We shall explicitly use ``single fairness'' when we refer to it.}.
Specifically, we want to find the hashmap $\hashmap$ with $m$ buckets to optimize the pairwise fairness.

\stitle{Measuring unfairness}
For a group $\gee\in \Gee$, let $Pr_\gee$ be the pairwise collision probability between its members. That is, $Pr_\gee=Pr[h(p)=h(q)]$, if $p$ and $q$ are selected uniformly at random from the same group $\gee(p)=\gee(q)=\gee$. 
We measure unfairness as the max-to-min ratio in pairwise collision probabilities between the groups. 
Particularly, using $\frac{1}{m}$ as the min on the collision probability, we say a hashmap is {\bf $\eps$-unfair}, if and only if 

\begin{align}
    \frac{\max_{\gee\in\Gee}(Pr_\gee)}{1/m} \leq (1 +\eps)
    ~\Rightarrow~ \max_{\gee\in\Gee}(Pr_\gee) \leq \frac{1}{m}(1 +\eps)
\end{align}

It is evident that for a hashmap that satisfies pairwise fairness, $\eps=0$.
We use $\frac{1}{m}$ as the min on the collision probability to ensure pareto-optimality.

\stitle{Memory efficiency} A hashmap with $m$ buckets needs to store at least $m-1$ \emph{boundary points} to separate the $m$ buckets. 
While our main objective is to satisfy fairness, we also would like to minimally increase the required memory to separate the buckets.
We say a hashmap with $m$ buckets satisfies {\bf $\alpha$-memory}, 
if and only if it stores at most $\alpha(m-1)$ boundary points. 
Evidently, the most memory-efficient hashmap satisfies $\alpha=1$.

\begin{definition}[ $(\eps,\alpha)$-hashmap]
    A hashmap $\hashmap$ is an {\bf $(\eps,\alpha)$-hashmap} if and only if it is $\eps$-unfair and $\alpha$-memory.
\end{definition}
\stitle{Problem formulation} Given $\InputSet$, $\Groups$, and parameters $m, \eps, \alpha$, the goal is to design an $(\eps, \alpha)$-hashmap.

In this work we mostly focus on $(\eps,1)$-hashmaps and $(0,\alpha)$-hashmaps, minimizing $\eps$ and $\alpha$, respectively\footnote{For simplicity, throughout the paper, we consider that the cardinality of each group $\gee_i$ is divisible by $m$.}.
Note that, in the best case, one would like to achieve $(0, 1)$-hashmap.
That is a hash-map that is $0$-unfair and does not require additional memory, i.e., satisfies 1-memory.

\stitle{(Review) \cdf hashmap~\cite{kraska2018case}} 
is a data-informed hashmap that ``learns'' the cumulative density function of values over a specific attribute, and use it to place the boundaries of the $m$ buckets such that an equal number of points ($\frac{n}{m}$) fall in each bucket.
Traditional hash functions and learned hash functions (CDF) usually satisfy the collision probability and the single fairness probability, however they violate the pairwise fairness property.

\begin{lemma}\label{lem:1}
While \cdf hashmap satisfies collision probability, hence single fairness, it may not satisfy pairwise fairness.
\end{lemma}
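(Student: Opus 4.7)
The plan has two parts matching the two clauses of the lemma. For the first clause, I would directly invoke the definition of the \cdf hashmap: by construction, the $m-1$ boundary points are placed so that each bucket contains exactly $n/m$ items of $\InputSet$. Applying Proposition~\ref{prop:1} in the ``if'' direction immediately gives that the collision probability requirement is satisfied, and then Proposition~\ref{prop:2} yields single fairness for free. This part requires no calculation beyond citing the already-proved propositions.

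For the second clause, I would produce an explicit counterexample, since we only need to show that pairwise fairness \emph{may} fail. The natural construction is to exploit the fact that \cdf hashmap only uses the ranking of points along a single attribute and is blind to group labels. Take $k=2$ groups $\gee_1,\gee_2$ with $|\gee_1|=|\gee_2|=n/2$ and $m\geq 2$ buckets, and place the points on the real line so that all $n/2$ items of $\gee_1$ have values strictly smaller than all $n/2$ items of $\gee_2$. Then the first $m/2$ buckets (assuming $m$ is even; otherwise the middle bucket just straddles the boundary) contain only group-$\gee_1$ points and the last $m/2$ contain only group-$\gee_2$ points. Thus $\alpha_{1,j}\in\{0,n/m\}$, which violates the condition of Proposition~\ref{prop:3} that every bucket must contain exactly $|\gee_i|/m=n/(2m)$ points from each group.

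To conclude, I would compute the pairwise collision probability of $\gee_1$ on this instance directly from the closed form in the preliminaries: $\sum_{j=1}^m (\alpha_{1,j}/|\gee_1|)^2 = (m/2)\cdot(2/m)^2 = 2/m$, which is strictly larger than the optimum value $1/m$ as long as $m\geq 2$. Hence the hashmap is $1$-unfair rather than $0$-unfair, certifying that pairwise fairness fails on this input even though the buckets are equi-sized.

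I do not expect any real obstacle here; the only subtlety is making sure the counterexample is clean (integer divisibility and equal group sizes), and the main conceptual point to highlight is that \cdf hashmap ignores the group attribute entirely, so any adversarial correlation between the sort order and the group membership suffices to break pairwise fairness while preserving the weaker two requirements.
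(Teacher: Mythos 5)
Your proposal is correct and follows essentially the same route as the paper: the first clause is handled by citing Propositions~\ref{prop:1} and \ref{prop:2}, and the second by a counterexample in which the sort order is perfectly correlated with group membership (the paper uses the concrete instance $n=6$, $m=2$, $|\gee_1|=|\gee_2|=3$, which is the $m=2$ case of your construction, yielding pairwise collision probability $2/m=1$ instead of $1/2$). Your version is a mild generalization to arbitrary even $m$, but the underlying idea and calculation are identical.
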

\begin{proof}
{
From~\cite{kraska2018case}, it is always the case that each bucket contains the same number of tuples, i.e., $\frac{n_j}{n}=\frac{1}{m}$ for every $j=1,\ldots, m$. Hence, 
from Proposition~\ref{prop:1}, the collision probability is satisfied. Then, from Proposition~\ref{prop:2}, the single fairness is also satisfied.

Next, we show that \cdf hashmap does not always satisfy pairwise fairness using a counter-example. Let $k=2$, $m=2$, $|\gee_1|=3$, $|\gee_2|=3|$, and $n=6$. Assume the $1$-dimensional tuples $\{1,2,3,4,5,6\}$ where the first $3$ of them belong to $\gee_1$ and the last three belong to $\gee_2$. The \cdf hashmap will place the first three tuples into the first bucket and the last three tuples into the second bucket. By definition, the pairwise fairness is $1$ (instead of $1/2$) for both groups.}
\end{proof}

{
\stitle{Data-informed hashmaps vs. traditional hashmaps}
A summary of the comparison between \cdf and traditional hashmaps is presented in Table~\ref{tab:comparison}.
The most major difference between \cdf and hashmaps is that the former is data-informed. That is, the \cdf hashmap is tailored for a specific data workload, while traditional hashmaps are data-independent, i.e., their behavior does not depend on the data those are applied on.
As a result, as mentioned in \S~\ref{sec:intro:app}, the main assumption and the requirement of the \cdf hashmap is access to a large-enough workload $\InputSet$ for learning the CDF function. On the other hand, traditional hashmaps do not require access to any data workload. 
While the traditional hashmaps compute the hash value of a query point in constant time, \cdf hashmap requires to run a binary search on the bucket boundaries, and hence has a query time logarithmic to $m$.
Having learned the data distribution, the \cdf hashmap guarantees a uniform distribution of data across different buckets, while as shown in Figure~\ref{fig:randNotUniform} traditional hashmaps cannot guarantee that. As a result, \cdf hashmap guarantees equal collision probability and single fairness, while traditional hashmaps do not.
Finally, both traditional and \cdf hashmaps fail to guarantee pairwise fairness, a requirement that \fairhash satisfies.
}
\begin{table*}[!tbh]
    \small
\centering
	\caption{Summary of the algorithmic results with exact ($1/m$) collision and single fairness probability.} 
	\label{tab:results}
	\begin{tabular}{@{}c@{}|c|c||c|c|@{}c@{}}
        &\multicolumn{2}{c||}{{\bf Assumptions}}&\multicolumn{3}{c}{{\bf Performance$^*$}}\\
		{\bf Algorithm} & {\bf No. } & {\bf No.} & {\bf $(\eps,\alpha)$-hashmap} & {\bf Query} & {\bf Pre-processing} \\
        {\bf } & {\bf Attributes} & {\bf Groups} & & {\bf time} & {\bf time} \\
  \hline \hline
        \ranker &$d\geq2$&$k\geq 2$&$(\optRank,1)$&$O(\log m)$&$O(n^{d}\log n)$\\ \hline
        \pd &$d\geq1$&$k\geq 2$&$(0,\frac{n}{m})$&$O(\log n)$&$O(n\log n)$\\ \hline
        \necklaceb &$d\geq1$&$2$&$(0,2)$&$O(\log m)$&$O(n\log n)$\\ \hline
        \necklacek & $d\geq 1$&$k> 2$&$(0,k(4+\log n))$&$O(\log (km\log n))$&$O(mk^3\log n+knm(n+m))$\\ 
        \hline
	\end{tabular}
	\begin{flushleft}
		$*$: $n$ is the dataset size, $m$ is the number of buckets, and $k$ is the number of groups.
	\end{flushleft}
\end{table*}
\begin{table*}[!tbh]
\small
	\caption{Summary of the algorithmic results with approximate collision and single fairness probability. The output of \rankerProb holds with probability at least $1-1/n$.} 
	\label{tab:results2}
	\begin{tabular}{@{}c@{}|c|c||c|c|@{}c@{}}
        &\multicolumn{2}{c||}{{\bf Assumptions}}&\multicolumn{3}{c}{{\bf Performance$^*$}}\\
		{\bf Algorithm} & {\bf No. } & {\bf No.} & {\bf $(\eps,\alpha)$-hashmap} & {\bf Query} & {\bf Pre-processing} \\
        {\bf } & {\bf Attr.} & {\bf Groups} & & {\bf time} & {\bf time} \\
  \hline \hline
        \ranker &$d\geq 2$&$k\geq 2$&$(\optDisc,1)$&$O(\log m)$&$O(n^{d+2}m\log k)$\\ \hline
        \rankerProb &$d\geq 2$&$k\geq 2$&$((1+\delta)\optDisc+\gamma,1)$&$O(\log m)$&$O(n+\frac{k^{d+2}m^{2d+5}}{\gamma^{2d+4}}\polylog (n,\frac{1}{\delta}))$\\ \hline
        \necklacek & $d\geq 1$&$k> 2$&$(\eps,k(4+\log\frac{1}{\eps}))$&$O(\log (km\log\frac{1}{\eps}))$&$O( mk^3\log\frac{1}{\eps}+knm(n+m))$\\ \hline
	\end{tabular}
\end{table*}


\begin{table*}[!tbh]
    \small
    \centering
    \caption{Comparison between \cdf and traditional hashmaps.}
    \label{tab:comparison}
    {
    \begin{tabular}{c|c|c||c|c|c}
        && {\bf Query}& {\bf Collision} & {\bf Single} & {\bf Pairwise} \\
        {\bf Hashmap} &{\bf Architecture}& {\bf time}&{\bf probability}&{\bf fairness}&{\bf fairness}\\
        \hline
        traditional&data-independent&$O(1)$&\ding{55}&\ding{55}&\ding{55}
        \\\hline
        \cdf&data-dependent&$O(\log m)$&\ding{51}&\ding{51}&\ding{55}
        \\\hline
        \fairhash&data-dependent&$O(\log m)$&\ding{51}&\ding{51}&\ding{51}
    \end{tabular}}
\end{table*}

\color{black}



\subsection{Overview of the algorithmic results}
We propose two main approaches for defining the hashmaps, called \emph{ranking-based} approach,  and \emph{cut-based} approach. 
A summary of our algorithmic results with perfect collision probability and single fairness for all groups, is shown in Table~\ref{tab:results}.

{ 
Let $W$ be the set of all possible unit vectors in $\Re^d$. 
Given a vector $w\in W$, let $P_w$ be the ordering defined by the projection of $P$ onto $w$. 
Based on this ordering, we construct $m$ equi-size buckets. 
In the ranking-based approaches, we focus on finding the best vector $\vector$ to take the projection on that minimizes the unfairness.
Let $OPT_R(P_w)$ be the smallest parameter such that an $(OPT_R(P_w), 1)$-hashmap exists in $P_w$ with collision probability and single fairness equal to $1/m$.
We define $\optRank=\min_{w\in W}OPT_R(P_w)$.

In the cut-based approaches, we define $\beta (m-1)+1$ intervals $\Intervals=\{\interval_1,\ldots, \interval_{\beta (m-1)+1}\}$ defined by the $\beta (m-1)$ boundary points, such that for each point $\point\in \InputSet_\vector$ (the projection of $\InputSet$ on a vector $\vector$) there exists an interval $\interval\in\Intervals$ where $\point\in \interval$.
Each interval $\interval\in \Intervals$ is assigned to one of the $m$ buckets.
Our focus on cut-based approaches is on finding the best way to place the boundary points on a given ordering $\InputSet_\vector$. 
In all cases the hashmap $\hashmap$ stores the vector $\vector$ along with the $\beta (m-1)+1$ intervals $\Intervals$ and their assigned buckets.
During the query phase, given a point $q\in \Re^d$, we first apply the projection $\langle\vector, q\rangle$ and we get the value $q_w\in \Re$. 
Then we run a binary search on the boundary points to find the interval $\interval\in \Intervals$ such that $q_w \in \interval$. We return $\bucket(I)$ as the bucket $q$ belongs to.
}




{
So far, we consider that the collision probability and the single fairness should be (optimum) $\frac{1}{m}$. 
However,
this restricts the options when finding fair hashmaps. 
What if, there exists a hashmap with better pairwise fairness having slightly more or less than $n/m$ items in some buckets?
Hence, we introduce the notion of $\gamma-$\emph{discrepancy}~\cite{alon2021efficient}. 
The goal is to find a hashmap with $m$ buckets such that each bucket contains at most $(1+\gamma)\frac{|\GroupPoints_i|}{m}$ and at least $(1-\gamma)\frac{|\GroupPoints_i|}{m}$ points from each group $i\leq k$.
A summary of our algorithmic results with approximate collision probability and single fairness for all groups, is shown in Table~\ref{tab:results2}.
Let $W$ be the set of all possible unit vectors in $\Re^d$. Let $OPT_D(P_w)$ be the smallest parameter such that a hashmap with $OPT_D(P_w)$-discrepancy exists in $P_w$. We define $\optDisc=\min_{w\in W}OPT_D(P_w)$.
}

{
\mparagraph{Fairness and memory efficiency trade-off} 
Ideally, one would like to develop a hashmap that is 0-unfair and 1-memory. But
in practice,
depending on the distribution of the data, achieving both at the same time may not be possible.
For cases where fairness is a hard constraint, cut-based algorithms are preferred, as those guarantee 0-unfairness, irrespective of the data distribution. But that is achieved at a cost of increasing memory usage.
On the other hand, ranking-based algorithms minimize unfairness without requiring any extra memory but do not 0-unfairness; hence those are fit when memory is a hard constraint.
Last but not least, the discrepancy-based algorithms provide a trade-off between pairwise and single fairness. Specifically, these algorithms do not guarantee to contain exactly $\frac{n}{m}$ points in each bucket, and hence, may not satisfy the first two requirements (individual and single fairness). However, for cases where adding a small amount of single unfairness is tolerable, the discrepancy-based algorithms may further reduce the pairwise unfairness of the ranking-based and the memory bound of the cut-based algorithms.
}

\mparagraph{Remark}
In all cases, for simplicity, we assume that our algorithms have access to the entire input set in order to compute a near-optimal hashmap. However, our algorithms can work in expectation, if an unbiased sample set from the input set is provided. 
We verify this, experimentally in Section~\ref{sec:exp:learning}.
\section{Ranking-based Algorithms}\label{sec:ranking}
We start our contribution by defining a space of $(\eps,1)$-hashmaps, and designing algorithms to find the (near) optimum hashmap with the {\em smallest $\eps$}. By definition, recall that $\optRank$ is the smallest unfairness we can find with this technique (assuming $1$-memory).

Our key observation is that {\em only the ordering between the tuples matters} when it comes to pairwise fairness, not the attribute values. 
Hence, assuming that $d>1$, 
our idea is to combine the attribute values of a point $p\in P$ into a single score $f(p)$, using a function $f:\Re^d\rightarrow \Re$ called the ranking function, and consider the ordering of the points based on their scores for creating the hashmap.
Then, in a class of ranking functions, the objective is to find the one that returns the best $(\eps,1)$-hashmap with the smallest value of $\eps$. Of course, $f(p)$ needs to be computed efficiently, ideally in {\em constant time}.
Therefore, we select linear ranking functions, where the points are linearly projected on a vector $w\in \Re^d$.
That is, $f_w(p)=\langle p, w\rangle = p^\top w$.
Notice that the value $f_w(p)$ can be computed in $O(d)=O(1)$ time and it defines an ordering between the different tuples $p\in P$.
For a vector $\vector$, let $P_w$ be the ordering defined by $f_{\vector}$ and let $P_{\vector}[j]$ be the $j$-th largest tuple in the ordering $P_\vector$.
Given the ordering defined by a ranking function $f_w$, we construct $(m-1)$ boundaries to partition the data into $m$ equi-size buckets (each containing $\frac{n}{m}$ points).
Then a natural algorithm to construct an $(\eps, 1)$-hashmap is to run the subroutine for each possible ranking function and in the end return the best ranking function we found.

For simplicity, we describe our first method in $\Re^2$.  All our algorithms are extended to any constant dimension $d$.

It is known that for $P\subset \Re^2$ there exist $O(n^2)$ combinatorially different ranking functions~\cite{edelsbrunner1987algorithms}. 
We can easily compute them if we work in the dual space~\cite{edelsbrunner1987algorithms}. For a point $p=(x_p,y_p)\in P$ we define the dual line $\lambda(p): x_px_1 + y_px_2=1$. Let $\Lambda=\{\lambda(p)\mid p\in P\}$ be the set of $n$ lines. Let $\Arrang(\Lambda)$ be the arrangement~\cite{edelsbrunner1987algorithms} of $\Lambda$, which is defined as the decomposition of $\Re^2$ into connected (open) cells of dimensions $0, 1, 2$ (i.e., point, line segment, and convex polygon) induced by $P$.
It is known that $\Arrang(\Lambda)$ has $O(n^2)$ cells and it can be computed in $O(n^2\log n)$ time~\cite{edelsbrunner1987algorithms}.

Given a vector $w$, the ordering $P_w$ is the same as the ordering defined by the intersections of $\Lambda$ with the line supporting $w$. Hence, someone can identify all possible ranking functions $f_w$ by traversing all intersection points in $\Arrang(\Lambda)$. Each intersection in $\Arrang(\Lambda)$ is represented by a triplet $(p,q,w)$ denoting that the lines $\lambda(p), \lambda(q)$ are intersecting and the intersection point lies on the line supporting the vector $w$. Let $\mathcal{W}$ be the array of size $O(n^2)$ containing all intersection triplets sorted in ascending order of the vectors' angles with the $x$-axis. Let $\mathcal{W}[i]$ denote the $i$-th triplet $(p_i,q_i, w_i)$. It is known that the orderings $P_{w_i}, P_{w_{i+1}}$ differ by swapping the ranking of two consecutive items.
Without loss of generality, we assume that for $(p_i,q_i, w_i)\in \mathcal{W}$,
the ranking of $p_i$ is higher than the ranking of $q_i$ for vectors with angle greater than $w_i$. 
The array $\mathcal{W}$ can be constructed in $O(n^2\log n)$ time.

\subsection{Algorithm}
Using $\mathcal{W}$, the goal is to find the best $(\eps,1)$-hashmap over all vectors satisfying the collision probability and the single fairness.
The high level idea is to consider each different vector $w\in \mathcal{W}$, and for each ordering $P_w$, find the hashmap that satisfies the collision probability and the single fairness measuring the unfairness. In the end, return the vector $w$ along with the boundaries of the hashmap with the smallest unfairness we found. If we execute it in a straightforward manner, we would have $O(|\mathcal{W}|\cdot n\log n)=O(n^{d+1}\log n)$ time algorithm. Next, we present a more efficient implementation applying fast update operations each time that we visit a new vector.

The pseudo-code of the algorithm in $\Re^2$ is provided in Algorithm~\ref{alg:ranking2d}.
The algorithm starts with the initialization of some useful variables and data structures.
Let $w_0$ be the unit vector with angle to the $x$-axis slightly smaller than $w_1$'s angle.
We visit each point in $P$ and we find the (current) ordering, denoted by $P_{w}$, sorting the projections of $P$ onto $w_0$.
Next, we compute the best $(\eps,1)$-hashmap in $P_{w}$.
The only way to achieve optimum collision probability and single fairness in $P_{w}$ is to construct exactly $m$ buckets containing the same total number of tuples in each of them.
Identifying the buckets (and constructing the hashmap) in $P_w$ is trivial because every bucket should contain exactly $n/m$ items. Hence, the boundaries of the $j$-th hashmap bucket are defined as $P_{w}[(j-1)\frac{n}{m}+1], P_{w}[j\frac{n}{m}]$, for $j\in[1,m]$.
Next, we compute the unfairness with respect to $w_0$.
For each group $\gee_i$ that contains at least one item in $j$-th bucket we set $\alpha_{i,j}=0$.
We use the notation $\alpha_{i,j}$ to denote the number of tuples from group $i$ in $j$-th bucket.
Let $P_{w}[\ell]$ be the next item in the $j$-th bucket, and let $P_{w}[\ell]\in \gee_i$. We update $\alpha_{i,j}\leftarrow \alpha_{i,j}+1$.
After traversing all items in $P_{w}$, we compute the pairwise fairness of group $\gee_i$ as $Pr_i=\sum_{j=1}^m\left(\frac{\alpha_{i,j}}{|\GroupPoints_i|}\right)^2$.
The unfairness with respect to $w_0$ is
$\eps=m\cdot \max_{i\leq k} Pr_i-1.$
After computing all values $Pr_i$, we construct a max heap $M$ over $\{Pr_1,\ldots, Pr_k\}$.
Let $w^*=w_0$. We run the algorithm visiting each vector in $\mathcal{W}$ maintaining $\eps, M, w^*, P_w, Pr_i, \alpha_{i,j}$ for each $i$ and $j$.

As the algorithm proceeds, assume that we visit a triplet $(p_s,q_s,w_s)$ in $\mathcal{W}$. If $p_s$ and $q_s$ belong in the same bucket, we only update the positions of $p_s, q_s$ in $P_w$ and we continue with the next vector. Similarly, if both $p_s, q_s$ belong in the same group $\gee_i$ we only update the position of them in $P_w$ and we continue with the next vector. Next, we consider the more interesting case where $p_s\in \gee_i$ belongs in the $j$-th bucket and $q_s\in \gee_{\ell}$ belongs in the $(j+1)$-th bucket of $P_w$, with $i\neq \ell$, just before we visit $(p_s,q_s,w_s)$. We update $P_w$ as in the other cases. However, now we need to update the pairwise fairness.
In particular, we update,
$$Pr_i=Pr_i-\left(\frac{\alpha_{i,j}}{|\GroupPoints_i|}\right)^2-\left(\frac{\alpha_{i,j+1}}{|\GroupPoints_i|}\right)^2+\left(\frac{\alpha_{i,j}-1}{|\GroupPoints_i|}\right)^2+\left(\frac{\alpha_{i,j+1}+1}{|\GroupPoints_i|}\right)^2,$$
and similarly
$$Pr_\ell=Pr_\ell-\left(\frac{\alpha_{\ell,j+1}}{|\GroupPoints_\ell|}\right)^2-\left(\frac{\alpha_{\ell,j}}{|\GroupPoints_\ell|}\right)^2+\left(\frac{\alpha_{\ell,j+1}-1}{|\GroupPoints_\ell|}\right)^2+\left(\frac{\alpha_{\ell,j}+1}{|\GroupPoints_\ell|}\right)^2.$$

\vspace{1mm}\noindent
Based on the new values of $Pr_i, Pr_\ell$ we update the max heap $M$.
We also update $\alpha_{i,j}=\alpha_{i,j}-1$, $\alpha_{i,j+1}=\alpha_{i,j+1}+1$, $\alpha_{\ell,j+1}=\alpha_{\ell,j+1}-1$, $\alpha_{\ell,j}=\alpha_{\ell,j}+1$.
If the top value of $M$ is smaller than $\eps$, then we update $\eps$ with the top value of $M$ and we update $w^*=w_s$.
After traversing all vectors in $\mathcal{W}$ we return the best vector $w^*$ we found. The boundaries can easily be constructed by finding the ordering $P_{w^*}$ satisfying that each bucket contains exactly $n/m$ items.

\begin{algorithm}[!tb]
    \caption{\rtwoD} \label{alg:ranking2d}
    \begin{algorithmic}[1] \small
    \Require{Set of points $P\in \Re^2$}
    \Ensure{vector $w^*$ and corresponding boundaries $B$}
        \State Construct and sort the vectors in $\mathcal{W}$ with respect to their angles;
        \State $w_0\leftarrow$ vector with angle to the $x$-axis slightly smaller than $w_1$'s angle;
        \State $P_w\leftarrow$ sorted projection of $P$ onto $w_0$;
        \State $b_j\leftarrow \left(P_w[(j-1)\frac{n}{m}+1], P_w[j\frac{n}{m}]\right)\quad \forall j=1,\ldots, m$;
        \State $\alpha_{i,j}\leftarrow |\gee_i\cap b_j|, \quad \forall i=1,\ldots, k$, $j=1,\ldots, m$;
        \State $Pr_i\gets \sum_{j=1}^m\left(\frac{a_{i,j}}{|\gee_i|}\right)^2, \quad \forall i=1,\ldots, k$;
        \State $M\leftarrow$ Max-Heap on $Pr_i, \forall i=1,\ldots, m$;
        \State $\eps\leftarrow m\cdot \max_{i\leq k} Pr_i -1$;
        $w^*=w_0$;
        \For{$(p_s,q_s,w_s)\in \mathcal{W}$}
            \If{$p_s, q_s$ belong in the same bucket OR $\gee(p_s)== \gee(q_s)$}
                \State Swap $p_s$, $q_s$ and update $P_w$;
            \Else
                \State Let $p_s\in b_j$, $\gee(p_s)=\gee_i$, $q_s\in b_{j+1}$, $\gee(q_s)=\gee_\ell$;
                \State $Pr_i\gets Pr_i-\left(\frac{\alpha_{i,j}}{|\GroupPoints_i|}\right)^2-\left(\frac{\alpha_{i,j+1}}{|\GroupPoints_i|}\right)^2+\left(\frac{\alpha_{i,j}-1}{|\GroupPoints_i|}\right)^2+\left(\frac{\alpha_{i,j+1}+1}{|\GroupPoints_i|}\right)^2$;
                \State $Pr_\ell\gets Pr_\ell-\left(\frac{\alpha_{\ell,j+1}}{|\GroupPoints_\ell|}\right)^2-\left(\frac{\alpha_{\ell,j}}{|\GroupPoints_\ell|}\right)^2+\left(\frac{\alpha_{\ell,j+1}-1}{|\GroupPoints_\ell|}\right)^2+\left(\frac{\alpha_{\ell,j}+1}{|\GroupPoints_\ell|}\right)^2$;
                \State $\alpha_{i,j}\!\!=\!\!\alpha_{i,j}\!-\!1, \alpha_{i,j+1}=\alpha_{i,j+1}+1, \alpha_{\ell,j+1}=\alpha_{\ell,j+1}-1, \alpha_{\ell,j}=\alpha_{\ell,j}+1$;
                \State Update $Pr_i, Pr_\ell$ in $M$;
                \If{$m\cdot M.top()-1<\eps$}
                     $\{\eps=m\cdot M.top()-1$;
                    $w^*=w_s;\}$
                \EndIf
            \EndIf
        \EndFor
        \For{$j=1$ to $m$}
             $B_j\gets \frac{P_{w^*}[j\frac{n}{m}]+P_{w^*}[j\frac{n}{m}+1]}{2}$;  {\tt\scriptsize // right boundary of $b_j$ }
        \EndFor
        \State {\bf return} $(w^*,B)$;  
    \end{algorithmic}
\end{algorithm}

\stitle{Analysis}
The correctness of the algorithm follows from the definitions.
Next, we focus on the running time. We need $O(n^2\log n)$ to construct $\Arrang(\Lambda)$ and $O(n\log n)$ additional time to initialize $\eps, M, w^*,$ $P_w, Pr_i, \alpha_{i,j}$. For each new vector $w_s$ we visit, we update $P_w$ in $O(1)$ time by storing the position of each item $p\in P$ in $P_w$ using an auxiliary array. All variables $Pr_i, Pr_{\ell}, \alpha_{i,j}, \alpha_{i,j+1}, \alpha_{\ell,j}, \alpha_{\ell, j+1}$ are updated executing simple arithmetic operations so the update requires $O(1)$ time. The max heap $M$ is updated in $O(\log m)$ time. Hence, for each vector $w_s\in \mathcal{W}$ we spend $O(\log m)$ time.
There are $O(n^2)$ vectors in $\mathcal{W}$ so the overall time of our algorithm is $O(n^2\log n)$.

\stitle{Extension to $d\geq 2$} The algorithm can straightforwardly be extended to any constant dimension $d$. Using known results~\cite{edelsbrunner1987algorithms}, we can construct the arrangement of $O(n)$ hyperplanes in $O(n^d\log n)$ time. Then, in $O(n^d\log n)$ time in total, we can traverse all combinatorially different vectors such that the orderings $P_{w_i}, P_{w_{i+1}}$ between two consecutive vectors $w_i, w_{i+1}$ differ by swapping the ranking of two consecutive items. Our algorithm is applied to all $O(n^d)$ vectors with the same way as described above. Hence, we conclude to the next theorem.

\begin{theorem}
\label{thm:2dOpt}
Let $P$ be a set of $n$ tuples in $\Re^d$. There exists an algorithm that computes an $(\optRank,1)$-hashmap satisfying the collision probability and the single fairness in $O(n^d\log n)$ time.
\end{theorem}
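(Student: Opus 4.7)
The plan is to extend Algorithm~\ref{alg:ranking2d} from $\Re^2$ to general constant $d$ by replacing the arrangement of dual lines with an arrangement of dual hyperplanes, then verify that the same incremental bookkeeping still works and the total cost matches the claimed $O(n^d\log n)$.

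First, I would enumerate the combinatorially distinct rankings. Using the standard duality, each $p\in P$ becomes a hyperplane $\lambda(p)$ in $\Re^d$, and the arrangement $\Arrang(\Lambda)$ of the resulting $n$ hyperplanes has $O(n^d)$ cells and can be constructed in $O(n^d\log n)$ time~\cite{edelsbrunner1987algorithms}. The projection order $P_w$ is constant as $w$ varies within a cell and changes exactly when $w$ crosses a hyperplane, so a traversal of cells by adjacency yields a sequence of $O(n^d)$ combinatorially distinct rankings. By choosing a generic one-parameter sweeping family (the straightforward higher-dimensional analog of sorting the triplets of $\mathcal{W}$ by angle in the $d=2$ case), this traversal can be arranged so that consecutive orderings differ by a \emph{single} swap of two adjacent items of $P_w$.

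Second, I would argue each transition is processed in $O(\log m)$ amortized time. Maintain $P_w$ as a doubly-linked list with an auxiliary position array so that swapping two adjacent items is $O(1)$. For a transition that swaps $p_s\in\gee_i$ with $q_s\in\gee_\ell$: if $p_s,q_s$ lie in the same bucket or $i=\ell$, no $Pr_i$ value changes; otherwise $p_s$ moves from $b_j$ to $b_{j+1}$ and $q_s$ the opposite way, and exactly the closed-form updates to $\alpha_{i,j},\alpha_{i,j+1},\alpha_{\ell,j},\alpha_{\ell,j+1}$ and to $Pr_i,Pr_\ell$ given in the $d=2$ description apply verbatim and take $O(1)$. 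A max-heap over $\{Pr_1,\ldots,Pr_k\}$ absorbs the at most two modified entries in $O(\log k)=O(\log m)$, and we compare its top against the best $\eps$ seen so far, updating $w^\star$ when it improves. Initialization (sorting $P$ by a starting projection, filling $\alpha_{i,j}$, computing the $Pr_i$, and building the heap) costs $O(n\log n+k)$.

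Third, correctness follows because the enumeration visits every combinatorially distinct linear ordering; on each ordering the only bucketing with $|b_j|=n/m$ places boundaries at positions $jn/m$, and by Propositions~\ref{prop:1} and~\ref{prop:2} this choice simultaneously attains the optimum collision probability and single fairness. The reported $\eps$ is therefore the minimum of $m\cdot\max_i Pr_i-1$ over all projection vectors, which is precisely $\optRank$ by definition. Summing costs gives $O(n^d\log n)$ for the arrangement, $O(n\log n)$ for initialization, and $O(n^d\log m)=O(n^d\log n)$ for the $O(n^d)$ transitions, yielding the theorem. The main obstacle is the combinatorial step for $d\geq 3$: establishing that the cells of $\Arrang(\Lambda)$ admit an ordering whose adjacent pairs correspond to single transpositions of $P_w$; this is where care is needed, and it can be settled by a generic sweep through the spherical arrangement of projection directions together with standard zone-complexity bounds, after which all the $d=2$ machinery carries over unchanged.
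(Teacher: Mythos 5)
Your proposal follows essentially the same route as the paper: dualize to an arrangement of $n$ hyperplanes with $O(n^d)$ cells built in $O(n^d\log n)$ time, sweep through the combinatorially distinct orderings so consecutive ones differ by one adjacent transposition, and maintain the $\alpha_{i,j}$, $Pr_i$, and a max-heap with $O(1)$ arithmetic plus a logarithmic heap update per transition. Your explicit flagging of the single-transposition traversal for $d\geq 3$ is a point the paper simply asserts, so your treatment is, if anything, slightly more careful, but the argument and the cost accounting are the same.
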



\stitle{Sampled vectors} \label{sec:ranking:vecSample}
So far, we consider all possible vectors $w\in \mathcal{W}$ to return the one with the optimum pairwise fairness. In practice, instead of visiting $O(n^d)$ vectors, we sample a large enough set of vectors $\widehat{\mathcal{W}}$ from $\Re^d$. We run our algorithm using the set of vectors $\widehat{\mathcal{W}}$ instead of $\mathcal{W}$ and we return the vector that leads to the minimum unfairness $\eps$. This algorithm runs in $O(|\widehat{\mathcal{W}}|n\log n)$.
\section{Cut-based Algorithms}
The ranking-based algorithms proposed in Section~\ref{sec:ranking}, cannot guarantee 0-unfairness. In other words, by re-ranking the $n$ points using linear projections in $\Re^d$, those can only achieve an $(\eps_R, 1)$-hashmap.

In this section, we introduce a new technique with the aim to {\em guarantee 0-unfairness}.
So far, our approach has been to partition the values (after projection) into $m$ equi-size buckets. In other words, each bucket $b_i$ is a continuous range of values specified by two boundary points.
The observation we make in this section is that the {\em buckets do not necessarily need to be continuous}. Specifically, we can
partition the values into more than $m$ bins while 
in a many-to-one matching, several bins are assigned to each bucket.
Using this idea, in the following we propose two approaches for developing fair hashmaps with 0-unfairness.


\subsection{\pd}
An interesting question we explore in this section is whether a cut-based algorithm exists that always guarantee 0-unfairness.

\begin{theorem}\label{th:pd}
Consider a set of $n$ points in $\Re$, where each point $p$ belongs to a group $g(p)\in\{\gee_1,\cdots,\gee_k\}$.
Independent of how the points are distributed and their orders, there always exist a cut-based hashmap that is {\bf 0-unfair}.
\end{theorem}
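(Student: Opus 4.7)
The plan is to prove the theorem constructively by giving an explicit \pd procedure that produces a hashmap with zero unfairness. The key insight is Proposition~\ref{prop:3}: pairwise fairness (and hence $0$-unfairness) is equivalent to every bucket receiving exactly $|\gee_i|/m$ points from every group $\gee_i$. Since groups are disjoint, we can decide the bucket of each group's points independently and then splice the decisions together along the line.

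First I would sort the $n$ points by their coordinate on $\Re$, breaking ties arbitrarily. Then, for each group $\gee_i$ separately, I would walk through its $|\gee_i|$ members in sorted order and assign the $j$-th member to bucket $\bigl\lceil jm/|\gee_i|\bigr\rceil$. Because $|\gee_i|$ is divisible by $m$ (the paper's standing assumption), this deterministic rule distributes exactly $|\gee_i|/m$ points of group $\gee_i$ to each of the $m$ buckets. Performing this independently for every group guarantees the per-group balance condition of Proposition~\ref{prop:3}, so the resulting assignment satisfies pairwise fairness regardless of how the groups are interleaved along the line.

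Next I would convert the point-to-bucket assignment into a cut-based hashmap. Sweeping left to right through the sorted sequence, whenever two consecutive points are assigned to different buckets I would insert a boundary between them; contiguous runs of points assigned to the same bucket collapse into a single bin, and each bin inherits the bucket label of the points it contains. This produces a valid partition of $\Re$ into intervals with a many-to-one assignment to the $m$ buckets, exactly the cut-based format described in the overview. By construction each bucket still holds $|\gee_i|/m$ points of every group $\gee_i$, so by Proposition~\ref{prop:3} the hashmap is $0$-unfair.

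There is no real analytical obstacle here, since the construction is essentially a direct appeal to Proposition~\ref{prop:3}; the only thing worth checking is that the number of cuts is finite and consistent with the $\alpha=n/m$ entry for \pd in Table~\ref{tab:results}. In the worst case, every two consecutive sorted points land in different buckets, giving at most $n-1$ cuts, which matches $\alpha(m-1)=\tfrac{n}{m}(m-1)\le n-1$. Thus the construction yields a cut-based hashmap that is $0$-unfair for any input distribution and ordering of the points, establishing the theorem.
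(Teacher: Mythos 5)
Your proposal is correct and follows essentially the same route as the paper: it is exactly the \pd construction, sorting the points, assigning the $j$-th member of each group (in sorted order) to a bucket via a per-group block/round-robin rule so that every bucket receives $|\gee_i|/m$ points of each group, and then cutting wherever consecutive points get different buckets, with $0$-unfairness following from the characterization in Proposition~\ref{prop:3}. The only (immaterial) slip is in your closing memory remark, where the inequality runs the wrong way ($n-1\ge \tfrac{n}{m}(m-1)$, not $\le$); this does not affect the theorem, which asserts only existence of a $0$-unfair cut-based hashmap.
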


We prove the theorem by providing the \pd algorithm (Algorithm~\ref{alg:pd}) that always finds a 0-unfair hashmap.
Without loss of generality, let $L=\langle p_1, p_2, \cdots, p_n\rangle$ be the sorted list of points in $P$ based on their values on an attribute $x$.
\pd sweeps through $L$ from $p_1$ to $p_n$ twice.
During the first sweep (Lines~\ref{alg:pd:l2} to \ref{alg:pd:l4}), 
the algorithm keeps track of the number of instances it has observed from each group $\gee_i$. The algorithm uses $H^{tmp}$ to mark which bucket each point should fall into, such that each bucket contains $\frac{|\gee_i|}{m}$ instances from each group $\gee_i$.
During the second pass (Lines \ref{alg:pd:l6} to \ref{alg:pd:l11}), the algorithm compares the neighboring points and as long as those should belong to the same bucket (Line~\ref{alg:pd:l7}), there is no need to introduce a new boundary.
Otherwise, the algorithm adds a new boundary (in array $B$) to introduce a new bin, while assigning the bucket numbers in $H$.
Finally, the algorithm returns the bin boundaries and the corresponding buckets.

\begin{algorithm}[!tb]
    \caption{\pd} \label{alg:pd}
    \begin{algorithmic}[1] \small
    \Require{The set of points P}
    \Ensure{ Bin boundaries $B$ and corresponding buckets $H$}
        \State $\langle p_1,p_2,\cdots,p_n\rangle\gets$ {\bf sort} $P$ based on an attribute $x$
        \For{$j=1$ to $k$}
            $c_j\gets 0;$ {\tt\scriptsize // \# of instances observed from $\gee_i$}
        \EndFor
        \For{$i=1$ to $n$} \label{alg:pd:l2}
            \State let $\gee_j = \gee(p_i)$; $c_j\gets c_j+1$
            \State $H^{tmp}_i\gets \floor{\frac{c_j\times m}{|\gee_j|}}+1$ \label{alg:pd:l4}
        \EndFor
        \State $i\gets 0; j\gets 0$
        \While{{\bf True}} \label{alg:pd:l6}
            \While{$(i<n$ and $H^{tmp}_i==H^{tmp}_{i+1})$}
                $i\gets i+1$ \label{alg:pd:l7}
            \EndWhile
            \State $H_j\gets H^{tmp}_i$; {\tt\scriptsize // the bucket assigned to the $j$-th bin}
            \If{$i==n$} {\bf break} \EndIf
            \State $B_j=\frac{p_i[x]+p_{i+1}[x]}{2}$;  {\tt\scriptsize // the right boundary of the $j$-th bin}
            \State $j\gets j+1$ \label{alg:pd:l11}
        \EndWhile
        \State {\bf return} $(B,H)$
    \end{algorithmic}
\end{algorithm}

\subsubsection{Analysis}
\pd makes two linear-time passes over $P$. Therefore, considering the time to sort $P$ based on $x$, its time complexity is $O(n\log n)$.
The algorithm assigns $\frac{n}{m}$ point to each bucket; hence, following Propositions~\ref{prop:1} and \ref{prop:2} is satisfies collision probability and single fairness.
More importantly, \pd assigns $\frac{|\gee_i|}{m}$ point from each group $\gee_i$ to each bucket. As a result, for any pair $p_i$, $q_i$ in $\gee_i$, $Pr[h(p_i)=h(q_i)]=\frac{1}{m}$. Therefore, the hashmap generated by \pd is {\bf 0-unfair}, proving Theorem~\ref{th:pd}.

Despite guaranteeing 0-unfairness, \pd is not efficient in terms of memory. Particularly, in the worst case, it can introduce as much as $O(n)$ boundaries.

In a best case, where the points are already ordered in a way that dividing them into $m$ equi-size buckets is already fair, \pd will add $m$ bins ($m-1$ boundaries).
On the other hand, in adversarial setting, a large potion of the neighboring pairs within the ordering belong to different groups with different hash buckets assigned to them. Therefore, in the worst-case \pd may add up to $O(n)$ boundaries, making it satisfy $\frac{n}{m}$-{\bf memory}.
Applying the binary search on the $O(n)$ bin boundaries, the query time of \pd hashmap is in the worst-case $O(\log n)$.

\begin{lemma}\label{th:pd:mem:avg}
    In the binary demographic groups cases, where $\Gee=\{\gee_1, \gee_2\}$ and $r=|\gee_1|$, the expected number of bins added by \pd is bounded by $2\big(\frac{r(n-r)}{n}+m\big)$. 
\end{lemma}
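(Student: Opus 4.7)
The plan is to write the number of bins as $1 + B$, where $B$ is the number of \emph{breaks}, i.e., indices $i \in \{1,\dots,n-1\}$ with $H^{tmp}_i \neq H^{tmp}_{i+1}$, and then bound $\EX[B]$ by splitting breaks into two types according to whether the adjacent pair $p_i, p_{i+1}$ lies in the same group (\emph{homogeneous}) or in different groups (\emph{heterogeneous}). The probability model I will adopt is the natural one in which the ordering of group labels induced by the attribute $x$ is a uniformly random interleaving of $r$ copies of $\gee_1$ and $n-r$ copies of $\gee_2$; I would briefly remark that, since the attribute is independent of the group assignment by assumption, this is equivalent to fixing the sorted list $p_1,\dots,p_n$ and randomizing the group labels.

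For the homogeneous breaks, I would observe that if we extract the subsequence of positions belonging to a fixed group $\gee_j$, their provisional bucket labels, as assigned in Lines~\ref{alg:pd:l2}--\ref{alg:pd:l4}, form the monotone block sequence $\underbrace{1,\dots,1}_{|\gee_j|/m},\underbrace{2,\dots,2}_{|\gee_j|/m},\dots,\underbrace{m,\dots,m}_{|\gee_j|/m}$, which contains exactly $m-1$ internal transitions. Every homogeneous break inside $\gee_j$ in the global order must correspond to one of these subsequence transitions (since a globally adjacent same-group pair is also adjacent in the $\gee_j$-subsequence). Hence at most $m-1$ breaks are contributed by group $\gee_j$, giving $\leq 2(m-1)$ homogeneous breaks in total, deterministically.

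For the heterogeneous breaks, I would simply use the trivial upper bound: their number is at most $u := |\{i : \gee(p_i) \neq \gee(p_{i+1})\}|$. By linearity of expectation together with $\Pr[\gee(p_i)\neq \gee(p_{i+1})] = \tfrac{2r(n-r)}{n(n-1)}$ under the uniform interleaving model, I obtain
\begin{equation*}
\EX[u] \;=\; (n-1)\cdot \frac{2r(n-r)}{n(n-1)} \;=\; \frac{2r(n-r)}{n}.
\end{equation*}

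Combining the two bounds yields $\EX[B] \leq 2(m-1) + \tfrac{2r(n-r)}{n}$, so the expected number of bins is at most $1 + 2(m-1) + \tfrac{2r(n-r)}{n} \leq 2\bigl(\tfrac{r(n-r)}{n}+m\bigr)$, which is exactly the claimed bound. The main conceptual obstacle is the first step, namely arguing cleanly that a homogeneous break in the global order is charged to a unique within-group transition; once that reduction is stated, the heterogeneous count is a one-line linearity-of-expectation calculation, and the rest is arithmetic.
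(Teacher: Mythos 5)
Your proof is correct and follows essentially the same route as the paper's: bound the same-group breaks deterministically by $m-1$ per group and the different-group breaks in expectation by the number of heterogeneous adjacent pairs via linearity of expectation. The only (immaterial) differences are that you justify the per-group $m-1$ count more explicitly via the monotone block structure of $H^{tmp}$ restricted to each group, and you use the exact random-interleaving probability $\frac{2r(n-r)}{n(n-1)}$ per adjacent pair where the paper uses independent labels with probability $2\frac{r}{n}(1-\frac{r}{n})$; both yield the stated bound.
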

\begin{proof}
    {
    \pd adds at most $m-1$ boundaries between the neighboring pairs that both belong to the same group, simply because 
    a boundary between neighboring pair can only happen when moving from one bucket to the next while there are $m$ buckets.

    In order to find the upper-bound on the number of bins added, in the following we compute the expected number of neighboring pairs from different groups.
    Consider the in the sorted list of points $\langle p_1,\cdots,p_n \rangle$ based on the attribute $x$.
    The probability that a point $p_i$ belongs to group $g_1$ is $Pr_1=Pr(\gee(p_i)=\gee_1)=\frac{r}{n}$.
    Now consider two consecutive points $p_i,p_{i+1}$, in the list. The probability that these two belong to different groups is $2Pr_1(1-Pr_1)$.
    Let $\mathcal{B}$ be the random variable representing the number of pairs from different groups.
    We have,
    $
    E[\mathcal{B}] = \sum_{i=1}^{n-1} 2Pr_1(1-Pr_1) = 2(n-1)\frac{r}{n} (1-\frac{r}{n})<\frac{2r(n-r)}{n}
    $.
    Therefore, $E[\mbox{No. bins}] \leq E[\mathcal{B}] + 2m < 2\Big(\frac{r(n-r)}{n}+m\Big)$.
    }
\end{proof}



\subsection{Transforming to Necklace Splitting}
Although guaranteeing 0-unfairness, \pd is not memory efficient.
The question we seek to answer in this section is whether it is possible to guarantee 0-unfairness while introducing significantly less number of bins, close to $m$.
In particular, we make an interesting connection to the so-called {\em necklace splitting} problem~\cite{meunier2014simplotopal,alon1987splitting,alon1986borsuk}, and use the recent advancements~\cite{alon2021efficient} on this problem by the math and theory community to solve the fair hashmap problem.

\stitle{(Review) Necklace Splitting} 
Consider a necklace of $T$ beads of $n'$ types. 
For each type $i\leq n'$, let $m_i'$ be the number of beads with type $i$, and let $m'=\max_{i}m_i'$.
The objective is to divide the beads between $k'$ agents, such that (a) all agents receive exactly the same amount of beads from each type and (b) the number of splits to the necklace is minimized.



\stitle{Reduction} 
Given an instance of the fair hashmap problem, let $L=\langle p_1, p_2, \cdots, p_n\rangle$ be the ordering of $P$ based on an attribute $x$. 
The problem gets reduced to necklace splitting as following: 
The points in $P$ get mapped into the $T=n$ bead, distributed with the ordering $\langle p_1, p_2, \cdots, p_n\rangle$ in the necklace.
The $k$ groups in $\Gee$ get translated to the $n'=k$ bead types $\{\gee_1,\cdots, \gee_k\}$.
The $m$ buckets translate to the $k'=m$ agents.

Given the necklace splitting output, each split of the necklace translates into a bin.
The bin is assigned to the corresponding bucket of the agent who received the necklace split.
Using this reduction, an optimal solution to the necklace splitting is the fair hashmap with minimum number of cuts: first, since all party in necklace splitting receive equal number of each bead type, the corresponding hashmap satisfies {\bf 0-unfairness}, as well as the collision probability and single fairness requirements;
second, since the necklace splitting minimizes the number of splits to the necklace, it adds minimum number of bins to the fair hashmap problem, i.e., it finds the optimal fair hashmap on $L$, with minimum number of cuts.
Using this mapping, in the rest of the section we adapt the recent results for solving necklace splitting for fair hashmap.
In particular, Alon and Graur~\cite{alon2021efficient} propose polynomial time algorithms (with respect to number of beads) for the Necklace Splitting problem and the $\eps$-approximate version of the problem. 

\subsubsection{Binary groups}
The fair hashmap problem when there are two groups $\{\gee_1,\gee_2\}$, maps to the necklace splitting instance with two bead types.
While a straightforward implementation of the algorithm in~\cite{alon2021efficient}-(Proposition 2) leads to an $O(n(\log n+m))$ algorithm, in \necklaceb Algorithm~\ref{alg:nl2d}, we propose an optimal time algorithm that guarantees splitting a necklace with {\em at most $2(m-1)$} cuts in only $O(n\log n)$ time.

\stitle{Algorithm}
Without loss of generality, let $C=\langle p_1, p_2, \cdots, p_n\rangle$ be the sorted list of points in $P$ based on their values on an attribute $x$.
\necklaceb views $C$ as a circle by considering $p_n$ before $p_1$. It uses modulo to size of the list ($\%|C|$) to move along the circle.
The key idea is that the circle $C$ {\em always} has at least one consecutive window of size $\frac{n}{m}$ that contains $\frac{|\gee_1|}{m}$ points from $\gee_1$ (and hence $\frac{|\gee_2|}{m}$ points from $\gee_2$), see~\cite{alon2021efficient}.
Hence, we design an algorithm to find such windows efficiently.
We initialize a list $T$ such that $T[j]$ contains the number of items from group $\gee_1$ between $C[j]$ and $C[(j+n/m-1)\%|C|]$.
Furthermore, we initialize the list $X$ such that $X[j]$ is true if and only if $T[j]=|\gee_1|/m$, i.e., the window from $C[j]$ to $C[(j+n/m-1)\%|C|]$ is a good candidate for a cut.
All indexes are initialized in lines \ref{l1a}--\ref{l1b} of Algorithm~\ref{alg:nl2d}.
In order to bound the running time of the new algorithm, we assume that $X[j]$ also stores a pointer to the $j$-th elements in lists $C$ and $T$. Furthermore, we assume that all Boolean variables in $X$ are stored in a max heap $M_X$(if $X[j]=\text{true}$ and $X[i]=\text{false}$ then $X[j]>X[i]$). We use $M_X$ to call $M_X.top()$ that returns the top item in max heap, i.e., it returns a $j$ such that $X[j]=\text{true}$, in $O(1)$ time. 

The algorithm is executed in iterations until the list $C$ is non-empty. In each iteration, we find a window of size $n/m$ containing exactly $|\gee_1|/m$ items from group $\gee_1$ (so it also contains exactly $|\gee_2|/m$ items from group $\gee_2$). More specifically, in line~\ref{l:MaxHeap} we find $j$ such that $T[j]=|\gee_1|/m$. In line~\ref{l:cut} we define the new cut we find and in lines \ref{l:removea}--\ref{l:removeb} we remove the cut from our lists to continue with the next iteration. The points within the cut are marked for the bucket $bkt$. In lines \ref{l:updatea}--\ref{l:updateb} we update the values of $T$ (and hence the values of $X$) so that $T$ and $X$ have the correct values after removing the window from $C[j]$ to $C[(j+n/m-1)\%|C|]$.
Hence, we can continue searching for the next window containing $|\gee_1|/m$ tuples from group $\gee_1$ in the next iteration.
Finally, it sorts the discovered cuts and assigns the bin boundaries $B$ and the corresponding buckets $H$.

\begin{theorem}
\label{thm:neckSplitk2}
In the binary demographic group cases, there exists an algorithm that finds a $(0,2)$-hashmap satisfying the collision probability and the single fairness in $O(n\log n)$ time.
\end{theorem}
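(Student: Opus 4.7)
The plan is to establish three things: (i) the output is $0$-unfair and satisfies the collision probability and single fairness, (ii) the total number of boundary points is at most $2(m-1)$, and (iii) the whole procedure runs in $O(n\log n)$ time. Steps (i) and (ii) will follow from the structural guarantee of \cite{alon2021efficient}-(Proposition 2); the technical work is in (iii), which demands the auxiliary max-heap $M_X$ on top of the sliding-window counters $T$ and indicator array $X$.

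First I would verify correctness. In each iteration the algorithm extracts a consecutive window of exactly $n'/m'$ remaining points containing exactly $|\gee_1|/m'$ items of $\gee_1$, where $n'$ is the current list length and $m'$ is the number of buckets not yet filled; by the counting constraint the window also contains exactly $|\gee_2|/m'$ items of $\gee_2$. Inducting over iterations (with base case equal to the original necklace splitting existence result), each of the $m$ extracted windows contains $|\gee_i|/m$ items from group $\gee_i$, so every bucket receives the balanced count required by Proposition~\ref{prop:3}. Hence the hashmap is $0$-unfair, and by Propositions~\ref{prop:1} and~\ref{prop:2} it satisfies collision probability and single fairness. For (ii), each extracted window contributes at most two boundary points (its two endpoints), but the $m$-th window is the entire remaining list and needs no cut, yielding at most $2(m-1)$ boundaries and thus $\alpha\leq 2$.

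The nontrivial part is the $O(n\log n)$ running time. Sorting by $x$ costs $O(n\log n)$. The initial values $T[j]$ can be filled in $O(n)$ by one pass of a sliding window, after which $X$ and $M_X$ are built in $O(n)$. In each iteration, a valid index $j$ with $T[j]=|\gee_1|/m'$ is retrieved in $O(1)$ by $M_X.top()$. Removing the window $C[j\ldots (j+n'/m'-1)\%|C|]$ physically deletes $n'/m'$ entries. The key observation is that only the $T[i]$ values for the $n'/m'$ positions whose window previously overlapped the removed segment can change; for each such $i$, updating $T[i]$, the corresponding $X[i]$, and $M_X$ takes $O(\log n)$. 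Summing across all iterations, $\sum_{\text{iter}} n'/m' = n$ (each original item is involved in at most $O(1)$ such updates), so total heap maintenance is $O(n\log n)$. Finally, sorting the at-most-$2(m-1)$ cuts and assigning buckets costs $O(m\log m)=O(n\log n)$.

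The main obstacle will be carefully arguing the amortized update bound for $T$ under the circular deletions: one must show that when the deleted window is removed and the circle is re-stitched, exactly the windows starting in the $n'/m'-1$ positions immediately preceding the deletion are affected, and that these positions are accessible in $O(1)$ each via the pointers stored alongside $X[j]$. Once this locality of updates is established and the pointer bookkeeping between $C$, $T$, $X$, and $M_X$ is spelled out, the three time contributions combine to the claimed $O(n\log n)$ bound, completing the proof of Theorem~\ref{thm:neckSplitk2}.
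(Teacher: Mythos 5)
Your proposal is correct and follows essentially the same route as the paper's proof: correctness via the discrete intermediate value theorem and the Alon--Graur window-existence argument, the $2(m-1)$ cut bound from two endpoints per extracted window, and the $O(n\log n)$ time from the max-heap $M_X$ together with the observation that each deletion only perturbs the $O(n/m)$ sliding-window counts preceding it, summing to $O(n)$ heap updates over $m$ iterations. The only nit is your parenthetical that each item participates in $O(1)$ updates (an item can be re-updated across several iterations); the correct accounting, which you also state, is $m$ iterations times $O(n/m)$ updates each.
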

\begin{proof}
In each iteration of the algorithm, we remove a window containing $|\gee_1|/m$ items from group $\gee_1$ and $|\gee_2|/m$ items from group $\gee_2$. Hence, the correctness of our algorithm follows by the discrete intermediate value theorem and~\cite{alon2021efficient}.
Next, we show that the running time is $O(n\log n)$.
It takes $O(n\log n)$ to sort based on attribute $x$. Then it takes $O(n/m+n)=O(n)$ to initialize $T$ and  $X$. In each iteration of the while loop at line~\ref{l:bigLoop} we remove $n/m$ items, so in total it runs for $m$ iterations. In each iteration, we get $j$ at line~\ref{l:MaxHeap} in $O(1)$ time using the max heap. The for loop in line~\ref{l:updatea} is executed for $O(n/m)$ rounds. In each round, we need $O(1)$ time to update $T$. It also takes $O(1)$ time to update a value in $X$, and $O(\log n)$ time to update the max heap. Finally, the for loop in line~\ref{l:removea} runs for $O(n/m)$ rounds. In each round it takes $O(1)$ time to remove an item from lists $C, T, X$ and $O(\log n)$ time to update the max heap. Overall Algorithm~\ref{alg:nl2d} runs in $O(n\log n + m\frac{n}{m}\log n)=O(n\log n)$ time.
\end{proof}


\begin{algorithm}[!tb]
    \caption{\necklaceb} \label{alg:nl2d}
    \begin{algorithmic}[1] \small
    \Require{The set of points $P$ (with two groups $\{\gee_1,\gee_2\}$)}
    \Ensure{Bin boundaries $B$ and corresponding buckets $H$}
        \State $C=\langle p_1,p_2,\cdots,p_n\rangle\gets$ {\bf sort} $P$ based on an attribute $x$;
        \State {\bf for $i= 0$ to $n-1$} {\bf do} $\{T[i]\gets 0; X[i]\gets \text{false}\}$
        \State $M_X\gets $ max heap storing $X$;
        $\sigma_1\gets 0$; $bkt\gets 0$;
        \Statex{\tt\scriptsize //Initialize $T$}
        \State {\bf for $i= 0$ to $n/m$} {\bf do} ~\{{\bf if} $\gee(C[i])==\gee_1$ {\bf then} $\sigma_1\gets \sigma_1+1$\} \label{l1a}
        \For{$i= 0$ to $n-1$}
            \State $T[i]\gets\sigma_1$;
            \If{$T[i]==|\gee_1|/m$}
                $X[i]\gets\text{true}$; $M_X.\text{update}(X[i])$;
            \EndIf
            \If{$\gee(C[i])==\gee_1$}
                $\sigma_1\gets\sigma_1-1$;
            \EndIf
            \If{$\gee(C[(i+n/m) \% |C|])==\gee_1$}
                $\sigma_1\gets\sigma_1+1$;
            \EndIf
        \EndFor\label{l1b}
        \While{$|C|>0$}\label{l:bigLoop}
        \State $j\leftarrow M_x.top()$;\label{l:MaxHeap}
            {\tt\scriptsize //Find a window $[j,(j+n/m-1)\%|C|]$ with $X[j]=\text{true}$}
            \Statex{\tt\scriptsize \hspace{5mm}//Update $T$ and $X$ removing the window $[j,(j+n/m-1])\%|C|]$}
             \State $\sigma\gets T[(j+n/m)\%|C|]$;
            \For{$i=j-1$ to $j-n/m+1$ with step $-1$}\label{l:updatea}
               \If{$i<0$}
                     $i\gets |C|+i$;
                \EndIf
                \If{$\gee(C[(i+2\cdot n/m)\%|C|])==\gee_1$}\label{l:updatec}
                    $\sigma\gets \sigma-1$;
                \EndIf
                \If{$\gee(C[i])==\gee_1$}
                    $\sigma\gets \sigma+1$;
                \EndIf
                \State $T[i]\gets \sigma$;
                \State $X[i]\gets \text{true}$ {\bf if} $(T[i]==|\gee_1|/m)$ {\bf else} false
                \State $M_X.update(X[i])$;
            \EndFor\label{l:updateb}
            \State cuts $\gets$ cuts $\cup \big\{j, (j+n/m-1)\% |C|\big\}$\label{l:cut}
            \Statex{\tt\scriptsize \hspace{5mm}//Remove window $[j,(j+n/m-1)\%|C|]$}
            \For{$i\in[0,n/m)$}\label{l:removea}
                \State $H_{C[(i+j)\%|C|]}^{tmp}\leftarrow bkt$;
                 $Remove(C[(i+j)\%|C|])$;
                \State $Remove(T[(i+j)\%|C|])$;
                $Remove(X[(i+j)\%|C|])$;
            \EndFor\label{l:removeb}
            \State $bkt\leftarrow bkt+1$;
        \EndWhile
        \State {\bf sort}$($cuts$)$
        \For{$j=0$ to |cuts|}
            \State Let $p_i$ be the rightmost tuple in the $j$-th bin;
            \State $B_j\gets \frac{p_i[x]+p_{i+1}[x]}{2}$;  {\tt\scriptsize // the right boundary of the $j$-th bin}
            \State $H_j\gets H^{tmp}_i$; {\tt\scriptsize // the bucket assigned to the $j$-th bin}
        \EndFor
        \State {\bf return} $(B,H)$
    \end{algorithmic}
\end{algorithm}

\section{Discrepancy-based hashmaps}
\label{sec:extended}
{
A hashmap satisfies $\gamma$-discrepancy if and only if each bucket contains at least $(1-\gamma)\frac{|\gee_i|}{m}$ and at most $(1+\gamma)\frac{|\gee_i|}{m}$ points from each group $\gee_i$. In this section, we first show that a hashmap that satisfies $\gamma$-discrepancy has bounded collision probability, single fairness, and pairwise fairness. Then, we propose efficient algorithms that construct $\gamma$-discrepancy hashmaps, where $\gamma$ is close to the optimum.

Let $P_w$ be the ordering of points in $P$ based on a vector $w$ and let $\hashmap$ be a hashmap constructed on $P_w$. Recall that $Pr_i$ is defined as the pairwise fairness value of $\hashmap$ for group $\gee_i$. Let $Cp$ be the collision probability and $Sp_i$ the single fairness of group $\gee_i$.

\begin{lemma}
\label{lem:discrProbs}
    Let $\hashmap$ be a hashmap satisfying $\gamma$-discrepancy. Then $Cp\leq \frac{1+\gamma}{m}$,  $\frac{1-\gamma}{m}\leq Sp_i\leq \frac{1+\gamma}{m}$ and $Pr_i\leq \frac{1+\gamma}{m}$ for each group $\gee_i$.
\end{lemma}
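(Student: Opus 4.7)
The plan is to bound each of the three quantities by applying the discrepancy bounds directly inside the sums that define them. The key observation is that $\gamma$-discrepancy on the groups immediately implies a matching discrepancy bound on the bucket totals: since $n_j = \sum_{i=1}^k \alpha_{i,j}$ and $(1-\gamma)\frac{|\gee_i|}{m} \leq \alpha_{i,j} \leq (1+\gamma)\frac{|\gee_i|}{m}$, summing over $i$ gives $(1-\gamma)\frac{n}{m} \leq n_j \leq (1+\gamma)\frac{n}{m}$ for every bucket $j$. I would state this as a first observation, since it is used in two of the three bounds.

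Then I would handle each bound in turn using the closed-form expressions recalled earlier in the paper. For the collision probability, factor one copy of $n_j/n$ out of $Cp = \sum_j (n_j/n)^2$, upper-bound the extracted factor by $(1+\gamma)/m$, and use $\sum_j n_j/n = 1$ to obtain $Cp \leq (1+\gamma)/m$. The pairwise fairness bound is identical in spirit: write $Pr_i = \sum_j (\alpha_{i,j}/|\gee_i|)^2$, factor one $\alpha_{i,j}/|\gee_i|$, bound it by $(1+\gamma)/m$ via the discrepancy hypothesis, and use $\sum_j \alpha_{i,j}/|\gee_i| = 1$ to conclude $Pr_i \leq (1+\gamma)/m$.

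For the single fairness, the same factoring trick works for both directions. Starting from $Sp_i = \sum_j \frac{\alpha_{i,j}}{|\gee_i|}\cdot \frac{n_j}{n}$, upper-bound $n_j/n$ by $(1+\gamma)/m$ (using the induced bucket-total discrepancy) and lower-bound it by $(1-\gamma)/m$, then pull the bound out of the sum and apply $\sum_j \alpha_{i,j}/|\gee_i| = 1$ to get $\frac{1-\gamma}{m} \leq Sp_i \leq \frac{1+\gamma}{m}$.

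There is no real obstacle here; the argument is essentially three applications of the same two-step trick (factor one nonnegative weight out of a quadratic/bilinear sum, bound the remaining factor uniformly, use that the weights sum to $1$). The only thing to be slightly careful about is to derive the bucket-total discrepancy from the per-group discrepancy before using it in the collision-probability and single-fairness bounds, and to note that the lower-bound direction is stated only for $Sp_i$ because $Cp$ and $Pr_i$ are sums of squares whose natural direction via this argument is an upper bound.
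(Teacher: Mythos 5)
Your proof is correct and follows essentially the same argument as the paper: factor one weight out of each quadratic/bilinear sum, bound it by $\frac{1\pm\gamma}{m}$ via the discrepancy hypothesis, and use that the weights sum to $1$. Your explicit derivation of the bucket-total bound $(1-\gamma)\frac{n}{m}\leq n_j\leq(1+\gamma)\frac{n}{m}$ from the per-group discrepancy is a detail the paper leaves implicit under ``using the same arguments,'' but it is not a different route.
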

\begin{proof} 
{
Recall that $n_j$ is the number of items in bucket $j$ and $\alpha_{i,j}$ is the number of items from group $i$ in bucket $j$.
Notice that $\sum_{j=1}^m\alpha_{i,j}=|\GroupPoints_i|$ and $\sum_{j=1}^m n_j=n$.
We have,

\begin{align*}
    Pr_i=\sum_{j=1}^m\left(\frac{\alpha_{i,j}}{|\GroupPoints_i|}\right)^2\leq \sum_{j=1}^m\frac{(1+\gamma)\frac{\GroupPoints_i}{m}}{|\GroupPoints_i|}\cdot\frac{\alpha_{i,j}}{|\GroupPoints_i|}=
    \frac{1+\gamma}{m}\sum_{j=1}^m\frac{\alpha_{i,j}}{|\GroupPoints_i|}=\frac{1+\gamma}{m}.
\end{align*}

Similarly we show that 
    \[
    Cp=\sum_{j=1}^m\left(\frac{n_j}{n}\right)^2\leq \frac{1+\gamma}{m}.
    \]
Using the same arguments it also holds that
$$Sp_i=\sum_{j=1}^m\frac{\alpha_{i,j}}{|\GroupPoints_i|}\frac{n_j}{n}\leq \frac{1+\gamma}{m}\quad\text{ and } \quad Sp_i\geq \frac{1-\gamma}{m}.$$
}
\end{proof}

{
Next, we describe a dynamic programming algorithm to find a hashmap with the smallest discrepancy. In the appendix, we show a faster randomized algorithm approximating the smallest discrepancy.
Finally, we describe a simple heuristic that works as a post-processing method to further improve the pairwise fairness.

\subsection{Dynamic Programming algorithm}
Let $P_w$ be the ordering of the items for a vector $w\in \mathcal{W}$.
Let $\mathsf{Disc}[i,j]$ be the discrepancy of the optimum partition among the first $i$ items in $P_w$ using $j$ buckets. Let also $\mathcal{D}(a,b)$ be the discrepancy of the bucket including all items in the window $[a,b]$ in $P_w$, i.e., $\{P_w[a],P_w[a+1],\ldots, P_w[b]\}$.
We define the recursive relation
$$\mathsf{Disc}[i,j]=\min_{1\leq x<i}\max\{\mathsf{Disc}[x-1,j-1], \mathcal{D}(x,i)\}$$
Given $i, j$, our algorithm computes the $j$-th bucket with right boundary $i$, trying all left boundaries $x<i$, that leads to a partition with minimum discrepancy over the first $i$ items.
In order to efficiently implement the algorithm, each time we try a new left boundary $x$, we do not compute $\mathcal{D}(x,i)$ from scratch. Instead, we maintain and update a max-heap of size $k$ storing the discrepancy of every group $\gee_i$ in the window $[x,i]$. When we compute $[x-1,i]$ we update one element in the max-heap and compute $\mathcal{D}(x-1,i)$ in $O(\log k)$ time. The table $\mathsf{Disc}$ has $O(nm)$ cells and for each cell we spend $O(n\log k)$ time. 
By definition, $\mathsf{Disc}[n,m]$ computes $\optDisc$. Doing standard modifications, it is straightforward to return the partition, instead of the discrepancy $\optDisc$.
By repeating the algorithm above for every $w\in \mathcal{W}$, we conclude to the next theorem.

\begin{theorem}
\label{thm:2dOpt}
Let $P$ be a set of $n$ tuples in $\Re^d$. There exists an algorithm that computes an $(\optDisc,1)$-hashmap in $O(n^{d+2}m\log n\log k)$ time,
satisfying $\optDisc$-approximation in collision probability and single fairness.
\end{theorem}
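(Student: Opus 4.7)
The plan is to iterate over all $O(n^d)$ combinatorially distinct vectors $w \in \mathcal{W}$ (built via the arrangement machinery reviewed earlier in Section~\ref{sec:ranking}), run a dynamic program on each induced ordering $P_w$ to obtain the smallest-discrepancy hashmap on that ordering, and return the best hashmap found overall. Since $\optDisc = \min_{w\in \mathcal{W}} OPT_D(P_w)$ by definition, it suffices to show the per-vector DP correctly returns $OPT_D(P_w)$, and then argue the aggregate runtime.

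For a fixed $w$, I would prove the correctness of the recurrence
\[
\mathsf{Disc}[i,j] = \min_{1 \leq x < i} \max\{\mathsf{Disc}[x-1,j-1],\; \mathcal{D}(x,i)\}
\]
by induction on $j$: since the discrepancy of a hashmap is the maximum of the per-bucket per-group deviations, the best partition of $P_w[1..i]$ into $j$ buckets must split into a final bucket $[x,i]$ (contributing $\mathcal{D}(x,i)$) and an optimal $(j-1)$-bucket partition of $P_w[1..x-1]$ (contributing $\mathsf{Disc}[x-1,j-1]$). The base case $\mathsf{Disc}[i,1] = \mathcal{D}(1,i)$ is immediate. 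Therefore $\mathsf{Disc}[n,m] = OPT_D(P_w)$, and minimizing over $w \in \mathcal{W}$ yields $\optDisc$. Standard DP backtracking recovers the $m-1$ boundary points of the corresponding $(\optDisc,1)$-hashmap, and the bound $Cp \leq \frac{1+\optDisc}{m}$, $Sp_i \in [\frac{1-\optDisc}{m},\frac{1+\optDisc}{m}]$ follows from Lemma~\ref{lem:discrProbs}.

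The delicate part is the running time. A naive evaluation of each cell $\mathsf{Disc}[i,j]$ would recompute $\mathcal{D}(x,i)$ from scratch for every candidate $x$, costing $O(k)$ per candidate. I would instead, for fixed $i$ and $j$, process the candidates $x = i-1, i-2, \ldots, 1$ in decreasing order, maintaining running counts $\alpha_{1,j}, \ldots, \alpha_{k,j}$ of each group in the window $[x,i]$, and a max-heap keyed on the per-group discrepancies $\bigl|\alpha_{\ell,j} - \frac{|\gee_\ell|}{m}\bigr| / \frac{|\gee_\ell|}{m}$. Extending the window by one item to the left changes exactly one group's count, hence triggers a single heap update in $O(\log k)$ time, after which $\mathcal{D}(x,i)$ is read off as the heap root. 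Summing over the $O(n)$ candidates per cell and $O(nm)$ cells gives $O(n^2 m \log k)$ per vector. Multiplying by $|\mathcal{W}| = O(n^d)$ vectors and folding in the $O(n^d \log n)$ arrangement-construction cost plus the $O(n \log n)$ resort needed each time we move to an adjacent vector, we arrive at $O(n^{d+2} m \log n \log k)$ total.

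The main obstacle I anticipate is the bookkeeping needed when transitioning between adjacent vectors $w_s, w_{s+1} \in \mathcal{W}$: the ordering $P_w$ changes by a single adjacent swap, so in principle the DP tables could be maintained incrementally, but the dependencies in $\mathsf{Disc}$ make a full recomputation simpler. I will therefore argue that a from-scratch DP per vector already fits within the stated bound, and leave potential incremental speedups as a remark.
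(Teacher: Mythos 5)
Your proposal is correct and follows essentially the same route as the paper: enumerate the $O(n^d)$ combinatorially distinct orderings via the arrangement, run the min--max dynamic program with the recurrence $\mathsf{Disc}[i,j]=\min_{x<i}\max\{\mathsf{Disc}[x-1,j-1],\mathcal{D}(x,i)\}$ on each ordering, and amortize the window-discrepancy computation with a size-$k$ max-heap updated once per left-boundary shift, for $O(n\log k)$ per cell and $O(n^2m\log k)$ per vector. The paper states the recurrence and the heap trick without proving the recurrence's optimal-substructure property or itemizing the arrangement/re-sorting overhead; your induction and runtime accounting simply make those steps explicit.
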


For parameters $\gamma, \delta$, in the appendix, we show a randomized algorithm, to compute a $((1+\delta)\optDisc+\gamma,1)$-hashmap with collision probability at most $\frac{1+(1+\delta)\optDisc+\gamma}{m}$
and single fairness in the range $[\frac{1-(1+\delta)\optDisc-\gamma}{m},\frac{1+(1+\delta)\optDisc+\gamma}{m}]$, in time $O(n+\poly(m,k,\delta, \gamma)$.

\subsection{Local-search based heuristic}
So far, we consider algorithms that return a hashmap satisfying (approximately) $\optDisc$-discrepancy. From Lemma~\ref{lem:discrProbs} we know that a hashmap satisfying $\optDisc$-discrepancy is a $(\optDisc,1)$-hashmap. However, there is no guarantee that $\optDisc\leq\optRank$. 

In this section, we design a practical algorithm that returns an $(\eps,1)$-hashmap with $\eps\leq \optRank$ allowing a slight increase in single fairness (and collision probability). In practice, as we see in Section~\ref{sec:experiments}, it holds that $\eps\ll \optRank$. The new algorithm is a local-search based algorithm and works as a post-processing procedure to any ranking-based algorithm (for example Algorithm~\ref{alg:ranking2d}).
The intuition is the same to other discrepancy-based algorithms: The fact that we use the same number of items per bucket, 
restricts our options to compute a fair hashmap. Given the buckets computed by a ranking-based algorithm, we try to (slightly) modify the boundaries of the buckets to compute a new hasmap with smaller unfairness.

The high level idea is that in each iteration of the algorithm we slightly move one of the boundaries that improves the unfairness the most, maintaining a sufficient single fairness and collision probability. Let $T$ be the maximum number of iterations we execute our algorithm, and let $f^-, f^+, c^+$ be the minimum single fairness, the maximum single fairness, and the maximum collision probability, respectively, that the returned hashmap should satisfy.
Let $B$ be the boundaries returned by Algorithm~\ref{alg:ranking2d}. For an iteration $i\leq T$, for every boundary $B_j\in B$ we move $B_j$ one position to the left or to the right. For each movement of the boundary $B_j$, we compute the unfairness $\eps_j$, single fairness $f_j$, and collision probability $c_j$ of the new partition. If $f^-\leq f_j\leq f^+$ and $c_j\leq c^+$ then this is a valid partition/hashmap satisfying the requested single fairness and collision probability. In the end of each iteration, we modify the boundary $B_{j^*}$ that leads to a valid hashmap with the smallest unfairness, i.e., $j^*=\argmin_{j:f^-\leq f_j\leq f^+, c_j\leq c^+}\eps_j$.

By definition, in each iteration, we find a partition having at most the same unfairness as before. In practice, we expect to find a hashmap with much smaller unfairness. This is justified in our experiments, Figure~\ref{fig:adult_local_search}. For the running time, the algorithm runs in $T$ iterations. In each iteration, we go through all the $O(m)$ boundaries, and we compute $\eps_j, f_j, c_j$. Using a max heap to maintain the unfairness for every group $\gee_i$ (similar to the max-heap we used in the previous dynamic programming algorithm) we can compute the unfairness $\eps_j$ in $O(\log k)$ time.
We need the same time to compute $f_j$ and $c_j$.
Our algorithm runs in $O(n+Tm\log k)$.
}

{
Due to space limitations, we show the cut-based algorithm that finds a $\big(\eps,k(4+\log \frac{1}{\eps})\big)$-hashmap in the appendix.
}

\section{Experiments}
\label{sec:experiments}
In addition to the theoretical analysis, we conduct extensive experiments on a variety of settings to confirm the fairness and memory/time efficiency of our proposed algorithms. In short, aligned with the theoretical guarantees shown in the previous sections, the results of our experiments demonstrate the effectiveness and efficiency of our algorithms in real-world settings.
\subsection{Experiments Setup}
The experiments were conducted on a 3.5 GHz Intel Core i9 processor, 128 GB memory, running Ubuntu. The algorithms were implemented in Python 3.

For evaluation purposes, we used three real-world and one semi-synthetic datasets to evaluate our algorithms. With the importance of the scalability of our proposed methods to large settings in mind, we chose datasets that are large enough to represent real-world applications. For each dataset, we selected the two columns that were most uncorrelated to construct the hashmaps. The values in either column are normalized to be in the $[0,1]$ range. As the sensitive attribute, we follow the existing literature on group fairness and study fairness over demographic information such as {\em sex} and {\em race}. A summary of the datasets is presented in Table~\ref{tbl:datasets}. For a more detailed description of the datasets, refer to the appendix.

\subsection{Evaluation Plan}
We evaluate our proposed algorithms based on three metrics: 1) unfairness, 2) space, and 3) efficiency (preprocessing time and query time). 
For each of the above metrics, we study the effect of varying three variables: dataset size $n$, minority-to-majority ratio, and number of buckets $m$. In our experiments, we vary $n$ from 0.2 to 1.0 fraction of the original dataset with an increasing step of 0.2. We vary $m$ from 100 to 1000 increasing by 100 at each step and finally, the minority-to-majority ratio from 0.25 to 1.0 increasing by 0.25 at each step.
Throughout our experiments, while varying a variable, we fix the others as follows: $\text{dataset size }n=0.2\times|\text{original dataset}|$;
     $\text{number of buckets }m=100$; 
     $\text{minority-to-majority ratio}=0.25$.
Due to the space limitations, we present the results for two datasets for each setting and present the extended results in the appendix.
Particularly, for fairness and space evaluation, we report the results for \compas and \adult, the fairness benchmark datasets. For run-time evaluation though, we report on the larger-scale datasets, \diabetes and \popsim.
We confirm that we obtained similar results for all datasets.

\subsubsection{Evaluated Algorithms} In our experiments, we evaluate \ranker, \necklaceb, \pd algorithms, and \cdf hashmap \cite{kraska2018case} (referred as \fag) as the baseline, for all of the datasets using the binary sensitive attributes ({\tt sex} or binary {\tt race}). 
We also evaluate the \ranker and \pd algorithms using \compas dataset with {\tt race} attribute to demonstrate that our algorithms extend to non-binary sensitive attributes.
Given the potentially large number of rankings, we use the sampled vectors approach for \ranker.
Particularly, we report the results based on two samples of vectors of size 100 and 1000.
Finally, we evaluate the effectiveness of our local-search based heuristic on the output the \ranker algorithm.

\begin{figure*}[!tb] 
\centering
    \includegraphics[width=0.9\textwidth]{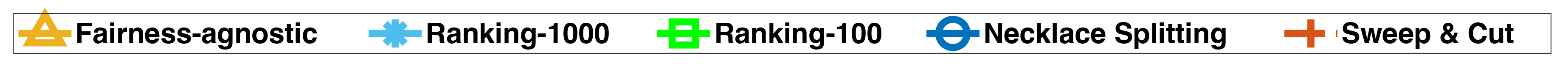}
    \begin{minipage}[t]{0.24\linewidth}
        \centering
        \includegraphics[width=\textwidth]{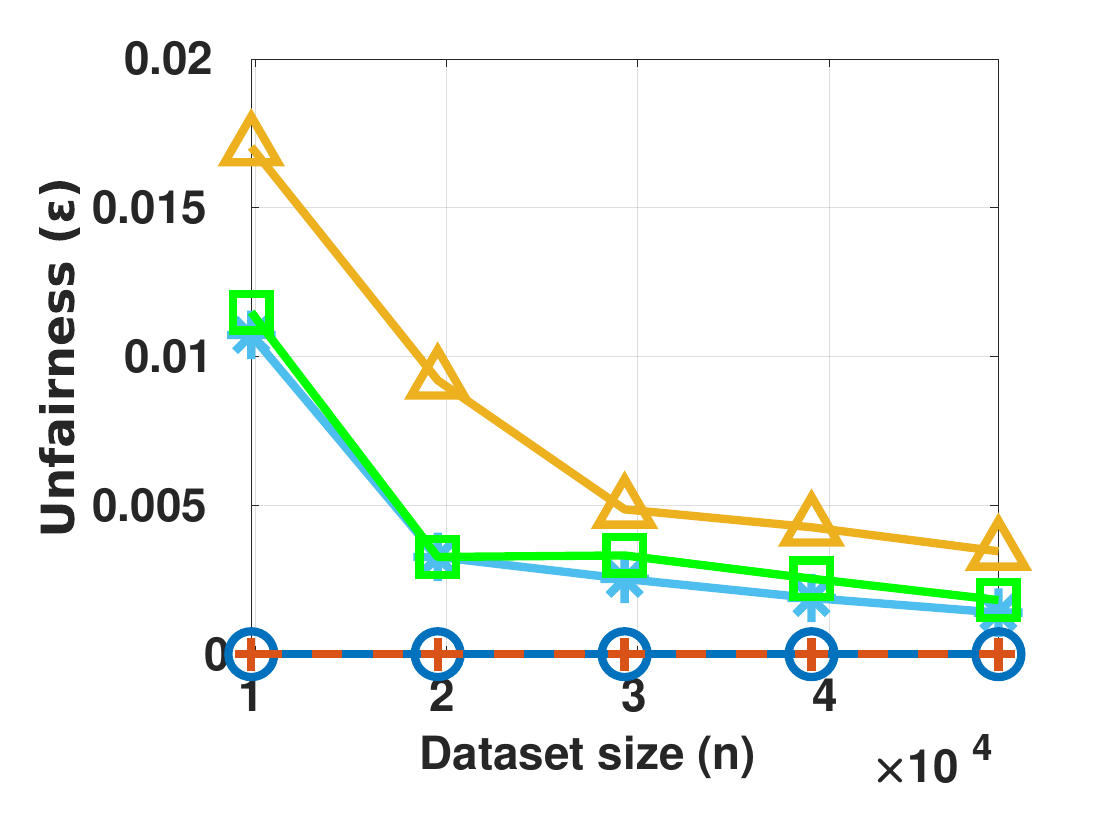}
        \vspace{-2.5em}
        \caption{{Effect of varying dataset size $n$ on unfairness, \adult, {\tt sex}}}
        \vspace{-2.5em}
        \label{fig:adult_n_vs_unfairness}
    \end{minipage}
    \hfill
    \begin{minipage}[t]{0.24\linewidth}
        \centering
        \includegraphics[width=\textwidth]{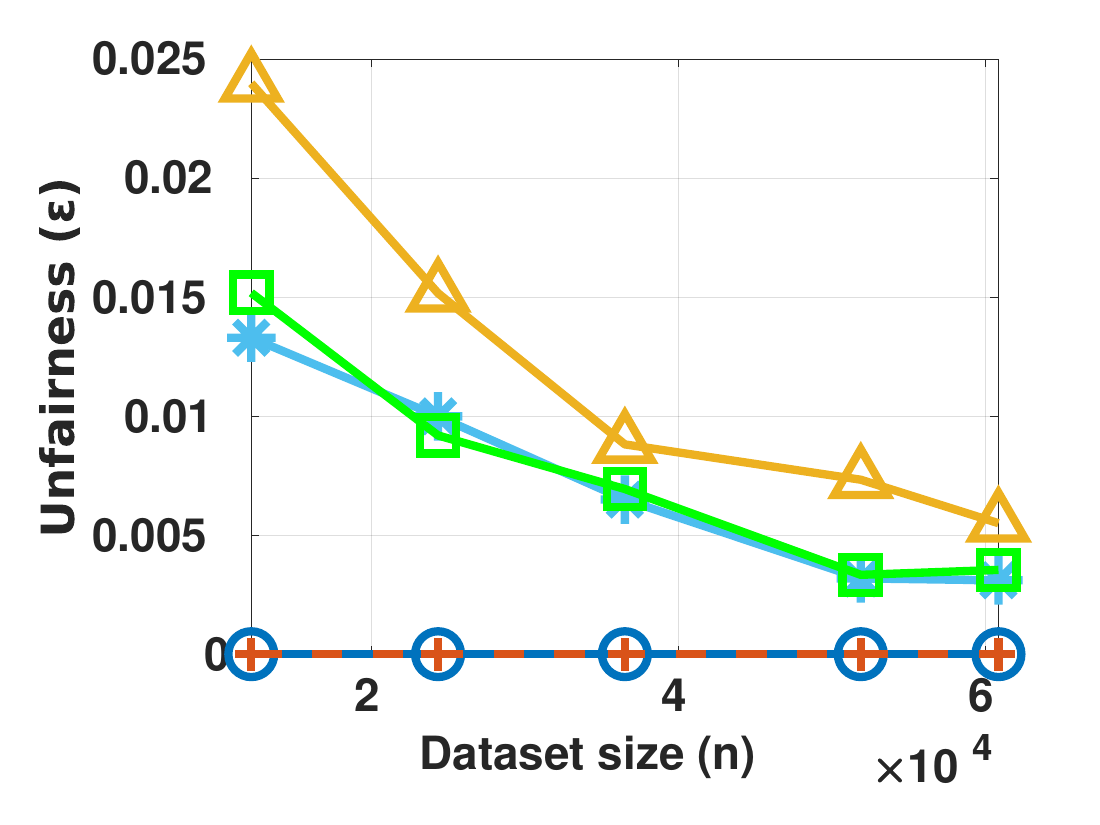}
        \vspace{-2.5em}
        \caption{Effect of varying dataset size $n$ on unfairness, \compas, {\tt sex}}
        
        \label{fig:compas_n_vs_unfairness}
    \end{minipage}
    \hfill
    \begin{minipage}[t]{0.24\linewidth}
        \centering
        \includegraphics[width=\textwidth]{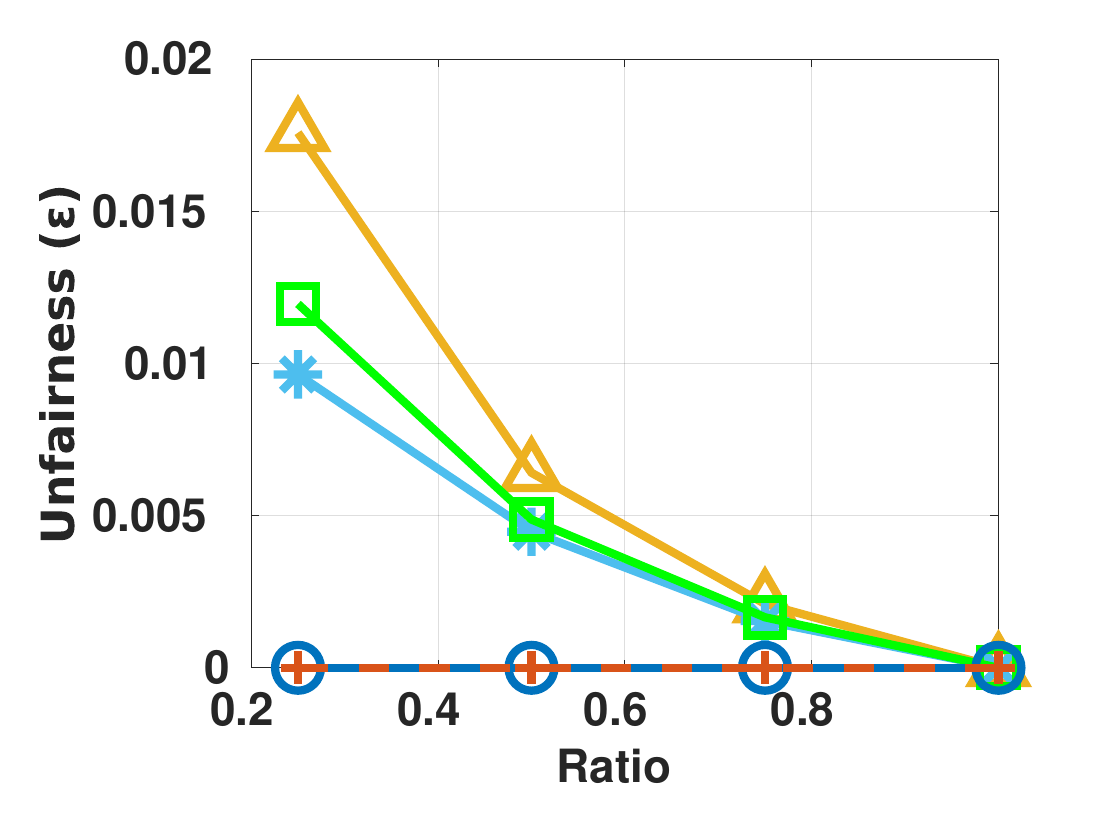}
        \vspace{-2.5em}
        \caption{Effect of varying minority-to-majority ratio on unfairness, \adult, {\tt sex}}
        
        \label{fig:adult_ratio_vs_unfairness}
    \end{minipage}
    \hfill
    \begin{minipage}[t]{0.24\linewidth}
        \centering
        \includegraphics[width=\textwidth]{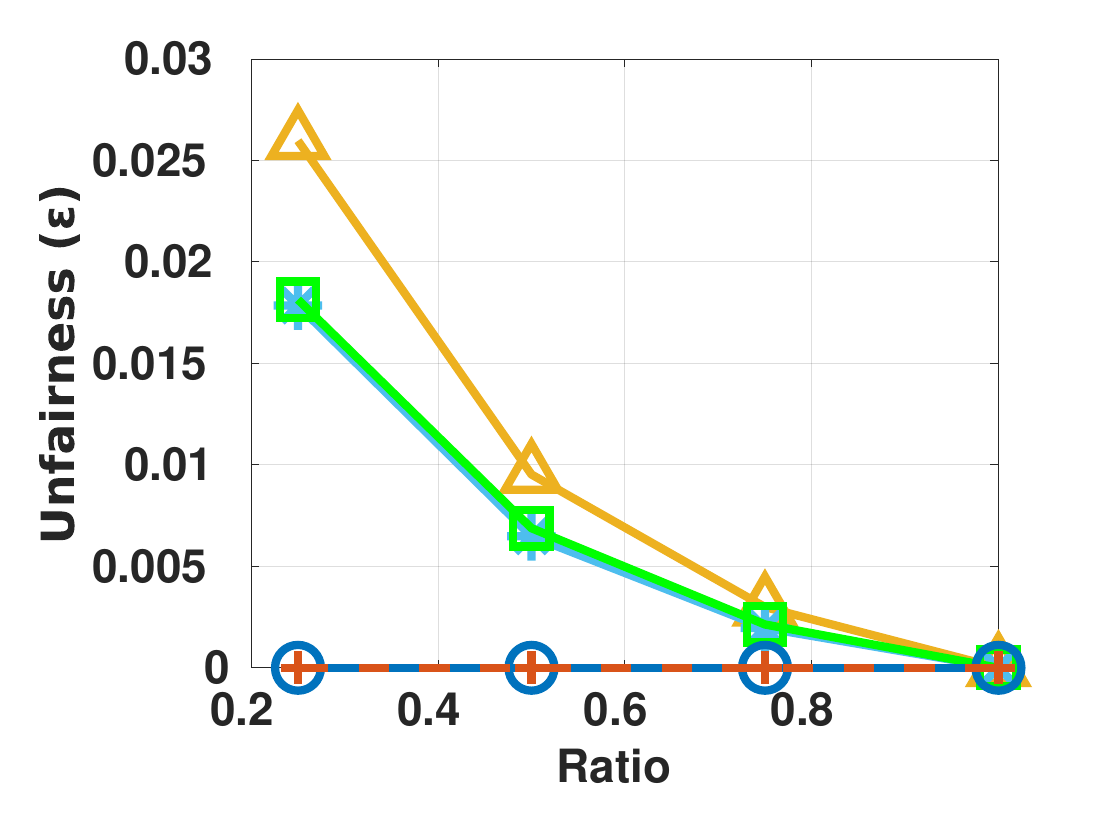}
        \vspace{-2.5em}
        \caption{Effect of varying minority-to-majority ratio on unfairness, \compas, {\tt sex}}
        
        \label{fig:compas_ratio_vs_unfairness}
    \end{minipage}
    \hfill 
    \begin{minipage}[t]{0.24\linewidth}
        \centering
        \includegraphics[width=\textwidth]{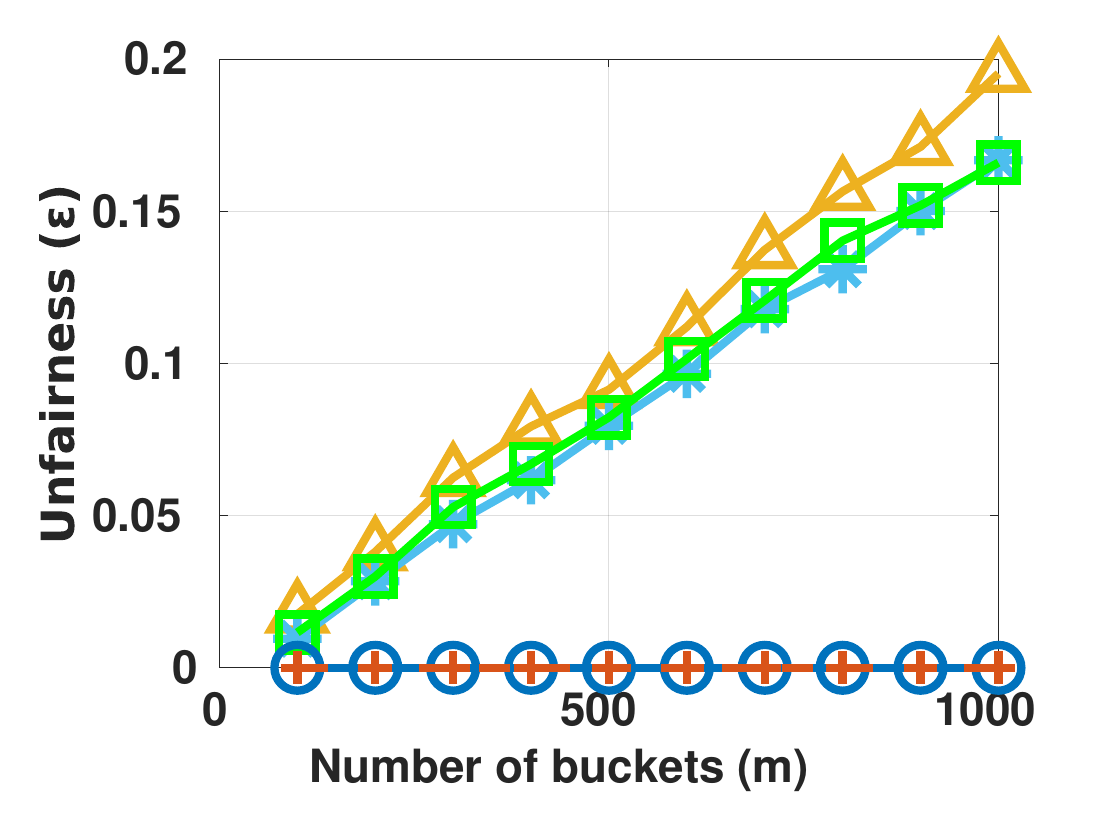}
        \vspace{-2.5em}
        \caption{Effect of varying number of buckets $m$ on unfairness, \adult, {\tt sex}}
        
        \label{fig:adult_m_vs_unfairness}
    \end{minipage}
    \hfill
    \begin{minipage}[t]{0.24\linewidth}
        \centering
        \includegraphics[width=\textwidth]{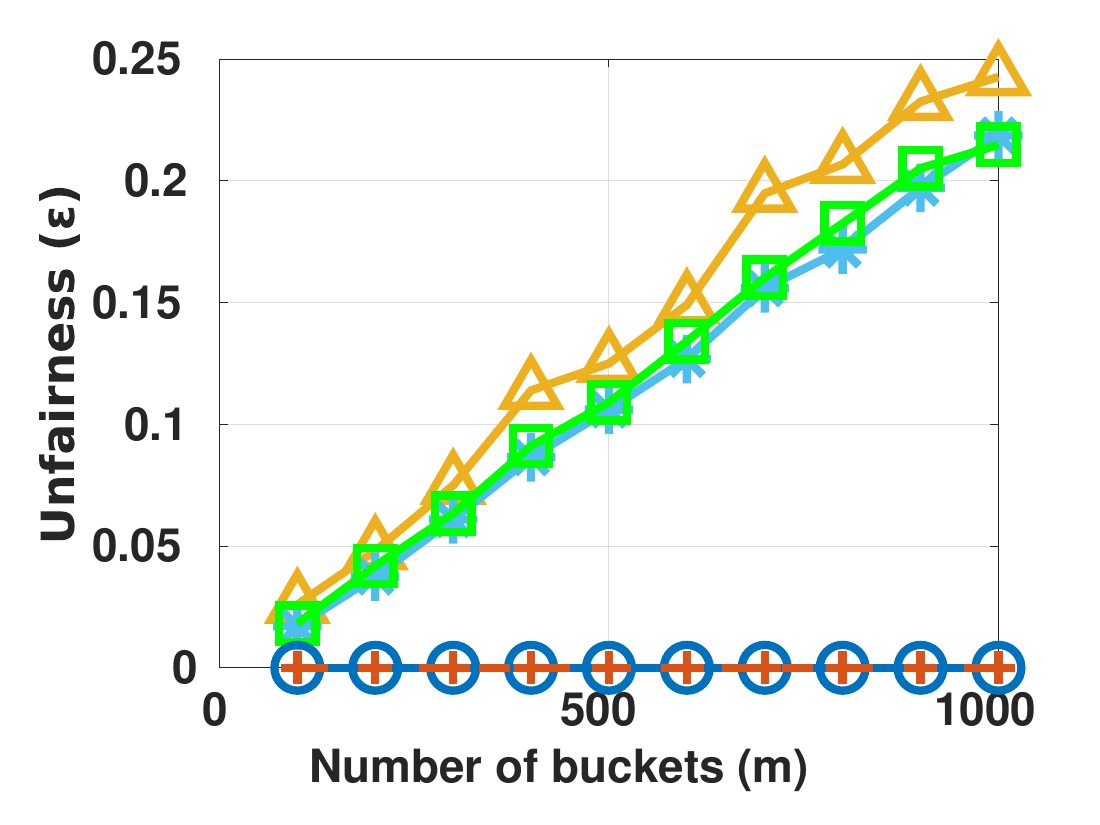}
        \vspace{-2.5em}
        \caption{{Effect of varying number of buckets $m$ on unfairness, \compas, {\tt sex}}}
        
        \label{fig:compas_m_vs_unfairness}
    \end{minipage}
    \hfill
    \begin{minipage}[t]{0.24\linewidth}
        \centering
        \includegraphics[width=\textwidth]{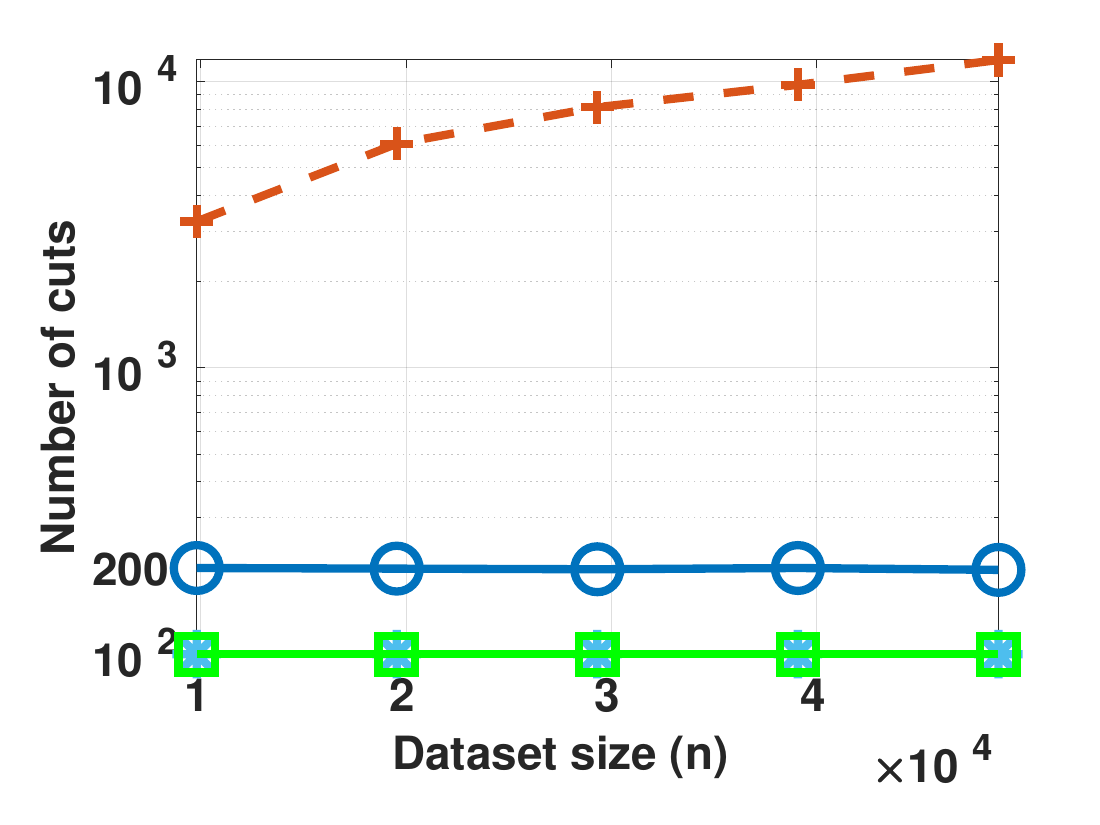}
        \caption{Effect of varying dataset size $n$ on space, \adult}
        
        \label{fig:adult_n_vs_space}
    \end{minipage}
    \hfill
    \begin{minipage}[t]{0.24\linewidth}
        \centering
        \includegraphics[width=\textwidth]{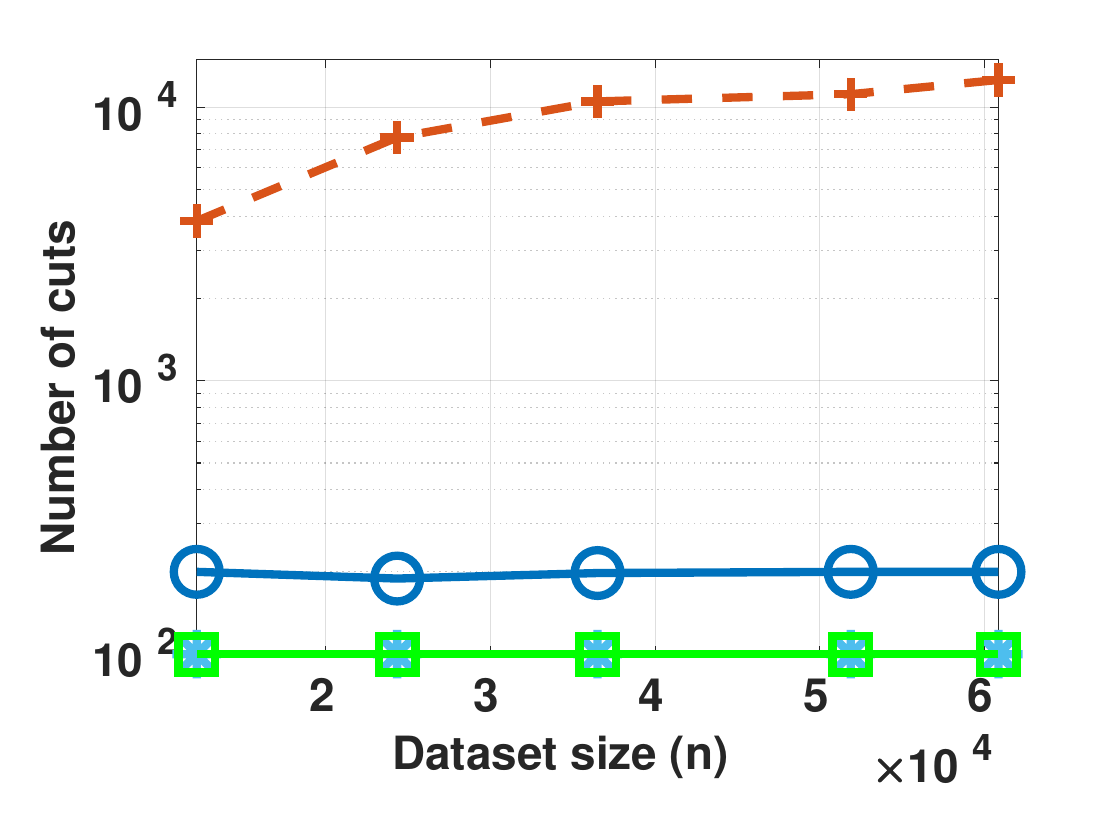}
        \caption{{Effect of varying dataset size $n$ on space, \compas}}
        
        \label{fig:compas_n_vs_space}
    \end{minipage}
    \hfill
    \begin{minipage}[t]{0.24\linewidth}
        \centering
        \includegraphics[width=\textwidth]{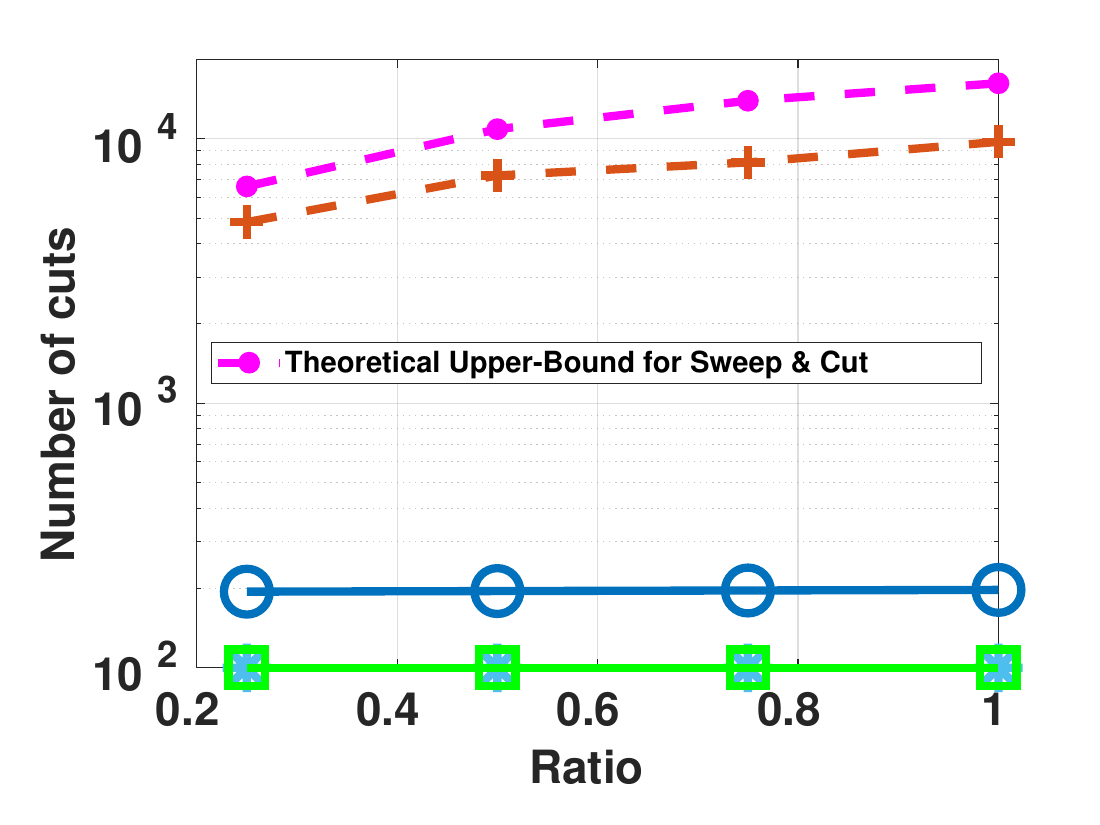}
        \caption[]{Effect of varying minority-to-majority ratio on space, \adult}
        
        \label{fig:adult_ratio_vs_space}
    \end{minipage}
    \hfill
    \begin{minipage}[t]{0.24\linewidth}
        \centering
        \includegraphics[width=\textwidth]{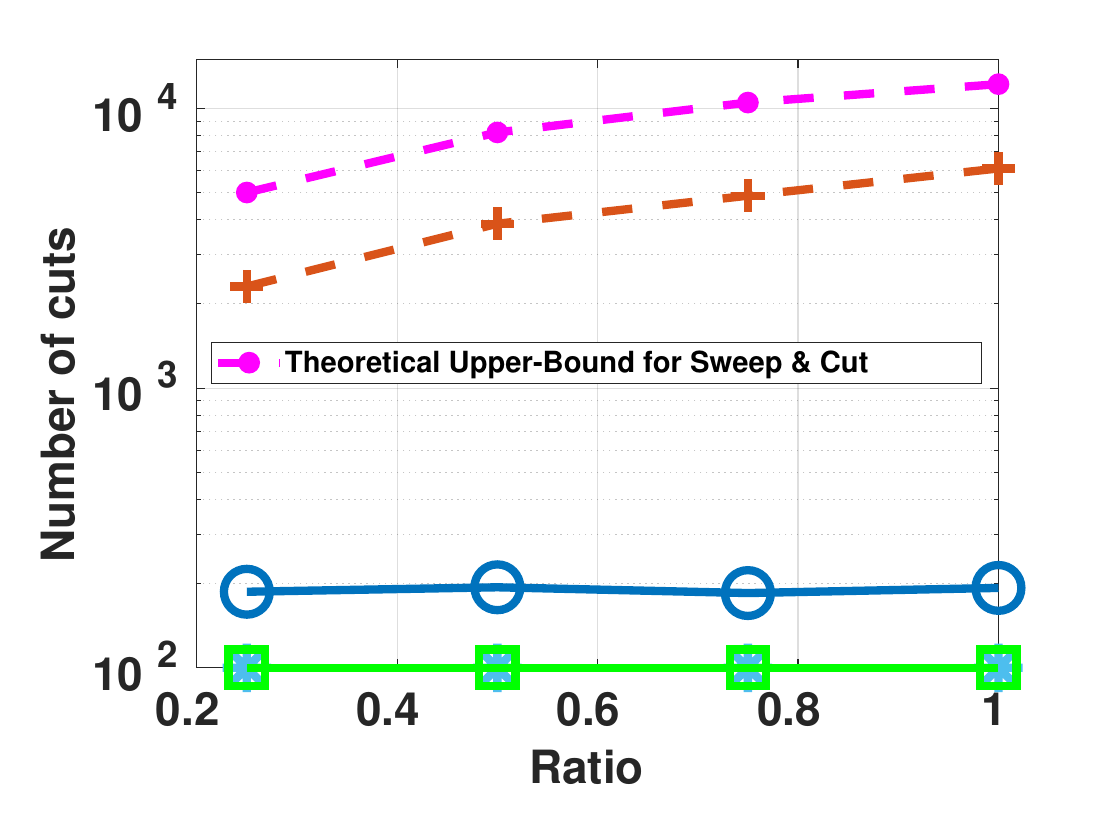}
        \caption[]{{Effect of varying minority-to-majority ratio on space, \compas}}
        
        \label{fig:compas_ratio_vs_space}
    \end{minipage}
    \hfill
    \begin{minipage}[t]{0.24\linewidth}
        \centering
        \includegraphics[width=\textwidth]{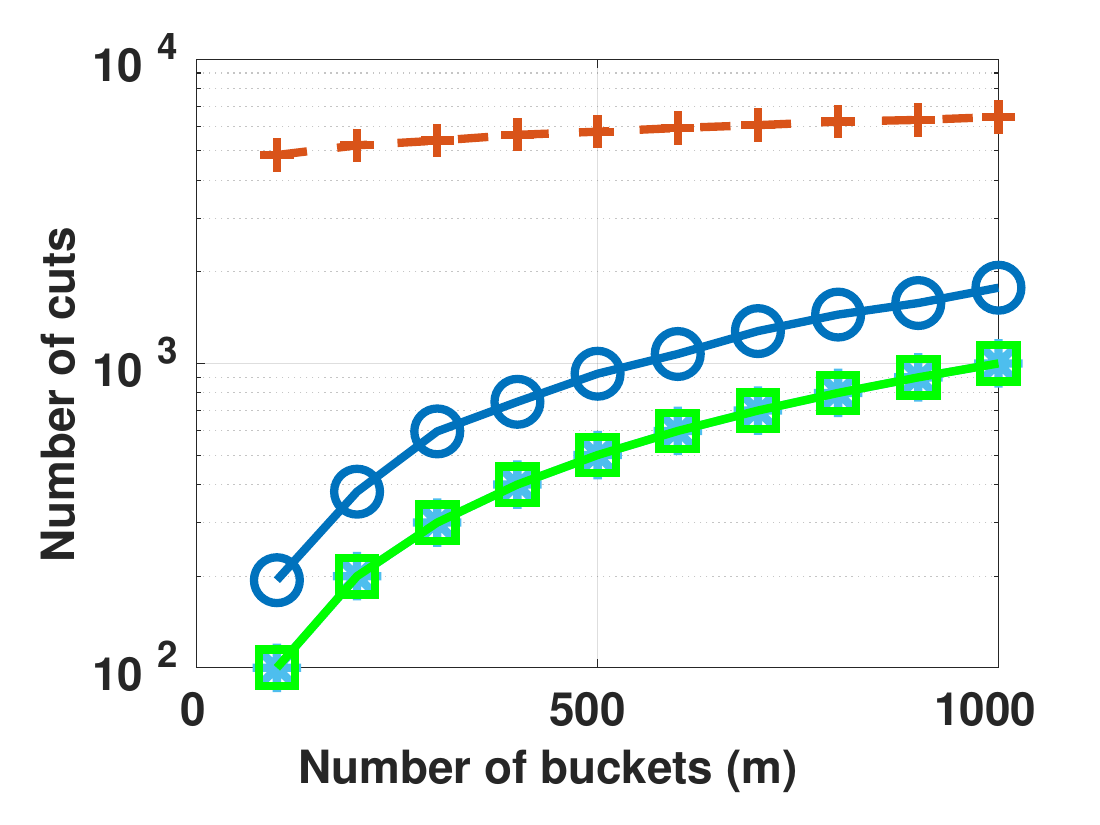}
        \caption{Effect of varying number of buckets $m$ on space, \adult}
        
        \label{fig:adult_m_vs_space}
    \end{minipage}
    \hfill
    \begin{minipage}[t]{0.24\linewidth}
        \centering
        \includegraphics[width=\textwidth]{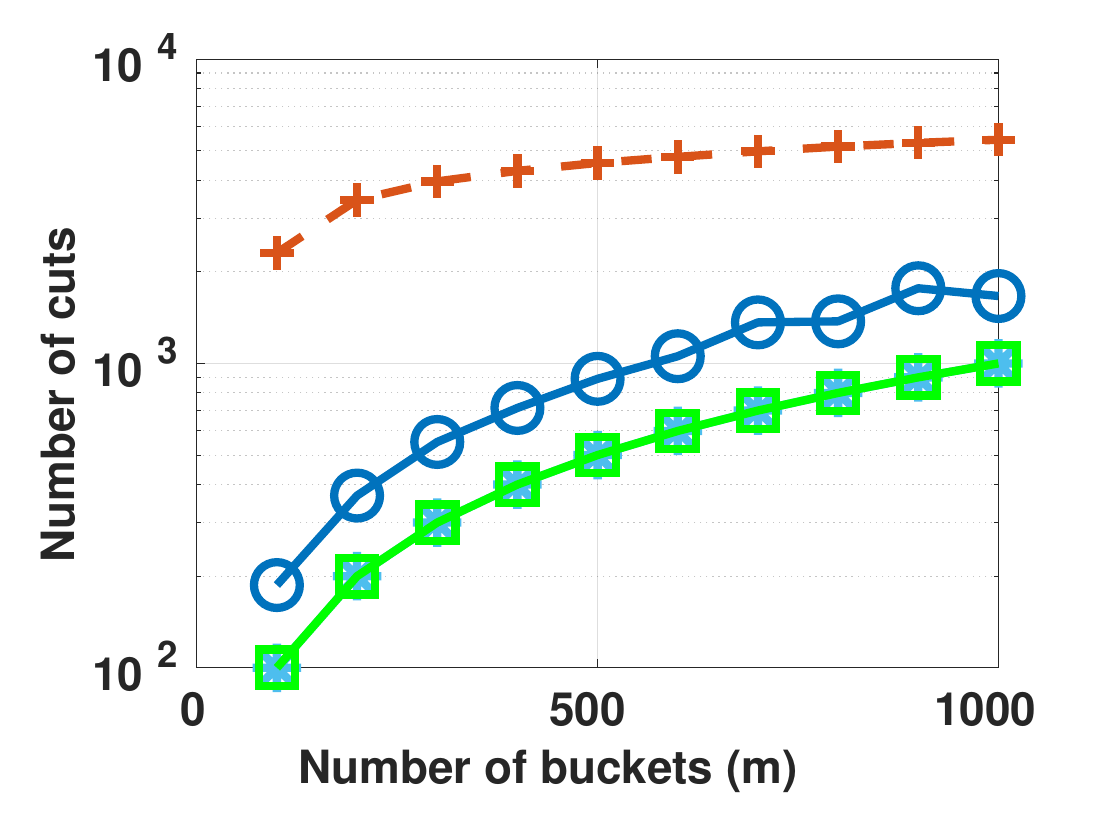}
        \caption{Effect of varying number of buckets $m$ on space, \compas} 
        
        \label{fig:compas_m_vs_space}
    \end{minipage}
\end{figure*}

\begin{figure*}
    \centering
    \includegraphics[width=0.9\textwidth]{plots/legend.png}
    
\hfill
    \begin{minipage}[t]{0.24\linewidth}
        \centering
        \includegraphics[width=\textwidth]{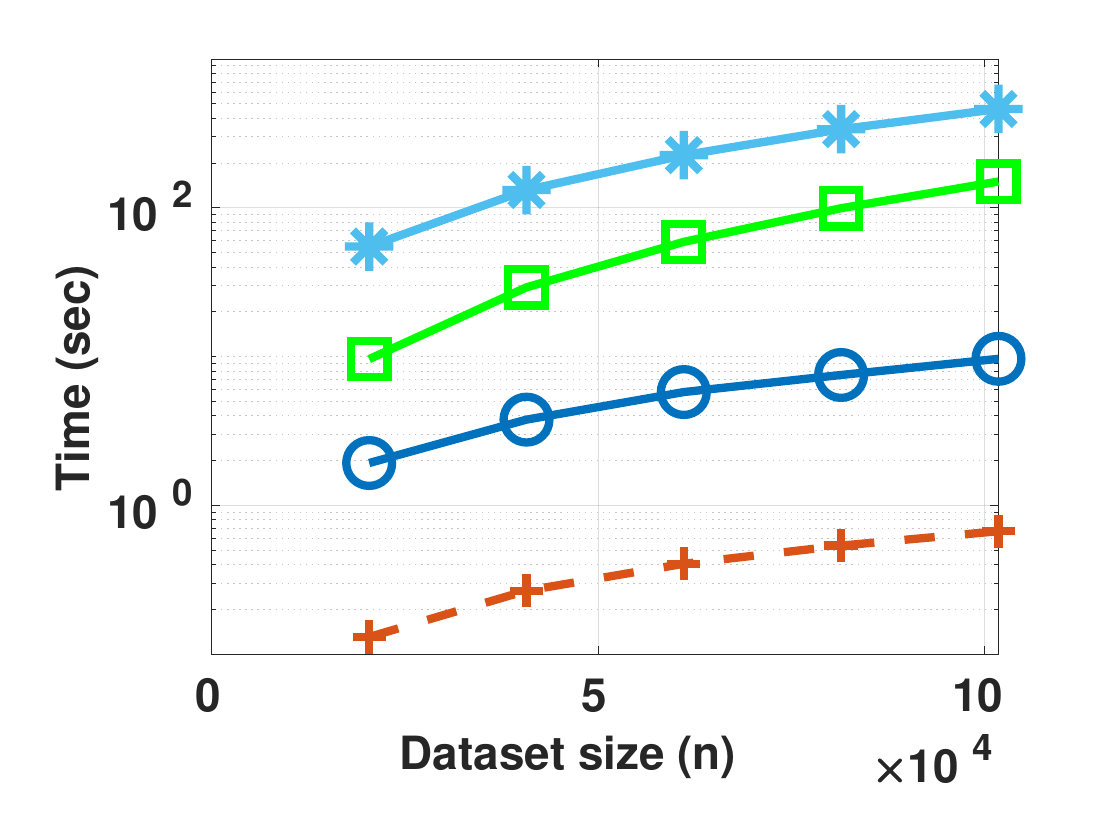}
        \caption{{Effect of varying dataset size $n$ on preprocessing time, \diabetes}}
        
        \label{fig:diabetes_n_vs_prep_time}
    \end{minipage}
    \hfill
    \begin{minipage}[t]{0.24\linewidth}
        \centering
        \includegraphics[width=\textwidth]{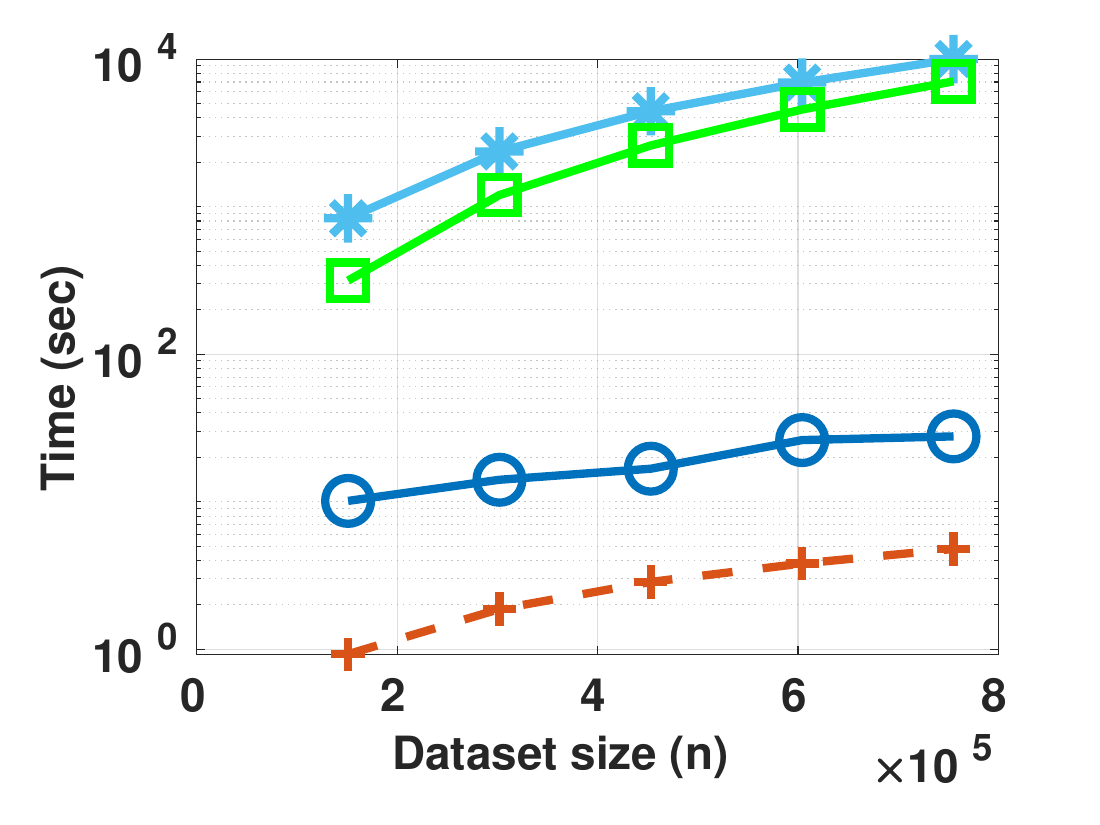}
        \caption{{Effect of varying dataset size $n$ on preprocessing time, \popsim}} 
        
        \label{fig:popsim_n_vs_prep_time}
    \end{minipage}
    \hfill
    \begin{minipage}[t]{0.24\linewidth}
        \centering
        \includegraphics[width=\textwidth]{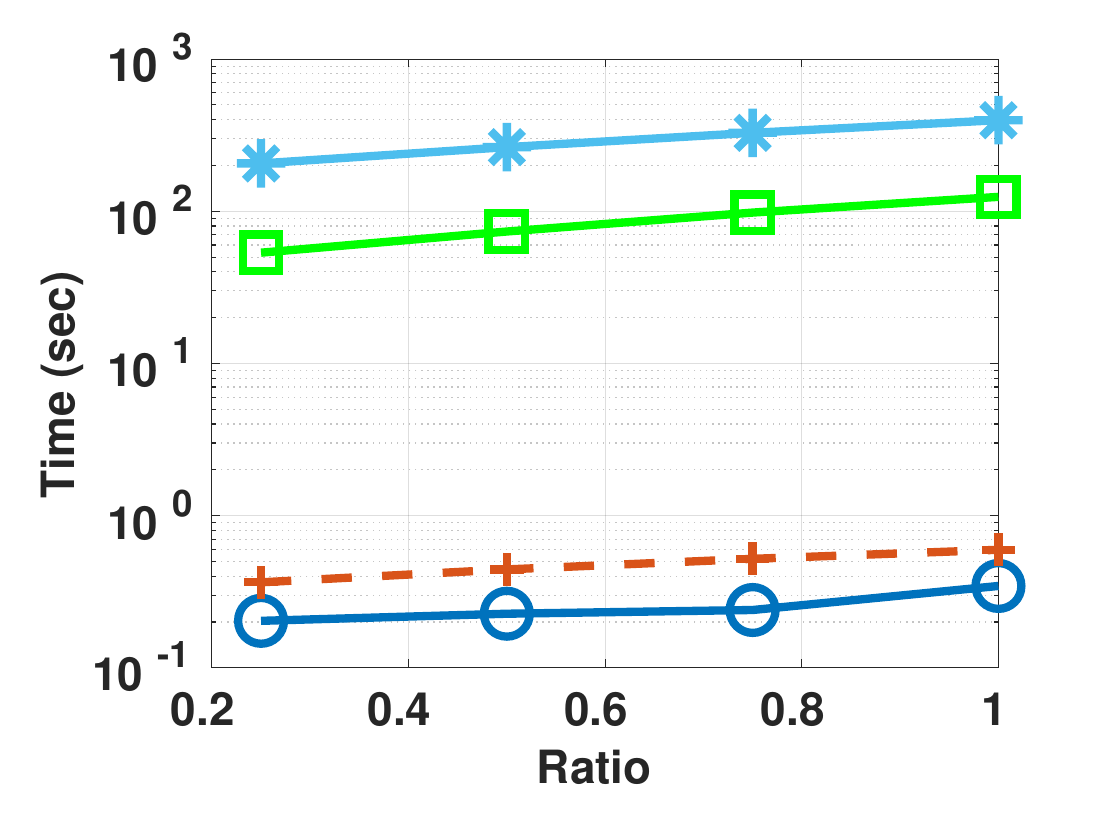}
        \caption{Effect of varying minority-to-majority ratio on preprocessing time, \diabetes}
        
        \label{fig:diabetes_ratio_vs_prep_time}
    \end{minipage}
    \hfill
    \begin{minipage}[t]{0.24\linewidth}
        \centering
        \includegraphics[width=\textwidth]{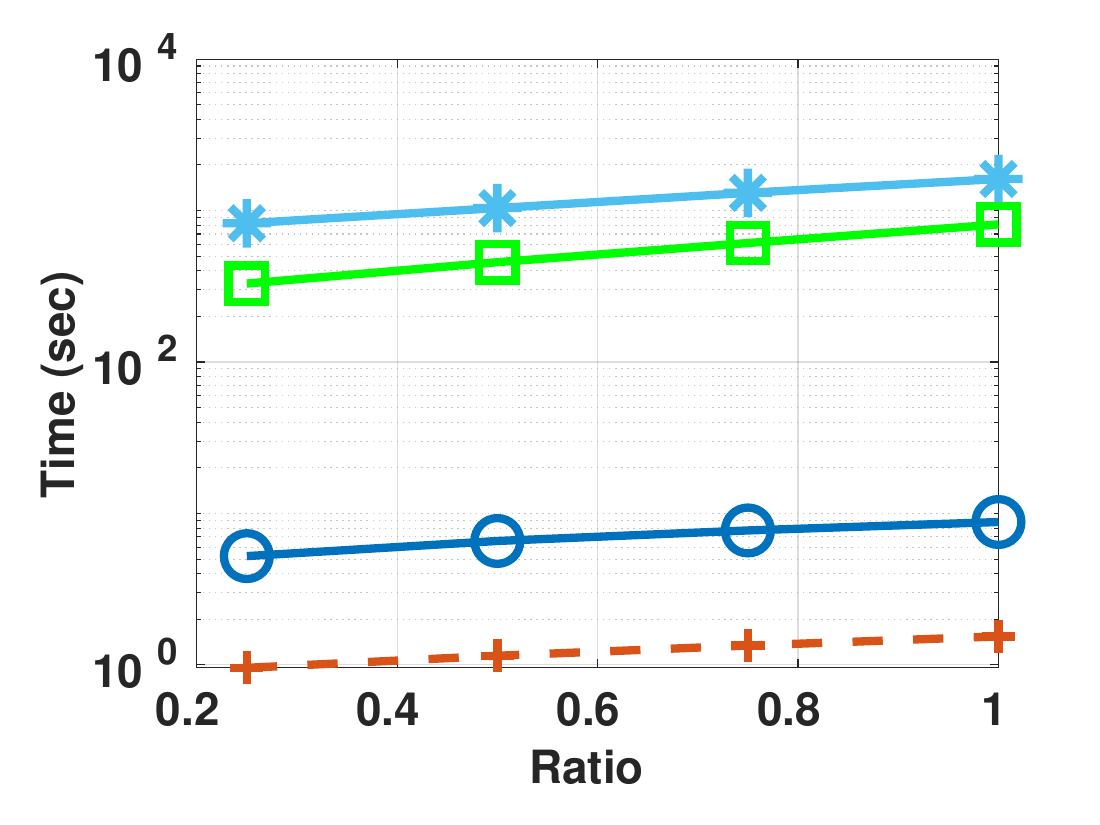}
        \caption[]{Effect of minority-to-majority ratio on preprocessing time, \popsim}
        
        \label{fig:popsim_ratio_vs_prep_time}
    \end{minipage}
    \hfill
    \begin{minipage}[t]{0.24\linewidth}
        \centering
        \includegraphics[width=\textwidth]{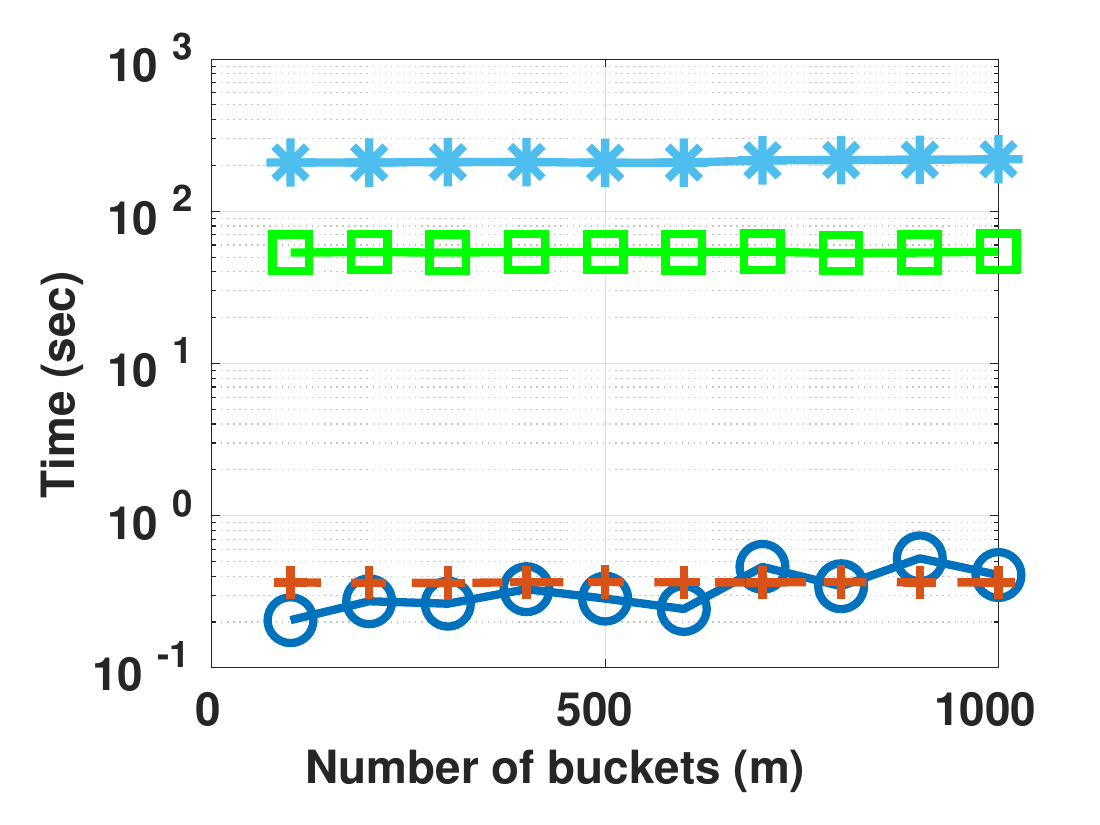}
        \caption{Effect of varying number of buckets $m$ on preprocessing time, \diabetes}
        
        \label{fig:diabetes_m_vs_prep_time}
    \end{minipage}
    \hfill
    \begin{minipage}[t]{0.24\linewidth}
        \centering
        \includegraphics[width=\textwidth]{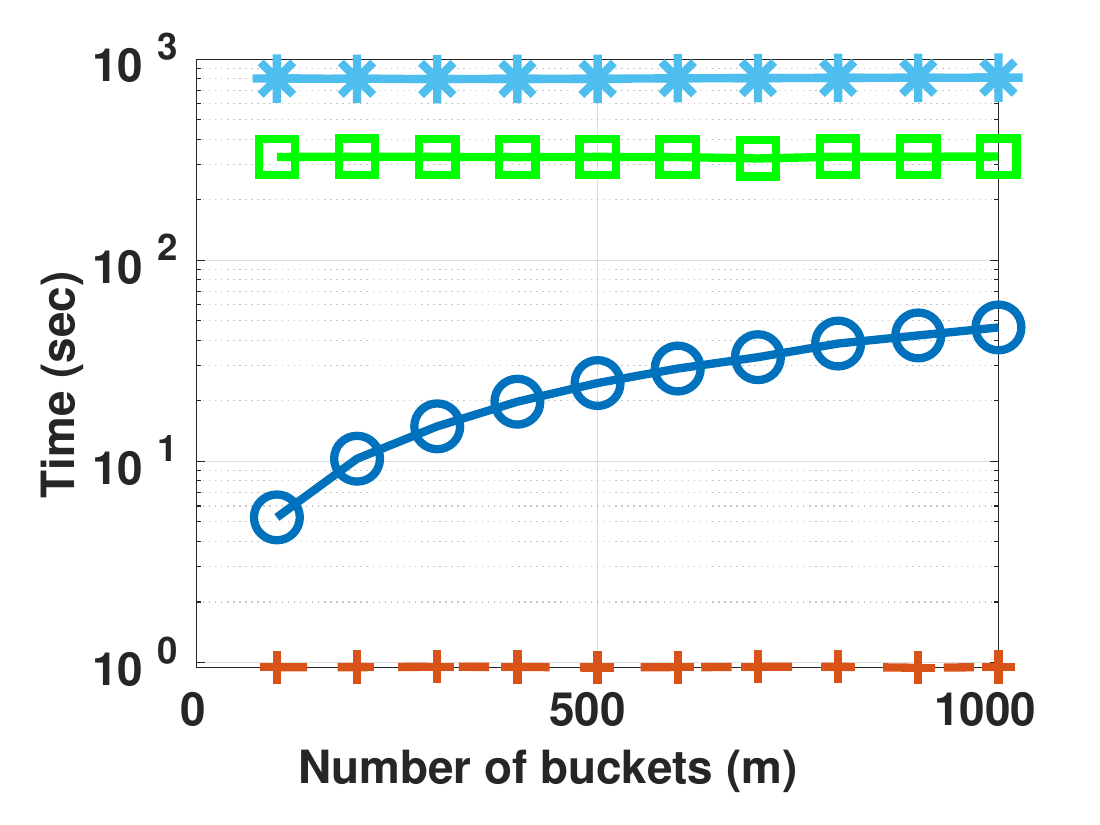}
        \caption{{Effect of varying number of buckets $m$ on preprocessing time, \popsim}} 
        
        \label{fig:popsim_m_vs_prep_time}
    \end{minipage}
    \hfill
    \begin{minipage}[t]{0.24\linewidth}
        \centering
        \includegraphics[width=\textwidth]{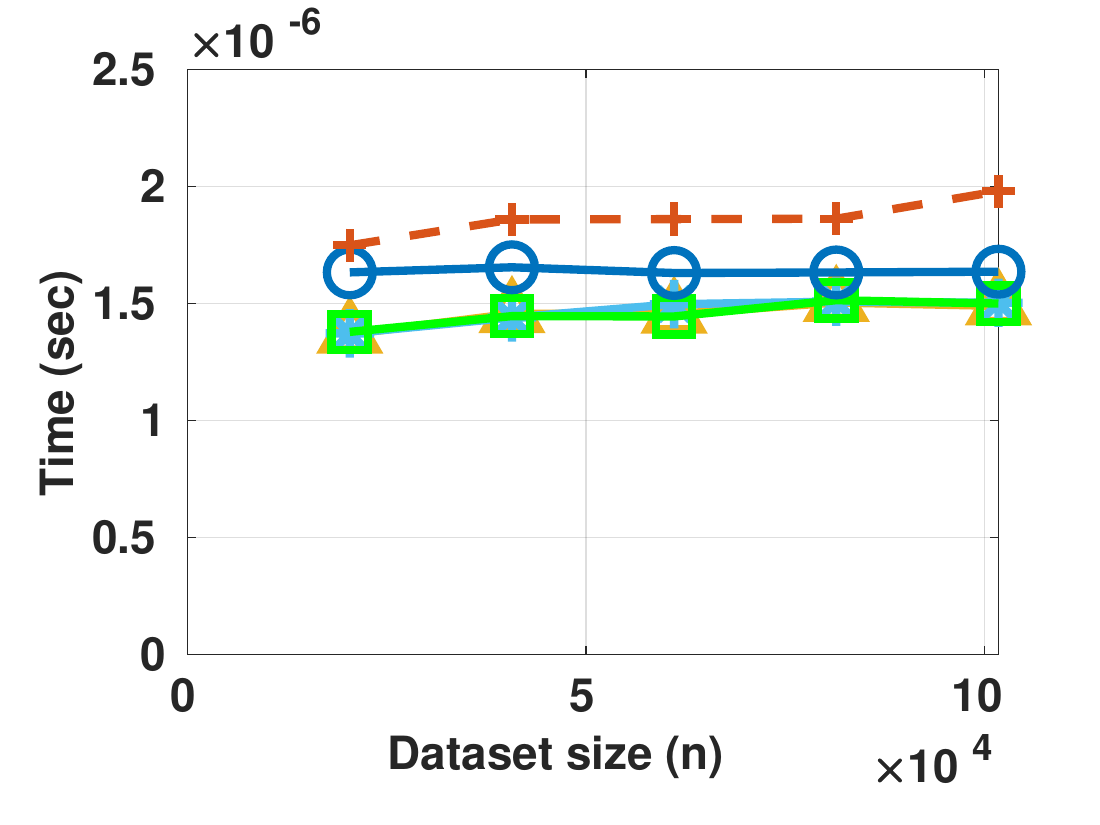}
        \caption{{Effect of varying dataset size $n$ on query time, \diabetes}}
        
        \label{fig:diabetes_n_vs_query_time}
    \end{minipage}
    \hfill
    \begin{minipage}[t]{0.24\linewidth}
        \centering
        \includegraphics[width=\textwidth]{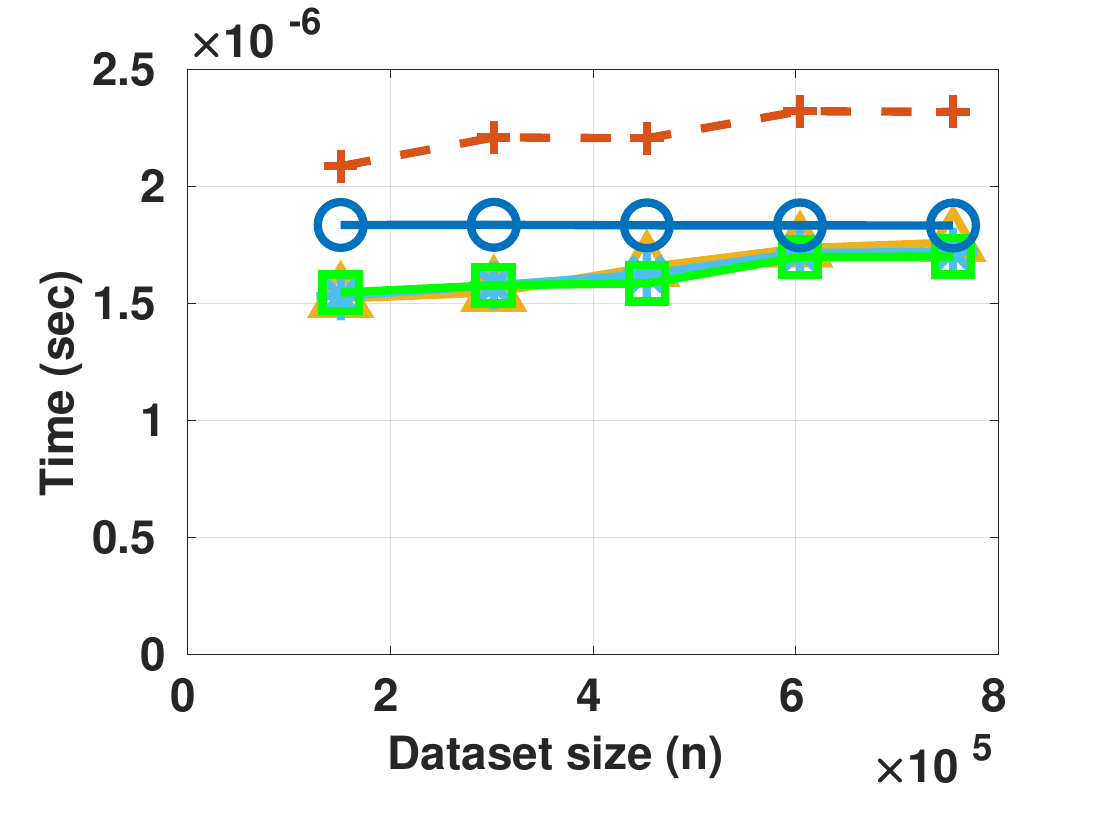}
        \caption[]{{Effect of varying dataset size $n$ on query time, \popsim}}
        
        \label{fig:popsim_n_vs_query_time}
    \end{minipage}
\hfill
    \begin{minipage}[t]{0.24\linewidth}
        \centering
        \includegraphics[width=\textwidth]{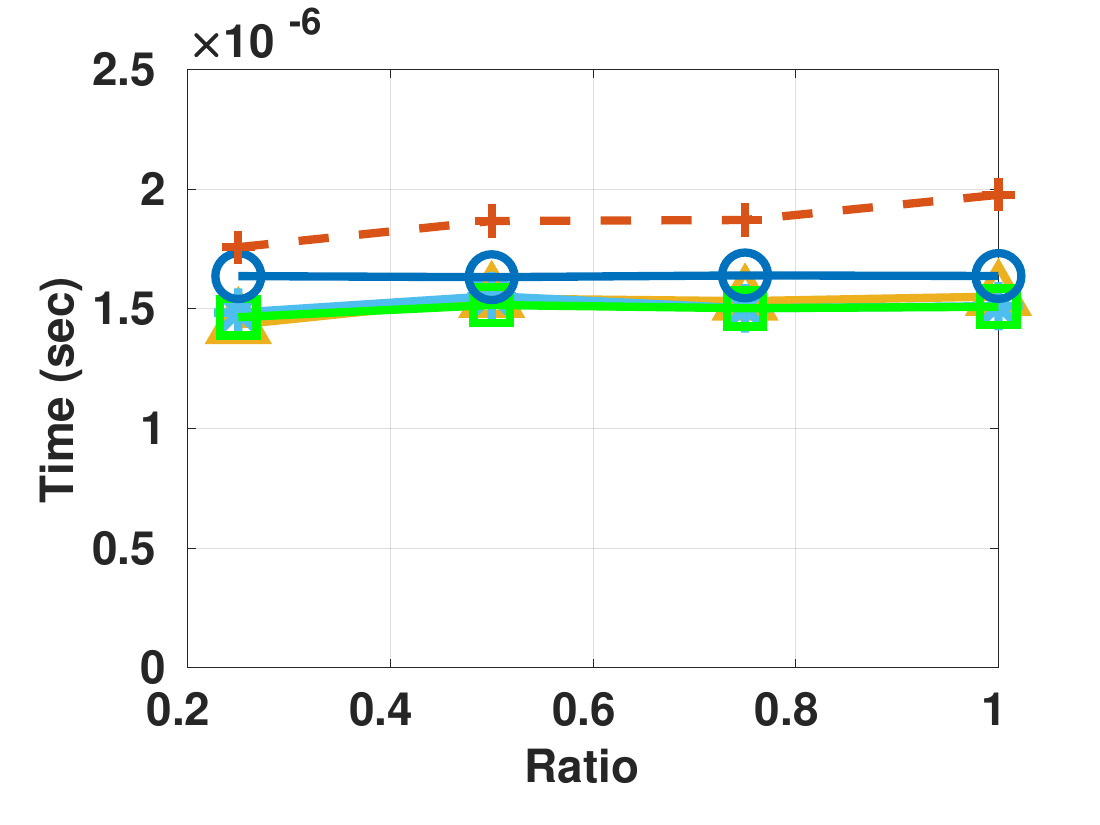}
        \caption{{Effect of varying minority-to-majority on query time, \diabetes}}
        
        \label{fig:diabetes_ratio_vs_query_time}
    \end{minipage}
    \hfill
    \begin{minipage}[t]{0.24\linewidth}
        \centering
        \includegraphics[width=\textwidth]{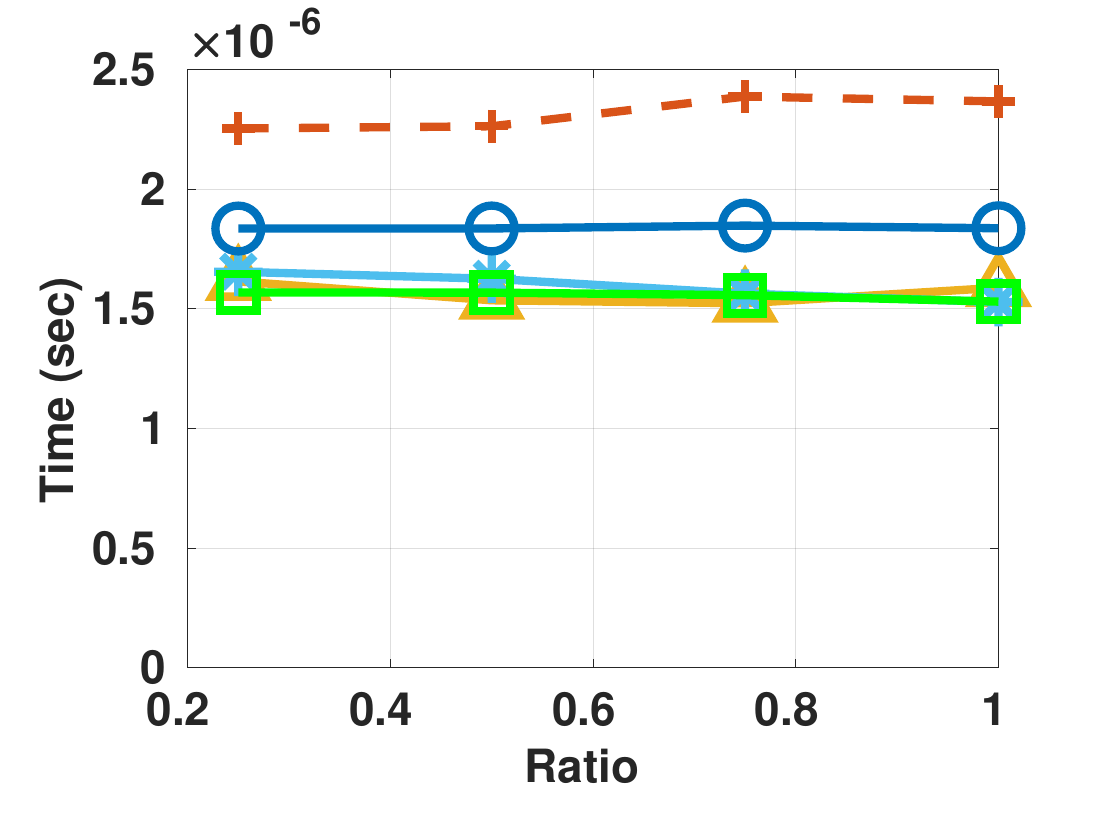}
        \caption{{Effect of varying minority-to-majority on query time, \popsim}} 
        
        \label{fig:popsim_ratio_vs_query_time}
    \end{minipage}
    \hfill
    \begin{minipage}[t]{0.24\linewidth}
        \centering
        \includegraphics[width=\textwidth]{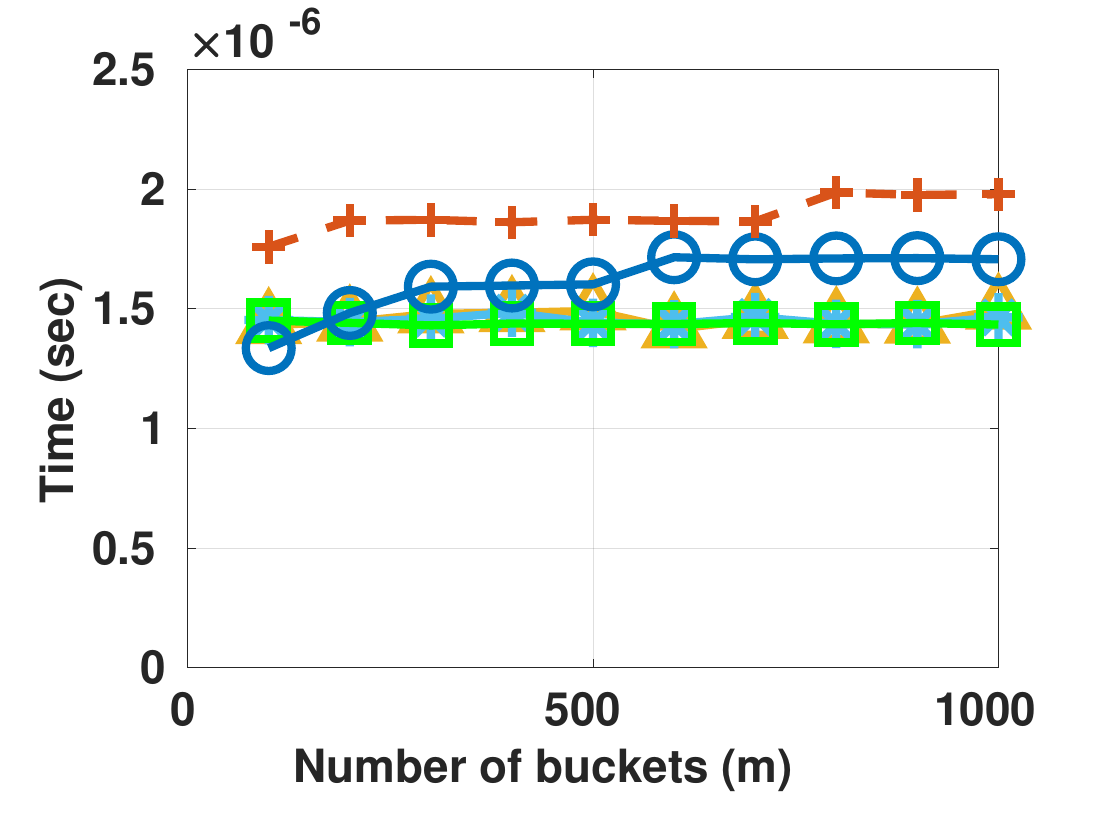}
        \caption{{Effect of varying number of buckets $m$ on query time, \diabetes}}
        
        \label{fig:diabetes_m_vs_query_time}
    \end{minipage}
    \hfill
    \begin{minipage}[t]{0.24\linewidth}
        \centering
        \includegraphics[width=\textwidth]{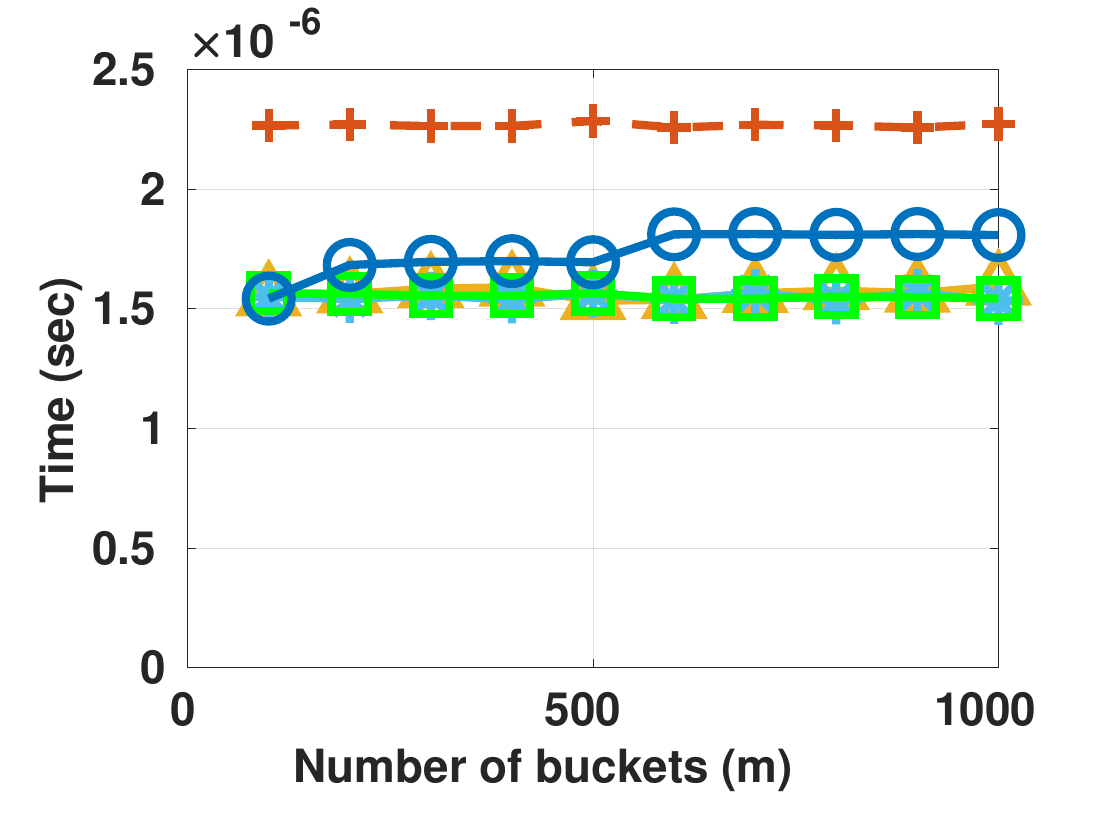}
        \caption{{Effect of varying number of buckets $m$ on query time, \popsim}} 
        \label{fig:popsim_m_vs_query_time}
    \end{minipage}
\end{figure*}

\begin{figure*}
    \begin{minipage}[t]{0.24\linewidth}
        \centering
        \includegraphics[width=\textwidth]{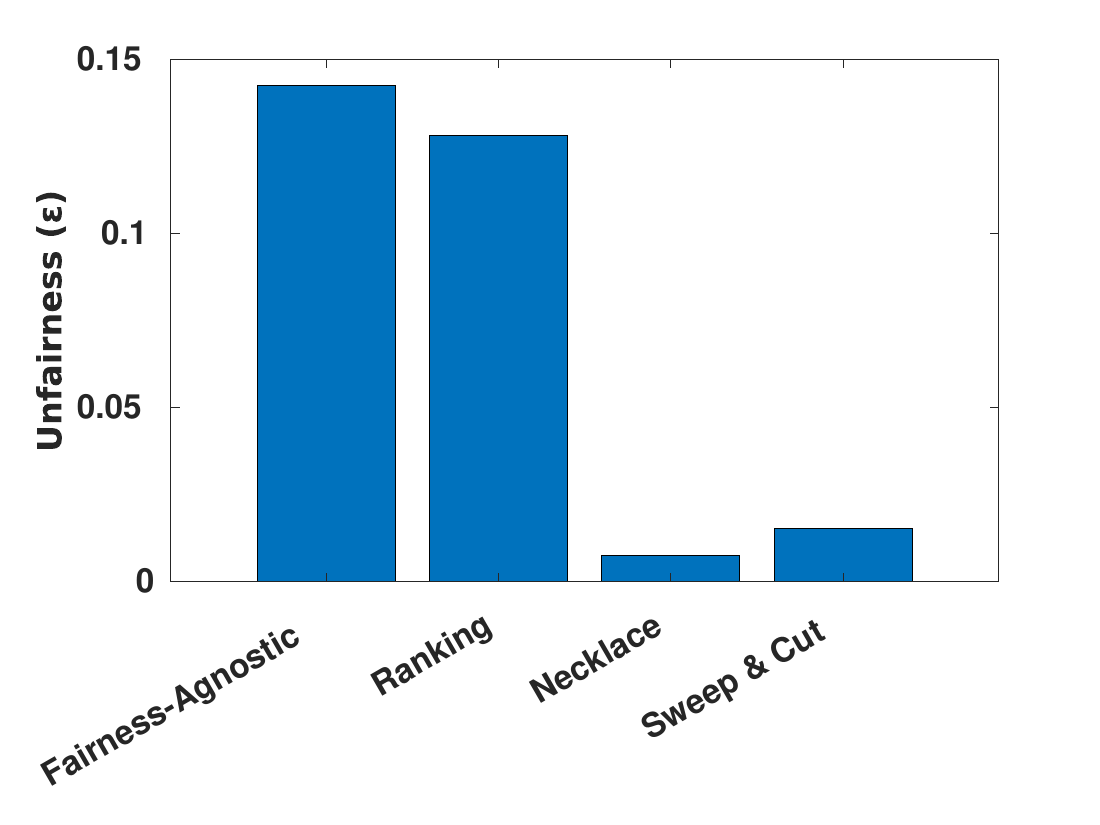}
        \caption{Learning setting: Unfairness evaluation over held out data, \adult} 
        
        \label{fig:learned_adult}
    \end{minipage}
    \hfill
    \begin{minipage}[t]{0.24\linewidth}
        \centering
        \includegraphics[width=\textwidth]{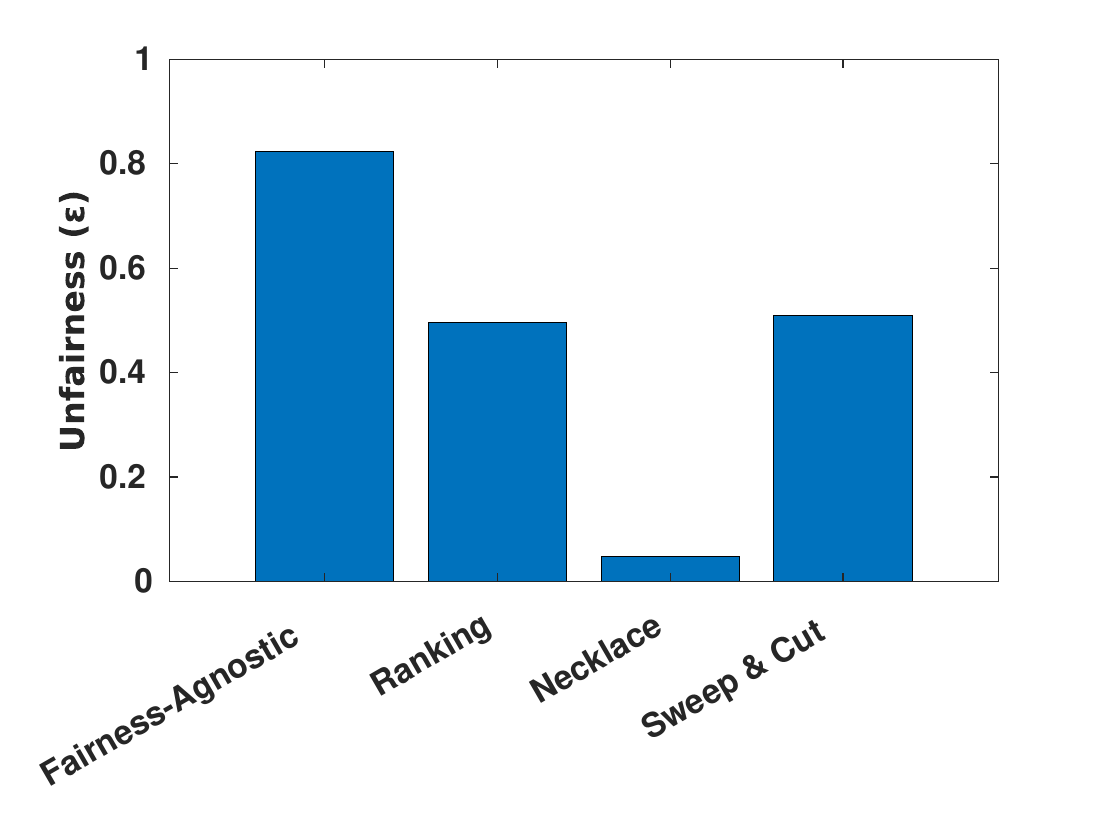}
        \caption{Learning setting: Unfairness evaluation over held out data, \popsim}
        
        \label{fig:learned_popsim}
    \end{minipage}
    \hfill
    \begin{minipage}[t]{0.24\linewidth}
        \centering
        \includegraphics[width=\textwidth]{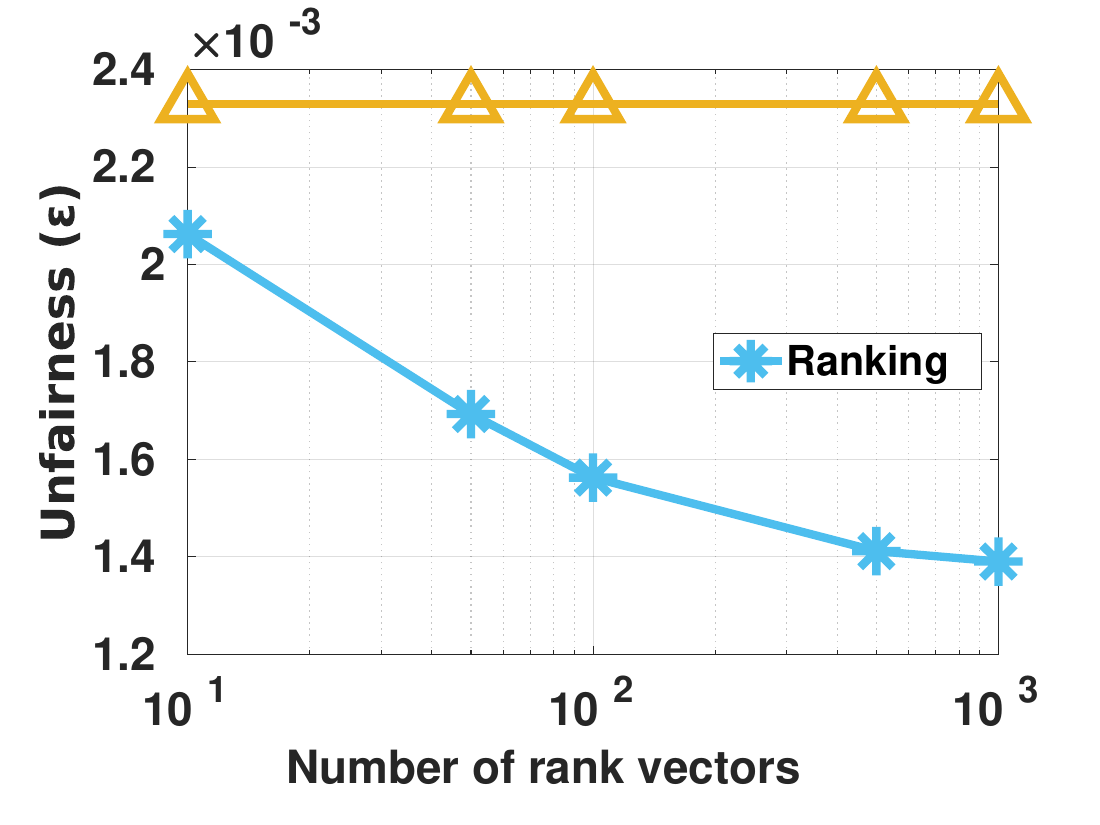}
        \caption{{Effect of varying number of sampled vectors on unfairness, \adult}}
        
        \label{fig:adult_vector_vs_unfairness}
    \end{minipage}
    \hfill
    \begin{minipage}[t]{0.24\linewidth}
        \centering
        \includegraphics[width=\textwidth]{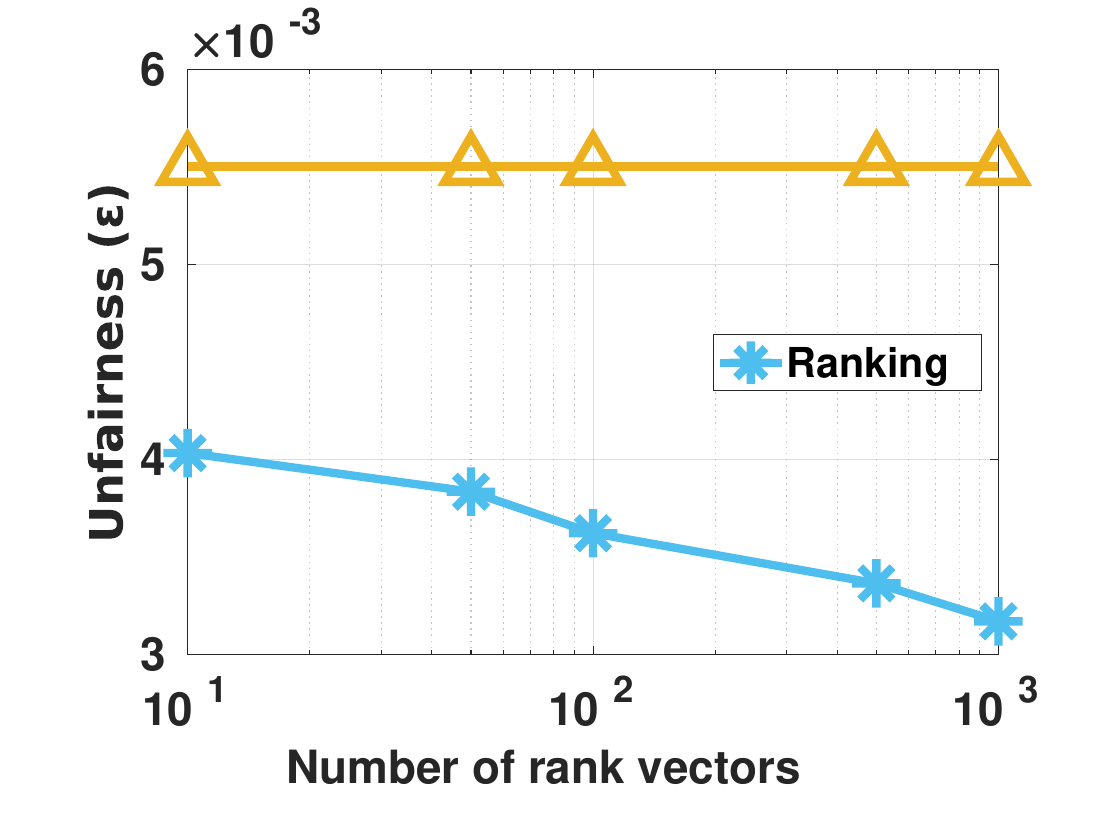}
        \caption{{Effect of varying number of sampled vectors on unfairness, \compas, {\tt sex}}}
        
        \label{fig:compas_vector_vs_unfairness}
    \end{minipage}
    \end{figure*}
    
    \begin{figure}
    \begin{minipage}[t]{0.49\linewidth}
        \centering
        \includegraphics[width=\textwidth]{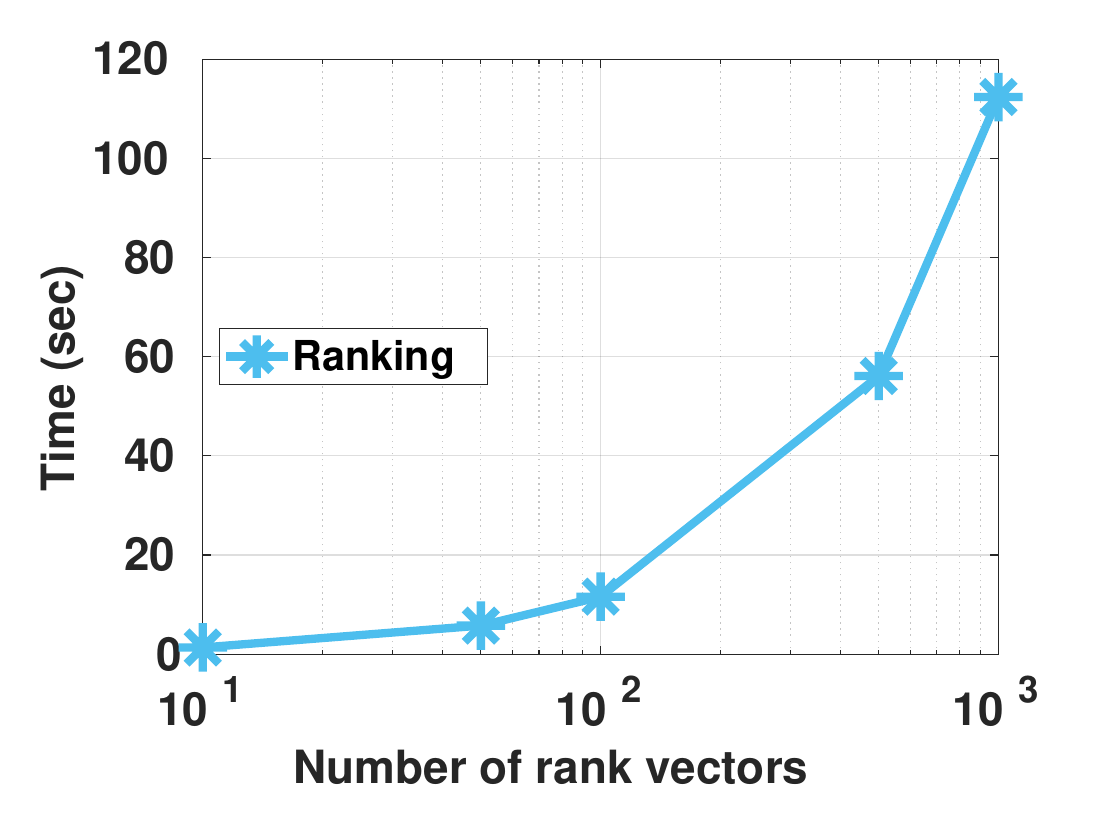}
        \caption{{Effect of varying number of sampled vectors on preprocessing time, \adult}} 
        
        \label{fig:adult_vector_vs_prep_time}
    \end{minipage}
    \hfill
    \begin{minipage}[t]{0.49\linewidth}
        \centering
        \includegraphics[width=\textwidth]{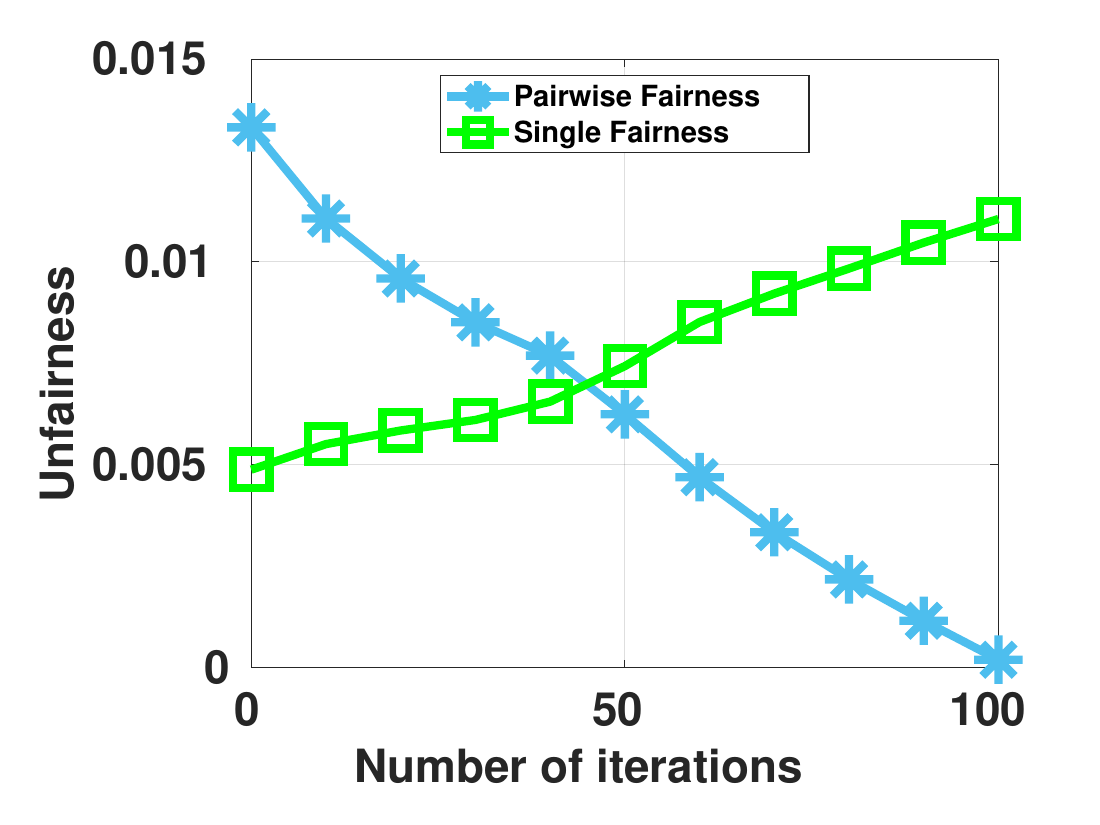}
        \caption{{Pairwise unfairness reduction using the local-search heuristic, \adult} }
        
        \label{fig:adult_local_search}
    \end{minipage}
\end{figure}

\begin{table}[!tb]
    \small
    \centering
    \caption{Overview of datasets}
    \label{tbl:datasets}
    \vspace{-3mm}
    {
    \begin{tabular}{c|c|c|c}
        {\bf Dataset} &{\bf Size}& {\bf Sensitive Attributes}&{\bf No. of Attributes}\\
        \hline
        {\bf\adult} ~\cite{misc_adult_2}&$\sim$49K&{\tt sex}&15
        \\\hline
        {\bf\compas} ~\cite{COMPAS}&$\sim$61K&${\tt sex, race}$&29
        \\\hline
        {\bf\diabetes} ~\cite{strack2014impact}&$\sim$102K&${\tt sex}$&49
        \\\hline
        {\bf\popsim} ~\cite{khanh2023popsim} &1M&{\tt race}&5
    \end{tabular}}
\end{table}

\subsection{Unfairness Evaluations}
We start our experiments by evaluating our algorithms for unfairness. Recall that \ranker returns an $\optRank$-unfair hashmap while \pd and \necklaceb output 0-unfair hashmaps, i.e., $\eps=0$. 
In the first experiment, we study the effect of varying dataset size $n$ on the unfairness. As shown in Figures~\ref{fig:adult_n_vs_unfairness} and~\ref{fig:compas_n_vs_unfairness}, irrespective of the dataset size, \pd and \necklaceb always exhibit zero unfairness. On the other hand, \ranker, while improving compared to \fag, still shows a small degree of unfairness that, similar to the baseline, decreases as the size of the dataset grows.
We also studied the impact of increasing the number of sampled vectors in Figures~\ref{fig:adult_vector_vs_unfairness} and~\ref{fig:compas_vector_vs_unfairness} and noticed a consistent decrease in the unfairness with the increasing number of sampled vectors. 
\color{black} 
Next, we study the effect of the minority-to-majority ratio on unfairness. The results are illustrated in Figures~\ref{fig:adult_ratio_vs_unfairness} and~\ref{fig:compas_ratio_vs_unfairness} with \pd and \necklaceb showing no unfairness, while \ranker reducing the unfairness compared to \fag. 
It is worth mentioning that all unfairness values approach to zero when 
the dataset includes an equal number of records from each group. This further accentuates the role of unequal base rate~\cite{kleinberg2016inherent} in unfairness.
Last but not least, we evaluate the effect of increasing the number of buckets $m$ on unfairness (Figures~\ref{fig:adult_m_vs_unfairness} and~\ref{fig:compas_m_vs_unfairness}). \pd and \necklaceb are independent of the number of buckets and show zero unfairness as $m$ increases. \fag and \ranker unfairness values however increase in a linear fashion as $m$ grows. Consistent with the two previous experiments, we observe that \ranker methods moderately improve the unfairness. 
Overall, 
confirming our theoretical analysis, \pd and \necklaceb are preferred from the fairness perspective.

\subsection{Space Evaluations}
Next, we evaluate our algorithms for memory demands a.k.a. space. Recall that \ranker is a $1$-memory hashmaps, meaning that no additional memory is required and the number of boundary points is exactly $m-1$. \necklaceb guarantees the number of cuts to be at most $2(m-1)$ while \pd can create up to $O(n)$ cuts in the worst-case. 
We investigated the effect of varying dataset size $n$ (Figures~\ref{fig:adult_n_vs_space} and~\ref{fig:compas_n_vs_space}), minority-to-majority ratio (Figures~\ref{fig:adult_ratio_vs_space} and~\ref{fig:compas_ratio_vs_space}), and number of buckets (Figures~\ref{fig:adult_m_vs_space} and~\ref{fig:compas_m_vs_space}) on the required space for each algorithm. 
As expected the results are consistent with the theoretical bounds. The number of boundaries created by \ranker is independent of the dataset size $n$ and minority-to-majority ratio and only depends on the number of buckets $m$ and therefore it is always a constant ($m-1$). Our experiments also verify similar results for \necklaceb being independent of $n$ and minority-to-majority ratio. Interestingly, in almost all settings, the actual number of cuts created by \necklaceb is close to the upper-bound $2(m-1)$. The results for \pd however, verify that the major drawback of this algorithm is the memory demands with close to the worst case of $O(n)$ cuts. Lemma~\ref{th:pd:mem:avg} provides an upper-bound on the expected number of cuts as a function of the minority ratio in the data set.
To evaluate how tight this upper-bound is in practice, in Figures~\ref{fig:adult_ratio_vs_space} and~\ref{fig:compas_ratio_vs_space}, we also present the actual number of cuts while varying the minority-to-majority ratios. At least in this experiment, the upper-bound was tight as it was always less than 30\% larger than the actual number. In general, if an application requires maintaining the space at a minimum while satisfying fairness constraints, as empirically observed, \ranker is the leading alternative. \necklaceb is also a favorable choice as it provides a practical trade-off between fairness (0-unfair) and space ($\leq2(m-1)$ cuts).

\subsection{Efficiency Evaluations}
In this set of experiments, we evaluate our proposed algorithms for efficiency. More specifically, we measure efficiency from two perspectives: 1) the preprocessing time that is required to construct the fair hashmap, and 2) the query time needed to return a hash (bucket) for new records when the hashmap is constructed.

\subsubsection{Preprocessing Time}
We start our efficiency experiments by revisiting the preprocessing time complexity of the proposed algorithms. \ranker has a time complexity of $O(n^2\log n)$ in 2D while \pd and \necklaceb both run in $O(n\log n)$. In our first experiment, we study the impact of varying the dataset size $n$. First, in Figures~\ref{fig:diabetes_n_vs_prep_time} and~\ref{fig:popsim_n_vs_prep_time}, one can confirm that, overall, the run-time increases with the dataset size. For \ranker, the exact time depends on the number of rankings generated.
This is also evident in Figures~\ref{fig:adult_vector_vs_prep_time}, where we study the impact of increasing the number of sampled vectors on the preprocessing time.
Next, as demonstrated in Figures~\ref{fig:diabetes_ratio_vs_prep_time} and ~\ref{fig:popsim_ratio_vs_prep_time}, we confirm that the preprocessing time is independent of the minority-to-majority ratio.
As with the preceding experiment, we expect that varying $m$ should not impact the run-time of any of the algorithms, which is consistent with our experiment results in Figures~\ref{fig:diabetes_m_vs_prep_time} and ~\ref{fig:popsim_m_vs_prep_time}.
Overall, in time-sensitive applications, both \pd and \necklaceb offer the fastest results, all the while ensuring 0-unfairness.
\subsubsection{Query Time}
Recall that the output of our algorithms consists of a sequence of boundaries along with a corresponding set of hash buckets. After constructing the hashmap, obtaining the hash bucket for a new query is a simple process: just execute a binary search on the boundaries and retrieve the bucket linked to the boundaries within which the query point resides. Therefore, query time is in $O(\log|B|)$ and only depends on the number of boundaries. Our empirical results are consistent with the preceding analysis, confirming that query time remains independent of both dataset size (see Figures~\ref{fig:diabetes_n_vs_query_time} and~\ref{fig:popsim_n_vs_query_time}) and the minority-to-majority ratio (refer to Figures~\ref{fig:diabetes_ratio_vs_query_time} and~\ref{fig:popsim_ratio_vs_query_time}). The increase in the number of buckets ($m$) leads to a logarithmic growth in query time across all our algorithms, as depicted in Figures~\ref{fig:diabetes_m_vs_query_time} and~\ref{fig:popsim_m_vs_query_time}. Thanks to the logarithmic efficiency of binary search, all our algorithms offer remarkably fast query times, with with practically negligible variations.
\subsection{Local-search-based Heuristic Evaluation}
In this experiment, we apply the local-search-based heuristic to the boundaries generated by the \ranker algorithm on an instance of \adult dataset. We run the heuristic in 1000 iterations, with single fairness lower and upper bounds as $f^-=0, f^+=0.05$ respectively and the collision probability upper bound as $c^+=0.05$. The results are shown in Figure \ref{fig:adult_local_search}. The local-search-based heuristic effectively enhances pairwise fairness by making minimal adjustments to the bin boundaries, incurring a slight cost in single fairness within 100 iterations before it concludes.
\subsection{Learning Settings Evaluation}\label{sec:exp:learning}
So far in our experiments, we assumed that the algorithms have access to the entire input set. In this experiment, we demonstrate that our methods can work in expectation if an unbiased sample set from the input set is provided. To do so, we partition the input datasets into training and test sets with a ratio of 0.8 to 0.2. Next, we utilize our algorithms to create a hashmap on the training set.
We then use the test set for evaluation: each test entry is queried on the constructed hashmap to identify their buckets. Having created a hashmap that exclusively contains the test entries, we proceed to measure the pairwise unfairness. The results are illustrated in Figures~\ref{fig:learned_adult} and~\ref{fig:learned_popsim}. Although all methods demonstrate an enhancement in unfairness compared to the \fag baseline, the most notable improvement is observed with \necklaceb, where the unfairness decreases from 0.81 to 0.03 for the \popsim and from 0.15 to 0.007 for \adult. Aligned with our previous results, \ranker only moderately improves the unfairness. Although \pd consistently improves the unfairness in the learned settings, depending on the number of cuts it generates, it may exhibit signs of overfitting based on the number of cuts it generates. This tendency becomes more apparent, especially when dealing with large training data, as illustrated in Figure~\ref{fig:learned_popsim}. However, this overfitting phenomenon is mitigated when the size of training data is smaller and the number of cuts is reduced, resulting in a substantial decrease in unfairness, as depicted in Figure~\ref{fig:learned_adult}.
\section{Related work}\label{sec:related}
\stitle{Hashing} Hashing has a long history in computer science~\cite{wang2014hashing, chi2017hashing}. Hashing-based algorithms and data structures find many applications in various areas such as theory, machine learning, computer graphics, computational geometry and databases~\cite{knuth1997art, marcus2020benchmarking, charikar2019multi, ahle2020subsets, chen2022truly, czumaj2022streaming, duan2022faster}. Due to its numerous applications, the design of efficient hash functions with theoretical guarantees are of significant importance~\cite{damgaard1989design, dhar2022linear, kuszmaul2022hash, aamand2019non, aamand2020fast, xing2023beating, bender2022optimal, assadi2023tight, bercea2022extendable, kacham2023pseudorandom}.
In traditional hashmaps, the goal is to design a hash function that maps a key to a random value in a specified output range. The goal is to minimize the number of collisions, where a collision occurs when multiple keys get mapped to the same output value. There are several well-known schemes such as chaining, probing, and cuckoo hashing to handle collisions. Recently, machine learning is used to learn a proper hash function~\cite{sabek2022can, kraska2018case, mitzenmacher2018model}. In a typical scenario, a set of samples is received and they learn the CDF of the underlying data distribution. Then the hashmap is created by partitioning the range into equal-sized buckets.
It has been shown that such learned index structures~\cite{sabek2022can, kraska2018case}, can outperform traditional hashmaps on practical workloads.
However, to the best of our knowledge, none of these hashmap schemes can handle fair hashing with theoretical guarantees.
Finally, learning has been used to obtain other data structures as well, such as $B$-trees~\cite{kraska2018case} or bloom filters~\cite{kraska2018case, vaidya2020partitioned, mitzenmacher2018model}.

\stitle{Algorithmic Fairness}
Fairness in data-driven systems has been studied by various research communities but mostly in machine learning (ML)~\cite{barocas2017fairness,mehrabi2021survey,pessach2022review}.
Most of the existing work is on training a ML model that satisfies some fairness constraints. Some pioneering fair-ML efforts include~\cite{dwork2012fairness,zafar2017fairness,celis2019classification,calmon2017optimized,feldman2015certifying,hardt2016equality,kamiran2012data}.
Biases in data has also been studied extensively~\cite{stoyanovich2022responsible,shahbazi2023representation,olteanu2019social,salimi2019data,asudeh2019assessing,asudeh2021identifying} to ensure data has been prepared responsibly~\cite{salimi2019interventional,nargesian2021tailoring,salimi2020database,shetiya2022fairness}. 
Recent studies of fair algorithm design include
fair
clustering~\cite{chierichetti2017fair,bera2019fair,schmidt2020fair,ahmadian2020fair,bohm2020fair,makarychev2021approximation,chlamtavc2022approximating,hotegni2023approximation},
fairness in resource allocation and facility location problem~\cite{blanco2023fairness,jiang2021rawlsian,donahue2020fairness,he2020inherent,zhou2019public,gorantla2023fair},
min cut~\cite{li2023near},
max cover~\cite{asudeh2023maximizing},
game theoretic approaches~\cite{zhao2021fairness,donahue2023fairness,algaba2019shapley},
hiring~\cite{raghavan2020mitigating,aminian2023fair},
ranking~\cite{asudeh2019designing,singh2018fairness,singh2019policy,zehlike2022fairness,guan2019mithraranking,asudeh2018obtaining},
recommendation~\cite{chen2023bias,li2022fairness, swift2022maximizing},
representation learning~\cite{he2020geometric},
etc.

Fairness in data structures is significantly under studied, with the existing work being limited to ~\cite{aumuller2021fair,aumuller2022sampling,aumuller2020fair}, which study {\em individual fairness} in near-neighbor search. Particularly, a rejection sampling technique has been added on top of local sensitive hashing (LSH) that equalizes the retrieval chance for all points in the $\rho$-vicinity of a query point, independent of how close those are to the query point.
To the best of our knowledge, we are the first to study {\em group fairness} in hashing and more generally in data structure design.


\vspace{1mm}
\section{Final Remarks and Future Work}
In this paper we studied hashmaps through the lens of fairness and proposed several fair and memory/time efficient algorithms.
Some the interesting directions for future work are as following.

{\em Memory-efficient 0-unfair hashmaps for more than two groups}:
    Our ranking-based algorithms do not depend on the number of groups, however, those are not 0-unfair. The \pd algorithm also does not depend on the number of groups and it is even 0-unfair. Its memory requirement, however, can be as high as $O(n)$.
    The performance of the necklace splitting algorithm, on the other hand, depends on the number of groups. While for two groups, the number of boundaries is independent from $n$ and at most $2(m-1)$, the state-of-the-art algorithm for more than two groups suddenly increases this requirement by a factor of $O(\log(n))$. Developing a fair $(0,c)$-hashmap for this case, for a constant value of $c$, remains an interesting open problem for future work.

{\em Beyond the class of Linear Ranking functions}:
    We developed our ranking-based algorithms using the class of linear ranking functions. 
    It would be interesting to expand the scope to more general non-linear classes (such as monotonic functions).
    Indeed one can add non-linear attributes before running our ranking-based algorithms. But this will increase the number of dimensions, exponentially reducing its run time.

{\em Lower-bounds and trade-offs on $(\eps,\alpha)$}: Our ranking-based algorithms satisfy 1-memory requirement but cannot achieve 0-unfairness. The cut-based algorithms, on the other hand, are 0-unfair but require additional memory. This suggests a trade-off between fairness and memory requirements. 
    Last but not least, formally studying this trade-off and identifying the lower-bound Pareto frontier for fairness and memory is an interesting future work.

\vspace{2mm}
\section*{Acknowledgements}
This work was supported in part by 
the National Science Foundation, Grant No. 2107290, and the CAHSI-Google IRP Award.
The authors would like to thank the anonymous reviewers and the meta-reviewer for their invaluable feedback.

\bibliographystyle{ACM-Reference-Format}
\bibliography{ref}

\section*{Appendix}
\appendix
\section{Extended Details of Discrepancy-based hashmaps}
\subsection{Faster randomized algorithm for small $m$}
The dynamic programming algorithm we designed depends on $n^{d+2}$, which is too large. Here we propose a faster randomized algorithm to return an $((1+\delta)\optDisc+\gamma, 1)$-hashmap, with running time which is strictly linear on $n$.
We use the notion of $\gamma$-approximation.
Let $w$ be a vector in $\Re^d$, and let $P_w$ be the projections of all points in $P$ onto $w$. We also use $P_w$ to denote the ordering of the points on $w$.
Let $\mathcal{I}_w$ be the set of all possible intervals on the line supporting $w$. Let $(P_w, \mathcal{I}_{w})$ be the range space with elements $P_w$ and ranges $\mathcal{I}_w$.
Let $A_i$ be a random (multi-)subset of $\gee_i$ of size $O(\gamma^{-2}\log \phi^{-1})$ where at each step a point from $\gee_i$ is chosen uniformly at random (with replacement). It is well known~\cite{har2011geometric, chazelle2000discrepancy} that $A_i$ is a $\gamma$-approximation of $(\gee_i,\mathcal{I}_w)$ with probability at least $1-\phi$, satisfying
$|\frac{|\gee_i\cap I|}{|\gee_i|}-\frac{|A_i\cap I|}{|A_i|}|\leq \gamma$,
for every $I\in \mathcal{I}_w$.

We discretize the range $[0,1]$ creating the discrete values of discrepancy $E=\{0,\frac{1}{n}, (1+\delta)\frac{1}{n},\ldots, 1\}$ and run a binary search on $E$.
Let $\alpha\in E$ be the discrepancy we consider in the current iteration of the binary search. We design a randomized algorithm such that, if $\alpha>\optDisc$ it returns a valid hashmap. Otherwise, it does not return a valid hashmap. The algorithm returns the correct answer with probability at least $1-1/n^{O(1)}$.

\paragraph{Algorithm.}
For each $i$, we get a sample set $A_i\subseteq \gee_i$ of size $O(\frac{m^2}{\gamma^2}\log n^{d+2})$. Let $A=\cup_{i}A_i$. Then we run the binary search we described in the previous paragraph. Let $\alpha$ be the discrepancy we currently check in the binary search.
Let $\mathcal{W}_A$ be the set of all combinatorially different vectors with respect to $A$ as defined in Section~\ref{sec:ranking}. For each $w\in \mathcal{W}_A$ we define the ordering $A_w$.
Using a straightforward dynamic programming algorithm we search for a partition of $m$ buckets on $A_w$ such that every bucket contains between $[(1-\alpha-\gamma/2)\frac{|A_i|}{m}, (1+\alpha+\gamma/2)\frac{|Ai|}{m}]$ points for each $\gee_i$. If it returns one such partition, then we continue the binary search for smaller values of $\alpha$. Otherwise, we continue the binary search for larger values of $\alpha$. In the end, we return the partition/hashmap of the last vector $w$ that the algorithm returned a valid hashmap.

\paragraph{Analysis.}
Let $w$ be a vector in $\Re^d$ and the ordering $P_w$. Let $\mathcal{I}_{w}$ be the intervals defined on the line supporting the vector $w$. 
By definition, $A_i$ is a $\gamma/(2m)$-approximation of $(P_{w}, \mathcal{I}_{w})$ with probability at least $1-1/n^{d+2}$.
Since there are $k<n$ groups and $O(n^d)$ combinatorially different vectors (with respect to $P$) using the union bound we get the next lemma.
\begin{lemma}
 \label{lem:approxAnew}
$A_i$ is a $\gamma/(2m)$-approximation in $(P_{w},\mathcal{I}_w)$ for every $w\in \Re^d$ and every $i\in [1,k]$ with probability at least $1-1/(2n)$.
\end{lemma}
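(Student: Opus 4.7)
\textbf{Proof proposal for Lemma~\ref{lem:approxAnew}.}
The plan is to combine the single-range-space $\gamma$-approximation bound stated just before the lemma with a union bound over all \emph{combinatorially distinct} projections. First, fix a vector $w\in\Re^d$ and a group index $i\leq k$. The range space $(\gee_i\cap P_w, \mathcal{I}_w)$ consists of points on a line together with the family of all intervals, which has constant VC-dimension. Hence the cited sampling bound says that a uniform random multi-sample of size $s = c\cdot (\gamma')^{-2}\log \phi^{-1}$ from $\gee_i$ is a $\gamma'$-approximation of $(\gee_i\cap P_w,\mathcal{I}_w)$ with probability at least $1-\phi$.

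Second, choose $\gamma' = \gamma/(2m)$ and $\phi = 1/(c'\,n^{d+2})$ for a sufficiently large constant $c'$. This gives a per-$(w,i)$ failure probability of $1/(c'n^{d+2})$ and a sample size
\[
s \;=\; O\!\left(\frac{m^2}{\gamma^2}\log n^{d+2}\right),
\]
which matches (up to the implicit constant) the size $|A_i|$ prescribed above the lemma.

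Third, to handle the ``for every $w\in\Re^d$'' quantifier I would invoke the combinatorial observation from Section~\ref{sec:ranking}: although $\Re^d$ contains uncountably many vectors, two vectors that induce the same linear ordering $P_w$ yield exactly the same family of subsets definable by intervals, so the range space $(\gee_i\cap P_w,\mathcal{I}_w)$ is unchanged. Using the dual-line (in general, dual-hyperplane) arrangement of $P$, there are only $O(n^d)$ combinatorially distinct orderings. Applying the union bound over these $O(n^d)$ orderings and the $k<n$ groups gives total failure probability
\[
k\cdot O(n^d)\cdot \frac{1}{c'n^{d+2}} \;\leq\; \frac{1}{2n},
\]
after absorbing the $O(1)$ factor into the choice of $c'$, which is exactly the conclusion of the lemma.

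The only real obstacle is the ``for every $w\in\Re^d$'' clause, since a naive union bound over a continuum fails; the resolution is the combinatorial reduction in the previous paragraph, which reduces an uncountable quantifier to an $O(n^d)$-size quantifier that can be absorbed into the polynomial failure probability $1/n^{d+2}$ produced by the sample size. Everything else is bookkeeping on the sampling constants.
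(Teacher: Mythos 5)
Your proposal is correct and follows essentially the same route as the paper: a per-$(w,i)$ application of the standard $\gamma$-approximation sampling bound with accuracy $\gamma/(2m)$ and failure probability $1/n^{d+2}$, followed by a union bound over the $k<n$ groups and the $O(n^d)$ combinatorially distinct orderings induced by the dual arrangement. Your explicit justification that the uncountable quantifier over $w\in\Re^d$ collapses to the $O(n^d)$ distinct orderings is a point the paper leaves implicit, but the argument is the same.
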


Let $w^*$ be a vector in $\Re^d$ such that there exists an $\optDisc$-discrepancy partition of $P_{w^*}$.
Let $w'$ be the vector in $\mathcal{W}_A$ that belongs in the same cell of the arrangement of $\Arrang(\Lambda_A)$ with $w^*$, where $\Lambda_A$ is the set of dual hyperplanes of $A$. We notice that the ordering $A_{w'}$ is exactly the same with the ordering $A_{w^*}$.

By definition, there exists a partition of $m$ buckets/intervals on $P_{w^*}$ such that each interval contains $[(1-\optDisc)\frac{|\gee_i|}{m}, (1+\optDisc)\frac{|\gee_i|}{m}]$ items from group $\gee_i$, for every group $\gee_i$.
Let $I_1, I_2, \ldots I_m$ be the $m$ intervals defining the $\optDisc$-discrepancy partition. By the definition of $A$, it holds that $$|\frac{|\gee_i\cap I_j|}{|\gee_i|}-\frac{|A_i\cap I_j|}{|A_i|}|\leq \gamma/(2m)\Leftrightarrow$$
$$\frac{1-\optDisc-\gamma/2}{m}\leq \frac{|A_i\cap I_j|}{|A_i|}|\leq \frac{1+\optDisc+\gamma/2}{m}$$
for every $i\in[1,k]$ and $j\in[1,m]$ with probability at least $1-1/n^{d}$.
Since, the ordering of $A_{w*}$ is the same with the ordering of $A_{w'}$ it also holds that there are $I_1', \ldots, I_m'$ on the line supporting $w'$ such that 
$$\frac{1-\optDisc-\gamma/2}{m}\leq \frac{|A_i\cap I_j'|}{|A_i|}|\leq \frac{1+\optDisc+\gamma/2}{m}$$
with probability at least $1-1/n^d$.
Let $\alpha\in E$ be the discrepancy we currently check in the binary search such that $\alpha>\optDisc$.
Since there exists a $(\optDisc+\gamma/2)$-discrepancy partition in $A_{w'}$ it also holds that the straightforward dynamic programming algorithm executed with respect to vector $w'$ will return a partition satisfying $[(1-\alpha-\gamma/2)\frac{|A_i|}{m}, (1+\alpha+\gamma/2)\frac{|Ai|}{m}]$ for each $\gee_i$ with probability at least $1-1/n^d$.

Next, we show that any hashmap returned by our algorithm satisfies $(1+\delta)\optDisc+\gamma$ discrepancy, with high probability. 
Let $\alpha\in E$ be the discrepancy in the current iteration of the binary search such that $\alpha\in [\optDisc, (1+\delta)\optDisc]$ (there is always such $\alpha$ in $E$).
Let $\bar{w}\in\mathcal{W}_A$ be the vector such that our algorithm returns a valid partition with respect to $A_{\bar{w}}$ (as shown previously). Let $\bar{I}_1,\ldots, \bar{I}_m$ be the intervals on the returned valid partition defined on the line supporting $\bar{w}$ such that
\begin{equation}
 \label{eq:AAnew}
    (1-\alpha-\gamma/2)\frac{|A_i|}{m}\leq |\bar{I}_j\cap A_i|\leq (1+\alpha+\gamma/2)\frac{|A_i|}{m},
\end{equation}
for each $\gee_i$ and $j\in[1,m]$. Using Lemma~\ref{lem:approxAnew}, we get that 
\begin{equation}
 \label{eq:BBnew}
    |\frac{|\gee_i\cap \bar{I}_j|}{|\gee_i|}-\frac{|A_i\cap \bar{I}_j|}{|A_i|}|\leq \gamma/(2m)
\end{equation}
for every $i\in[1,k]$ and every $j\in[1,m]$
with probability at least $1-1/(2n)$. From Equation~\eqref{eq:AAnew} and Equation~\eqref{eq:BBnew} it follows that
$$\frac{1-(1+\delta)\optDisc-\gamma}{m}\leq \frac{|\gee_i\cap \bar{I}_j|}{|\gee_i|}\leq \frac{1+(1+\delta)\optDisc+\gamma}{m}.$$
for every $i\in[1,k]$ and $j\in[1,m]$ with probability at least $1-1/n$. The correctness of the algorithm follows.

We construct $A$ in $O(\frac{km^2}{\gamma^2}\log n)$ time. Then the binary search runs for $O(\log \frac{\log n}{\delta})$ rounds. In each round of the binary search we spend $O(\frac{k^dm^{2d}}{\gamma^{2d}}\log^{d+1} n)$ time to construct $\Arrang(\Lambda_A)$. The straightforward dynamic programming algorithm takes $O(\frac{k^2m^5}{\gamma^{4}}\log^2 n)$ time. Overall, the algorithm runs in $O(n+\log(\frac{\log n}{\delta})\frac{k^{d+2}m^{2d+5}}{\gamma^{2d+4}}\log^{d+2} n)$ time.

\begin{theorem}
Let $P$ be a set of $n$ tuples in $\Re^d$ and parameters $\delta, \gamma$. There exists an $O(n+\log(\frac{\log n}{\delta})\frac{k^{d+2}m^{2d+5}}{\gamma^{2d+4}}\log^{d+2} n)$ time algorithm such that, 
with probability at least $1-\frac{1}{n}$ it returns an $((1+\delta)\optDisc+\gamma,1)$-hashmap $\hashmap$ with collision probability at most $\frac{1+(1+\delta)\optDisc+\gamma}{m}$
and single fairness in the range $[\frac{1-(1+\delta)\optDisc-\gamma}{m},\frac{1+(1+\delta)\optDisc+\gamma}{m}]$.
\end{theorem}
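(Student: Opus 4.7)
The plan is to assemble the three ingredients already developed in the paragraphs preceding the theorem: (i) a random sample $A_i \subseteq \gee_i$ of each group of size $O(m^2 \gamma^{-2} \log n^{d+2})$, which by standard $\varepsilon$-approximation theory is a $\gamma/(2m)$-approximation of every range space $(\gee_i,\mathcal{I}_w)$ simultaneously with high probability, (ii) a binary search over the discretized grid $E = \{0, 1/n, (1+\delta)/n, \ldots, 1\}$ of candidate discrepancy values, and (iii) a straightforward dynamic program that, for each combinatorially distinct vector $w \in \mathcal{W}_A$, checks whether there is a partition of $A_w$ into $m$ intervals each containing between $(1-\alpha-\gamma/2)|A_i|/m$ and $(1+\alpha+\gamma/2)|A_i|/m$ points of every group.

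\textbf{Correctness.} I would prove this in two directions. For the forward direction, fix the vector $w^*$ realizing $\optDisc$ and the corresponding intervals $I_1,\ldots,I_m$. Let $w' \in \mathcal{W}_A$ be any vector in the same cell of the dual arrangement $\Arrang(\Lambda_A)$ as $w^*$, so that the orderings $A_{w'}$ and $A_{w^*}$ coincide. The $\gamma/(2m)$-approximation bound transfers the original $\optDisc$-discrepancy partition into an $(\optDisc + \gamma/2)$-discrepancy partition of $A_{w'}$; consequently, for every $\alpha \in E$ with $\alpha \ge \optDisc$ the DP must succeed on some vector of $\mathcal{W}_A$. For the reverse direction, whenever the DP returns a partition with intervals $\bar I_1,\ldots,\bar I_m$ at level $\alpha \in [\optDisc,(1+\delta)\optDisc]$, applying the approximation bound a second time (now from sample back to $\gee_i$) gives
\[
\frac{1-(1+\delta)\optDisc-\gamma}{m} \;\le\; \frac{|\gee_i \cap \bar I_j|}{|\gee_i|} \;\le\; \frac{1+(1+\delta)\optDisc+\gamma}{m}
\]
for all $i,j$, which is exactly the claimed discrepancy. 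The bounds on collision probability and single fairness then follow immediately from Lemma~\ref{lem:discrProbs} applied with discrepancy parameter $(1+\delta)\optDisc+\gamma$. The failure probability is controlled by a union bound: the sample must simultaneously be a $\gamma/(2m)$-approximation over all $O(n^d)$ combinatorially distinct vectors and all $k < n$ groups, which contributes a factor $kn^d \le n^{d+1}$ absorbed into the $1/n^{d+2}$ per-event bound, yielding overall success probability $\ge 1 - 1/n$.

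\textbf{Running time.} I would account for each stage separately. Drawing $A = \bigcup_i A_i$ costs $O(n + km^2\gamma^{-2}\log n)$ because one pass over $P$ suffices once sampling probabilities are fixed. The binary search runs for $O(\log(\log n / \delta))$ rounds since $|E| = O(\delta^{-1} \log n)$. In each round, constructing the arrangement of the $|A|$ dual hyperplanes takes $O(|A|^d \log |A|) = O(k^d m^{2d} \gamma^{-2d} \log^{d+1} n)$ time, and the DP over $|A|$ items and $m$ buckets on each of the $O(|A|^d)$ orderings costs an additional $O(k^2 m^5 \gamma^{-4} \log^2 n)$. Summing and simplifying gives the claimed $O(n + \log(\log n/\delta)\, k^{d+2} m^{2d+5} \gamma^{-2d-4} \log^{d+2} n)$ bound.

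\textbf{Main obstacle.} The delicate step is the simultaneous $\gamma/(2m)$-approximation across all combinatorially distinct vectors: one must choose the sample size so that a single union bound covers all $O(n^d)$ orderings \emph{and} all $k$ groups, yet not oversample enough to spoil the intended running time. A secondary subtlety is justifying that the DP applied only to vectors in $\mathcal{W}_A$ (not $\mathcal{W}_P$) suffices, which is exactly where the cell-containment argument placing $w^*$ and $w'$ in the same cell of $\Arrang(\Lambda_A)$ becomes essential. Once these two points are handled, the rest is routine assembly of the previously established bounds.
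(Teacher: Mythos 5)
Your proposal is correct and follows essentially the same route as the paper's own argument: the same $\gamma/(2m)$-approximation sample with a union bound over all $k$ groups and $O(n^d)$ combinatorially distinct orderings, the same binary search over the discretized grid $E$ with a feasibility DP on $A_w$, the same cell-containment argument placing $w^*$ and $w'$ in one cell of $\Arrang(\Lambda_A)$ for the forward direction, the same double application of the approximation bound for the reverse direction, and the same per-stage running-time accounting. No substantive differences to report.
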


\subsection{Cut-based}
We use a direct implementation of the necklace splitting algorithm proposed in~\cite{alon2021efficient} for $k>2$. Even though the authors argue that their algorithm runs in polynomial time, they do not provide an exact running time analysis.
We slightly modify their technique to work in our setting providing theoretical guarantees on the collision probability, single fairness, pairwise fairness, and pre-processing construction time. Skipping the details we give the next result.

\begin{theorem}
\label{thm:neckSplitk2}
In the non-binary demographic group cases, there exists an algorithm that finds a $\big(\eps,k(4+\log \frac{1}{\eps})\big)$-hashmap with collision probability within $[\frac{1}{m},\frac{1+\eps}{m}]$ and single fairness within $[\frac{1-\eps}{m}, \frac{1+\eps}{m}]$ in $O(mk^3\log\frac{1}{\eps}+knm(n+m))$ time. If $\eps=\frac{1}{3nm}$, the algorithm finds a $\big(0,k(4+\log n)\big)$-hashmap satisfying the collision probability and the single fairness in $O\big(mk^3\log n+knm(n+m)\big)$ time.
\end{theorem}
\pagebreak
\section{Extended Experiment Results}
\subsection{Datasets}
\noindent{\bf\adult} ~\cite{misc_adult_2} is a commonly used dataset in the fairness literature \cite{bellamy2018ai}, with column {\tt sex=\{male, female\}} as the sensitive attribute and {\tt fnlwgt} and {\tt education-num} columns with continuous real and discrete ordinal values for ordering the tuples. \adult dataset includes 15 census data related attributes with $\sim$49000 records describing people and is primarily used to predict whether an individual's income exceeds \$50K per year.

\noindent{\bf\compas} ~\cite{COMPAS} is another benchmark dataset widely used in the fairness literature. This dataset includes $\sim$61000 records of defendants from Broward County from 2013 and 2014 with 29 attributes including demographics and criminal history. \compas was originally used to predict the likelihood of re-offense by the criminal defendants. In our experiments, we picked {\tt sex=\{male, female\}} and {\tt race=\{White, Black, Hispanic, other\}} as the sensitive attributes and {\tt Person\_ID} and {\tt Raw\_Score} with discrete ordinal and continuous real values as the columns used for building the hashmap.

\noindent{\bf\diabetes} ~\cite{strack2014impact} is a dataset of $\sim$102K records of patients with diabetes collected over 10 years from 130 US hospitals and integrated delivery networks. \diabetes has 49 attributes from which we chose {\tt sex=\{male, female\}} as the sensitive attribute and {\tt encounter\_id} and {\tt patient\_nbr} columns with discrete ordinal values for constructing the hashmap.

\noindent{\bf\popsim} ~\cite{khanh2023popsim} is a semi-synthetic dataset used to simulate individual-level data with demographic information for the city of Chicago. This dataset is available in various sizes and in our experiments we use the variant with 1M entries. It has 5 attributes among which we use {\tt race=\{White, Black\} as the sensitive attribute and {\tt latitude} and {\tt longitude}} columns with continuous real values for building the hashmap.
\subsection{Non-binary Sensitive Attribute Evaluation}
We repeated our experiments for \ranker and \pd algorithms using the \compas dataset, with the non-binary sensitive attribute chosen as {\tt race}. The results, as illustrated in Figures~\ref{fig:compas_non_binary_n_vs_unfairness}-\ref{fig:compas_non_binary_n_vs_query_time}, align with the findings from our earlier experiments. Consistent with our expectations, our algorithms exhibit independence from the number of demographic groups and seamlessly extend to non-binary sensitive attributes without compromising efficiency or memory requirements. 

\begin{figure*}[!tbh]
    \centering
    \includegraphics[width=0.6\textwidth]{plots/legend.png}
    \vspace{-3mm}
\end{figure*}
\begin{figure*}[!tb]
\begin{minipage}[t]{\linewidth}
    \begin{subfigure}[t]{0.24\textwidth}
        \centering
        \includegraphics[width=\textwidth]{plots/adult/adult_unfairness_varying_size.pdf}
        \caption{\adult}
    \end{subfigure}
    \hfill
    \begin{subfigure}[t]{0.24\textwidth}
        \centering
        \includegraphics[width=\textwidth]{plots/compas/compas_unfairness_varying_size.pdf}
        \caption{\compas}
    \end{subfigure}
    \hfill
    \begin{subfigure}[t]{0.24\textwidth}
        \centering
        \includegraphics[width=\textwidth]{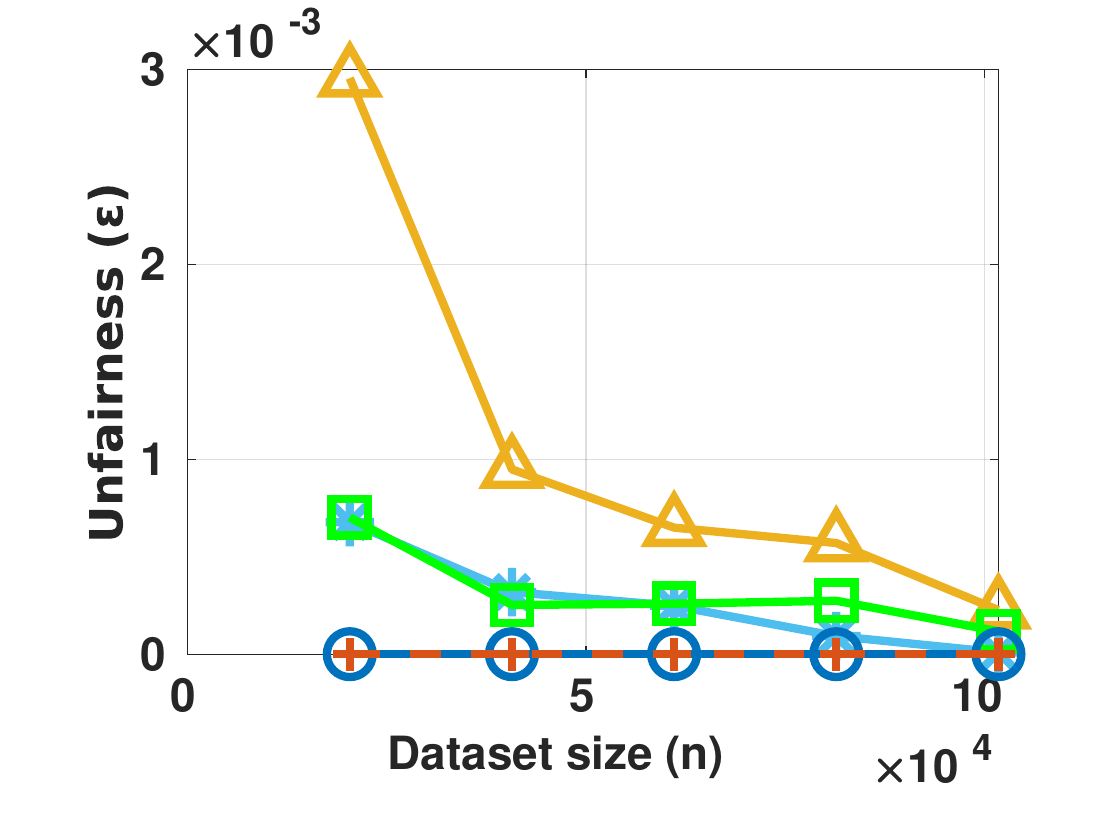}
        \caption{\diabetes}
    \end{subfigure}
    \hfill
    \begin{subfigure}[t]{0.24\textwidth}
        \centering
        \includegraphics[width=\textwidth]{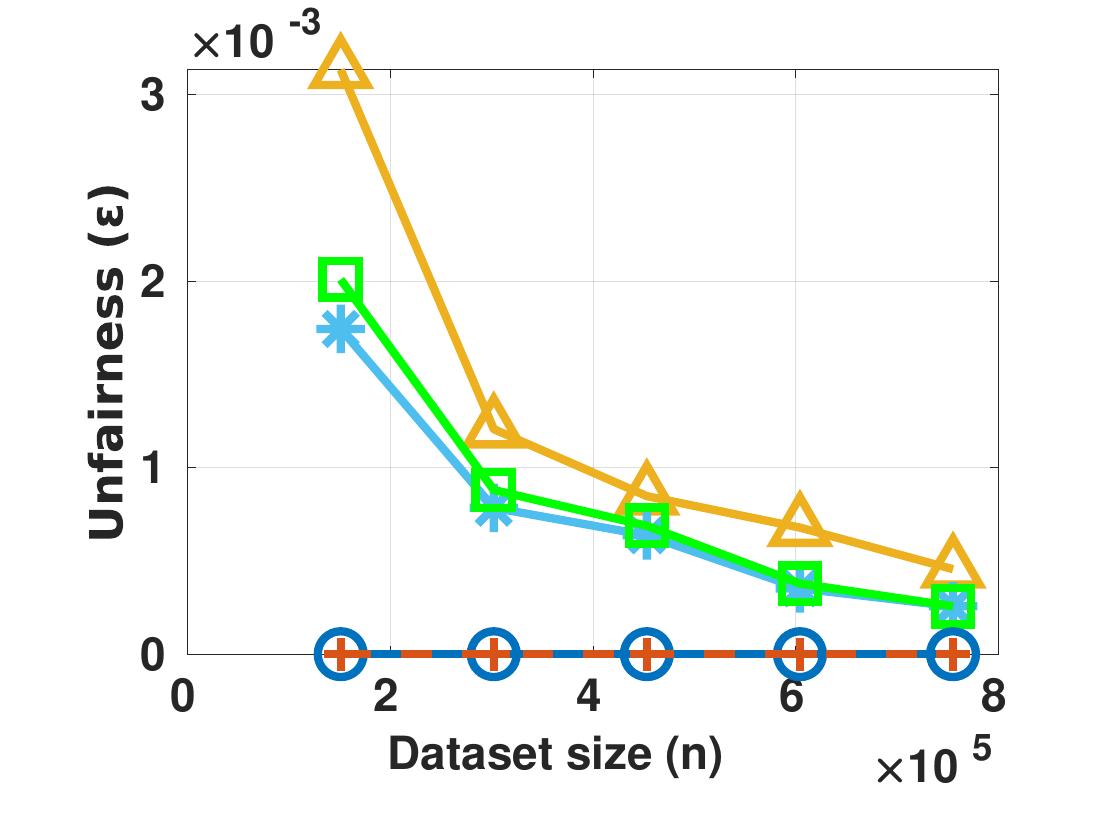}
        \caption{\popsim}
    \end{subfigure}
\caption{Effect of varying dataset size $n$ on unfairness}
\end{minipage}
\end{figure*}

\begin{figure*}[!tb]
\begin{minipage}[t]{\linewidth}
    \begin{subfigure}[t]{0.24\textwidth}
        \centering
        \includegraphics[width=\textwidth]{plots/adult/adult_unfairness_varying_ratio.pdf}
        \caption{\adult}
    \end{subfigure}
    \hfill
    \begin{subfigure}[t]{0.24\textwidth}
        \centering
        \includegraphics[width=\textwidth]{plots/compas/compas_unfairness_varying_ratio.pdf}
        \caption{\compas}
    \end{subfigure}
    \hfill
    \begin{subfigure}[t]{0.24\textwidth}
        \centering
        \includegraphics[width=\textwidth]{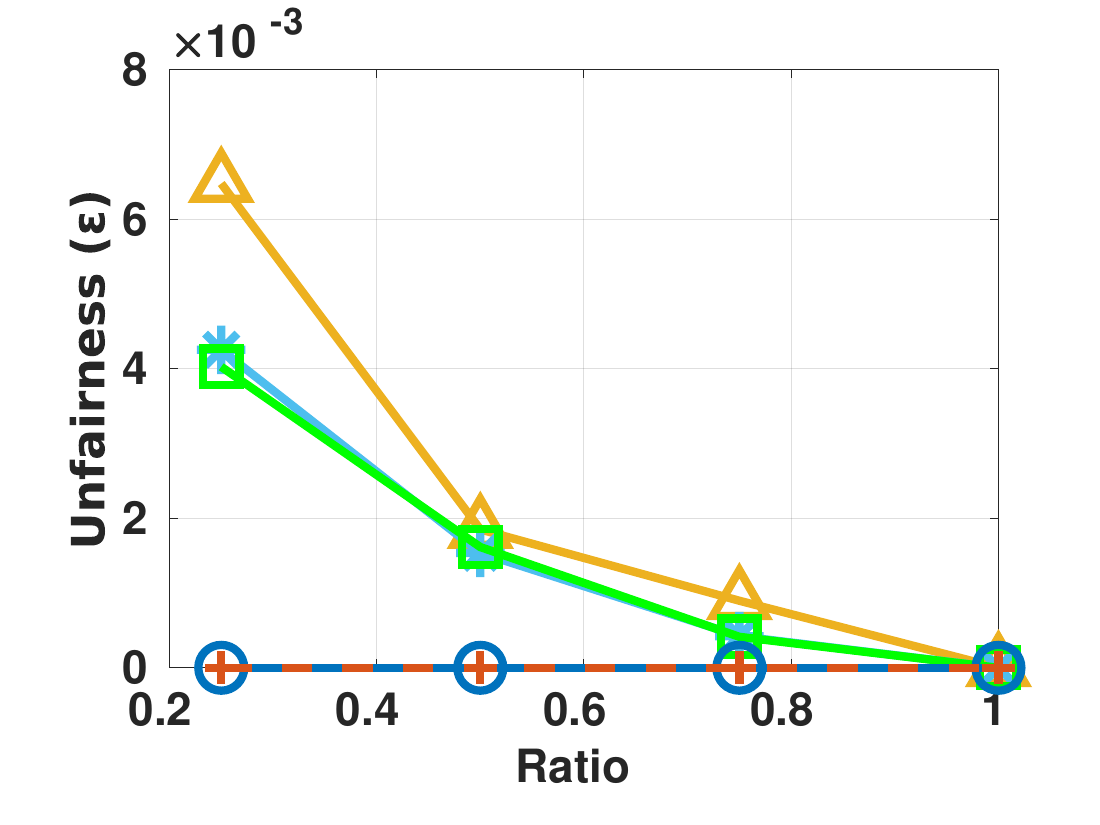}
        \caption{\diabetes}
    \end{subfigure}
    \hfill
    \begin{subfigure}[t]{0.24\textwidth}
        \centering
        \includegraphics[width=\textwidth]{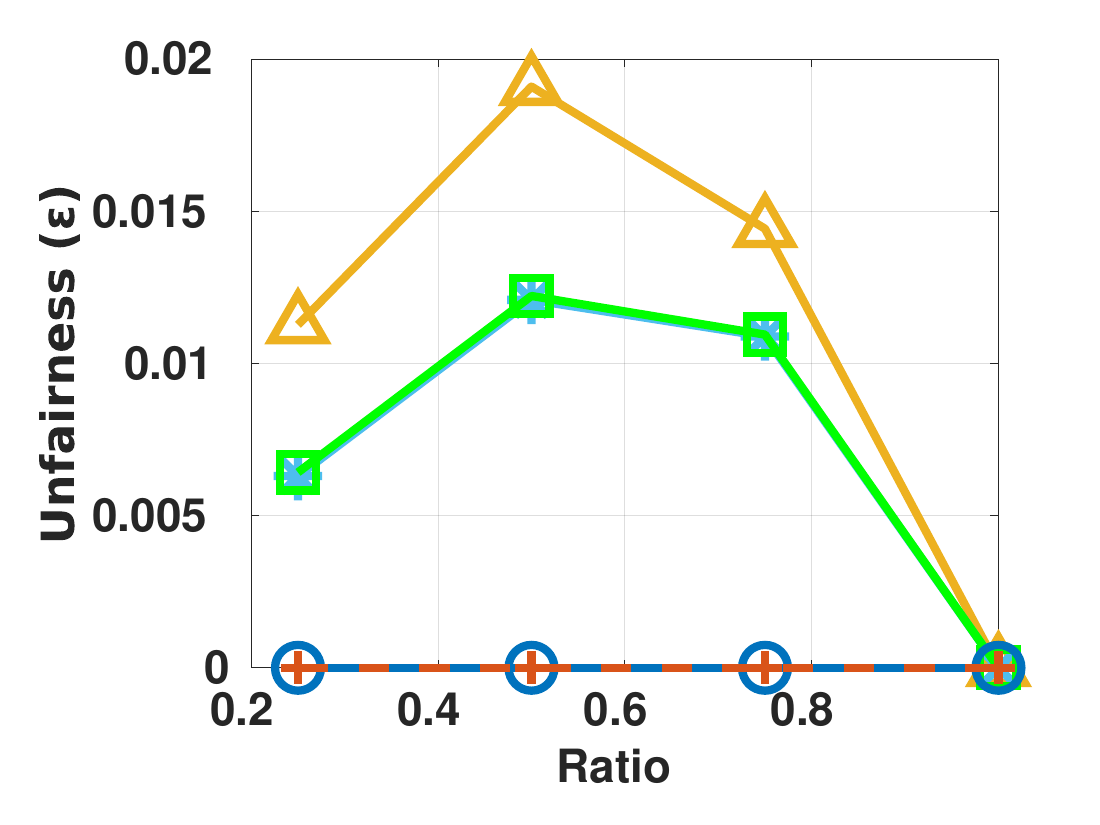}
        \caption{\popsim}
    \end{subfigure}
    \caption{Effect of varying majority to minority ratio on unfairness}
\end{minipage}
\end{figure*}

\begin{figure*}[!tb]
\begin{minipage}[t]{\linewidth}
    \begin{subfigure}[t]{0.24\textwidth}
        \centering
        \includegraphics[width=\textwidth]{plots/adult/adult_unfairness_varying_number_of_buckets.pdf}
        \caption{\adult}
    \end{subfigure}
    \hfill
    \begin{subfigure}[t]{0.24\textwidth}
        \centering
        \includegraphics[width=\textwidth]{plots/compas/compas_unfairness_varying_number_of_buckets.pdf}
        \caption{\compas}
    \end{subfigure}
    \hfill
    \begin{subfigure}[t]{0.24\textwidth}
        \centering
        \includegraphics[width=\textwidth]{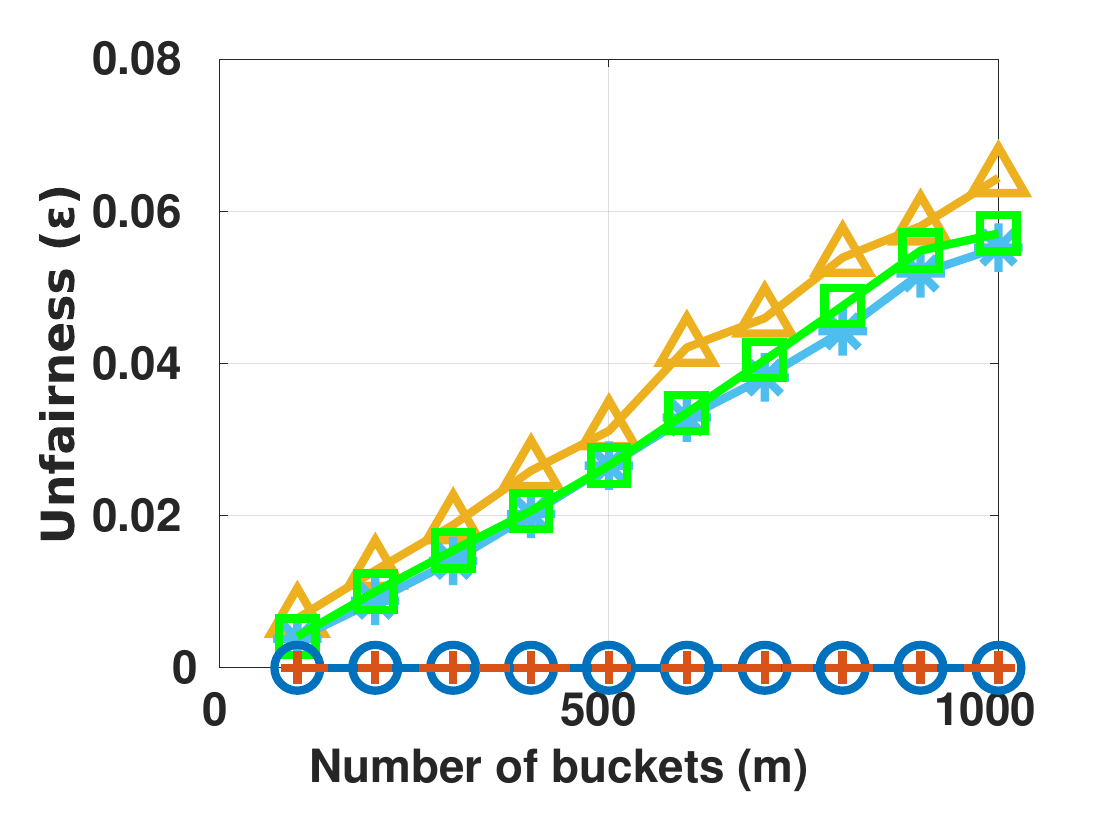}
        \caption{\diabetes}
    \end{subfigure}
    \hfill
    \begin{subfigure}[t]{0.24\textwidth}
        \centering
        \includegraphics[width=\textwidth]{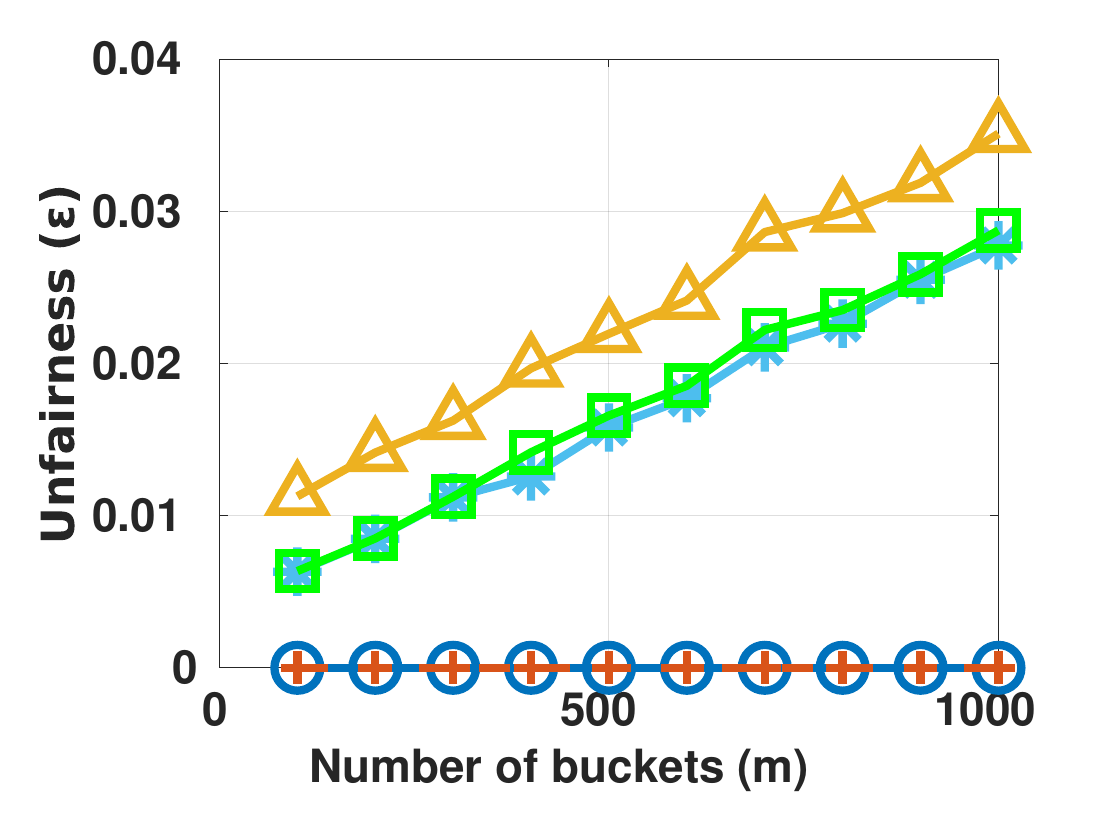}
        \caption{\popsim}
    \end{subfigure}
    \caption{Effect of varying number of buckets $m$ on unfairness}
\end{minipage}
\end{figure*}
\begin{figure*}[!tb]
\begin{minipage}[t]{\linewidth}
    \begin{subfigure}[t]{0.24\textwidth}
        \centering
        \includegraphics[width=\textwidth]{plots/adult/adult_space_varying_size.pdf}
        \caption{\adult}
    \end{subfigure}
    \hfill
    \begin{subfigure}[t]{0.24\textwidth}
        \centering
        \includegraphics[width=\textwidth]{plots/compas/compas_space_varying_size.pdf}
        \caption{\compas}
    \end{subfigure}
    \hfill
    \begin{subfigure}[t]{0.24\textwidth}
        \centering
        \includegraphics[width=\textwidth]{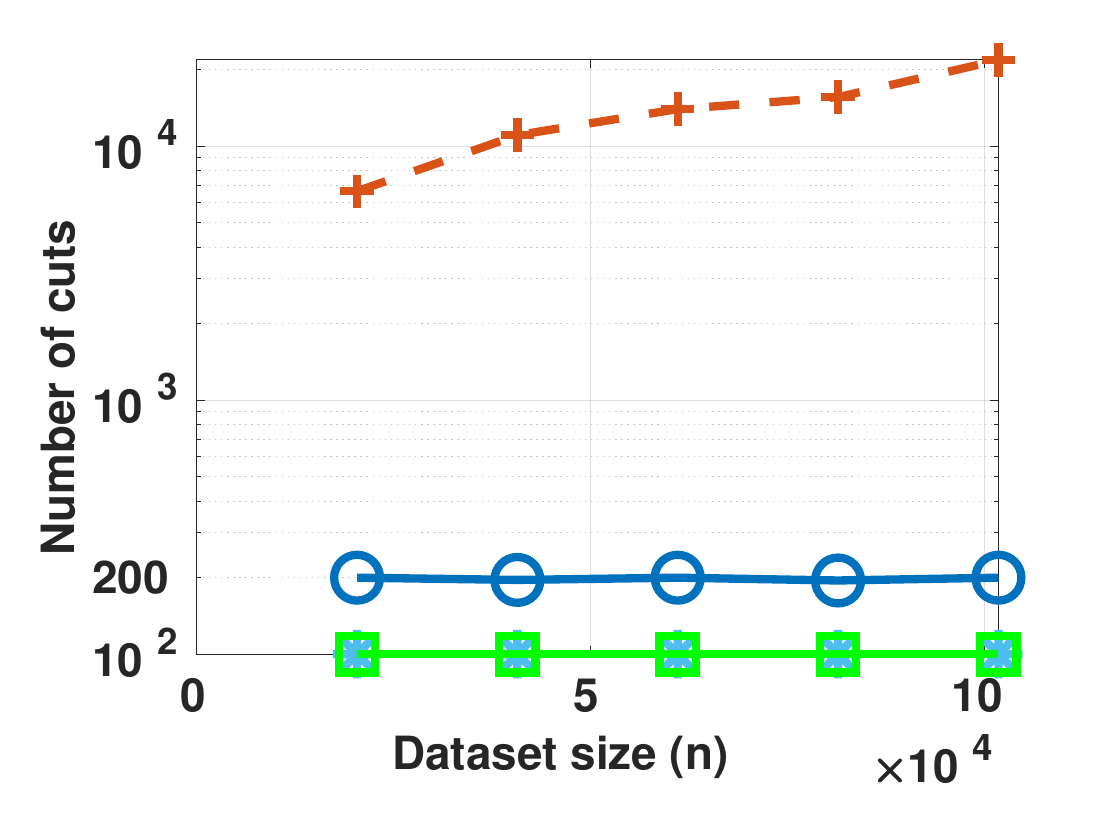}
        \caption{\diabetes}
    \end{subfigure}
    \hfill
    \begin{subfigure}[t]{0.24\textwidth}
        \centering
        \includegraphics[width=\textwidth]{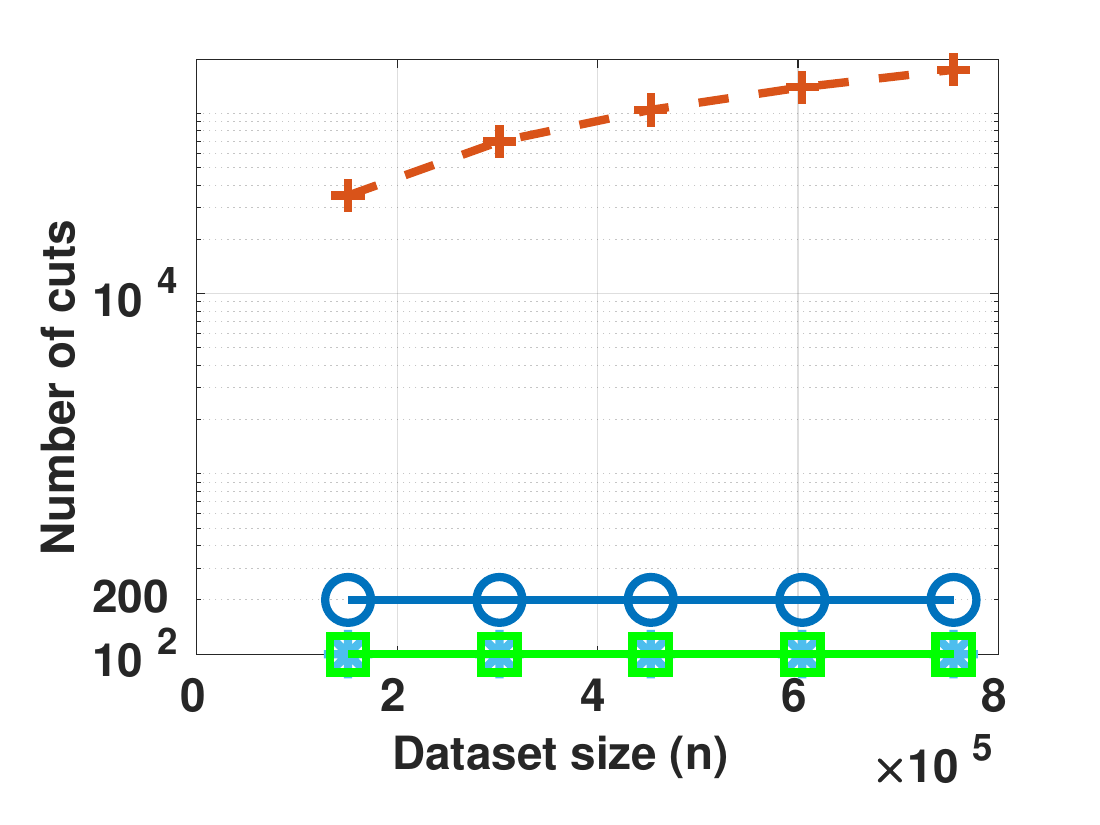}
        \caption{\popsim}
    \end{subfigure}
\caption{Effect of varying dataset size $n$ on space}
\end{minipage}
\end{figure*}

\begin{figure*}[!tb]
\begin{minipage}[t]{\linewidth}
    \begin{subfigure}[t]{0.24\textwidth}
        \centering
        \includegraphics[width=\textwidth]{plots/adult/adult_space_varying_ratio.pdf}
        \caption{\adult}
    \end{subfigure}
    \hfill
    \begin{subfigure}[t]{0.24\textwidth}
        \centering
        \includegraphics[width=\textwidth]{plots/compas/compas_space_varying_ratio.pdf}
        \caption{\compas}
    \end{subfigure}
    \hfill
    \begin{subfigure}[t]{0.24\textwidth}
        \centering
        \includegraphics[width=\textwidth]{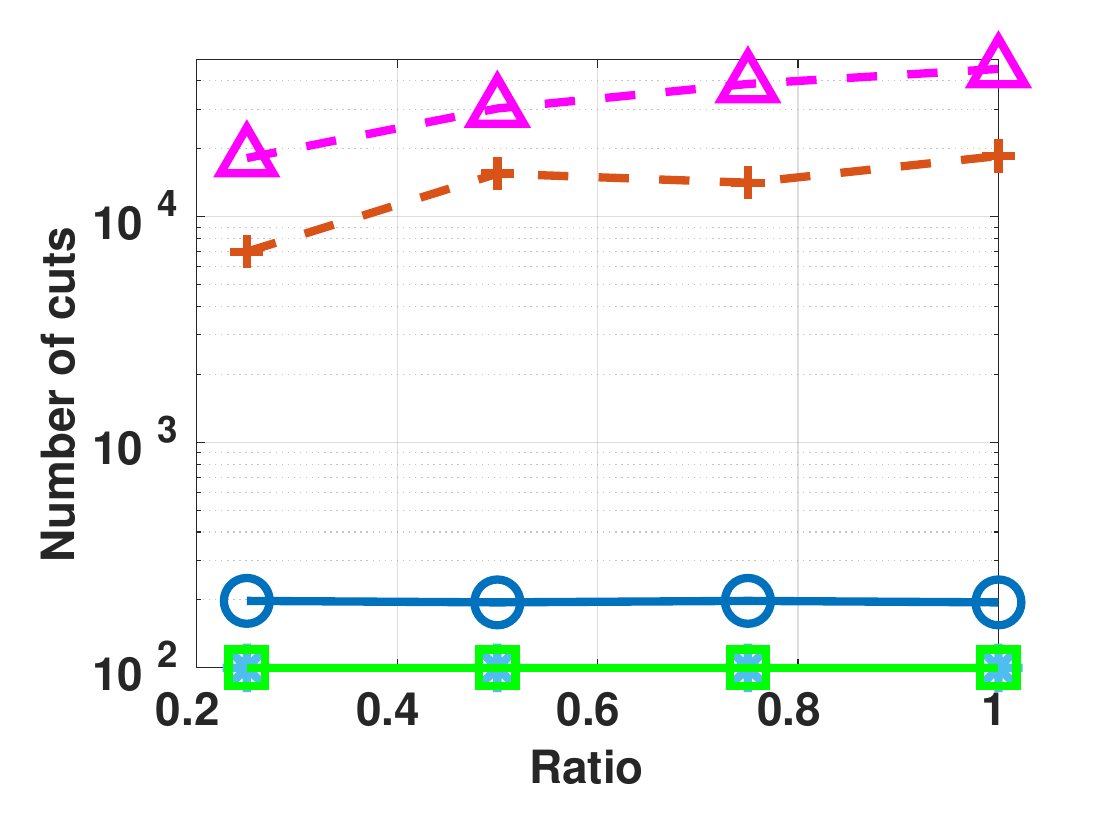}
        \caption{\diabetes}
    \end{subfigure}
    \hfill
    \begin{subfigure}[t]{0.24\textwidth}
        \centering
        \includegraphics[width=\textwidth]{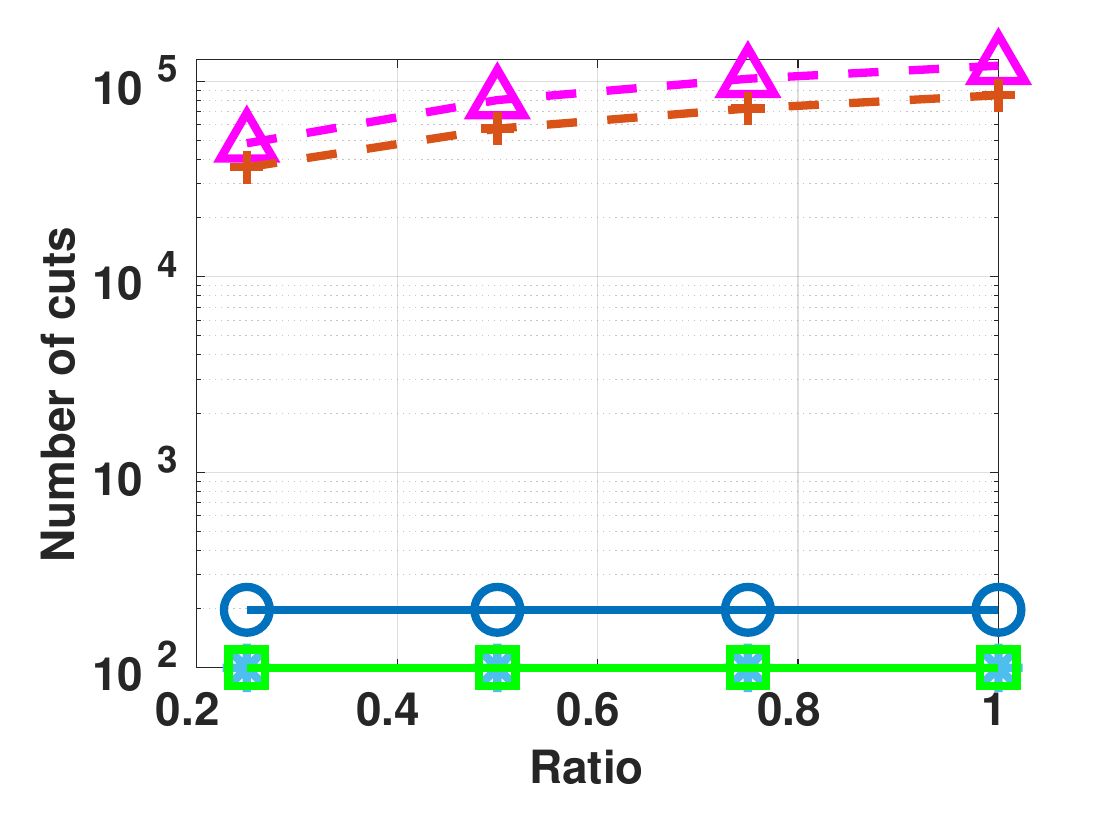}
        \caption{\popsim}
    \end{subfigure}
    \caption{Effect of varying majority to minority ratio on space}
\end{minipage}
\end{figure*}

\begin{figure*}[!tb]
\begin{minipage}[t]{\linewidth}
    \begin{subfigure}[t]{0.24\textwidth}
        \centering
        \includegraphics[width=\textwidth]{plots/adult/adult_space_varying_number_of_buckets.pdf}
        \caption{\adult}
    \end{subfigure}
    \hfill
    \begin{subfigure}[t]{0.24\textwidth}
        \centering
        \includegraphics[width=\textwidth]{plots/compas/compas_space_varying_number_of_buckets.pdf}
        \caption{\compas}
    \end{subfigure}
    \hfill
    \begin{subfigure}[t]{0.24\textwidth}
        \centering
        \includegraphics[width=\textwidth]{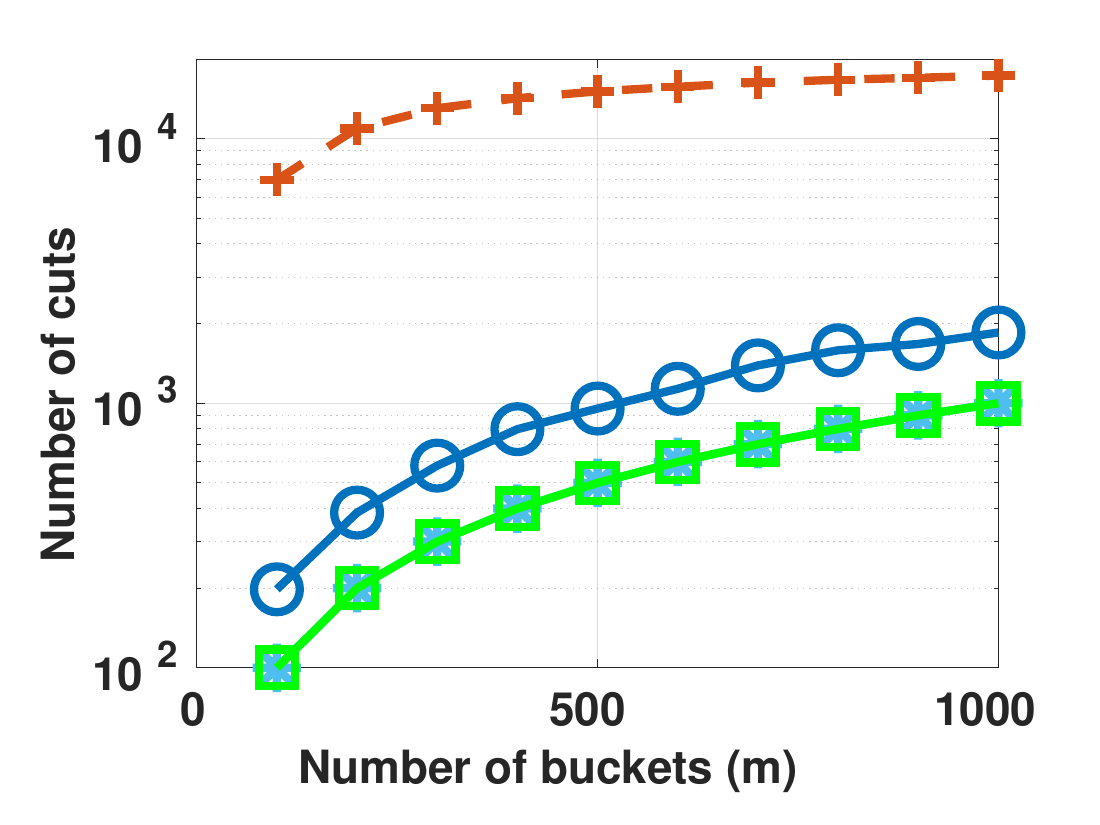}
        \caption{\diabetes}
    \end{subfigure}
    \hfill
    \begin{subfigure}[t]{0.24\textwidth}
        \centering
        \includegraphics[width=\textwidth]{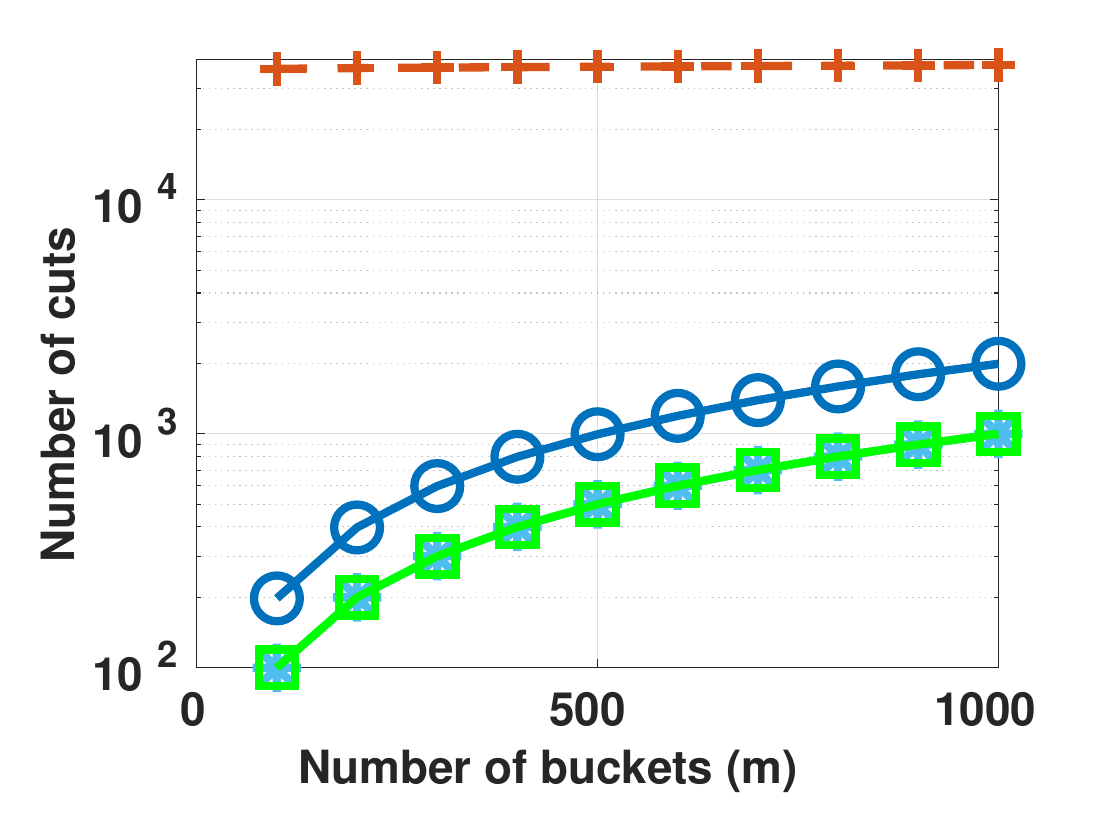}
        \caption{\popsim}
    \end{subfigure}
    \caption{Effect of varying number of buckets $m$ on space}
\end{minipage}
\end{figure*}
\begin{figure*}[!tb]
\begin{minipage}[t]{\linewidth}
    \begin{subfigure}[t]{0.24\textwidth}
        \centering
        \includegraphics[width=\textwidth]{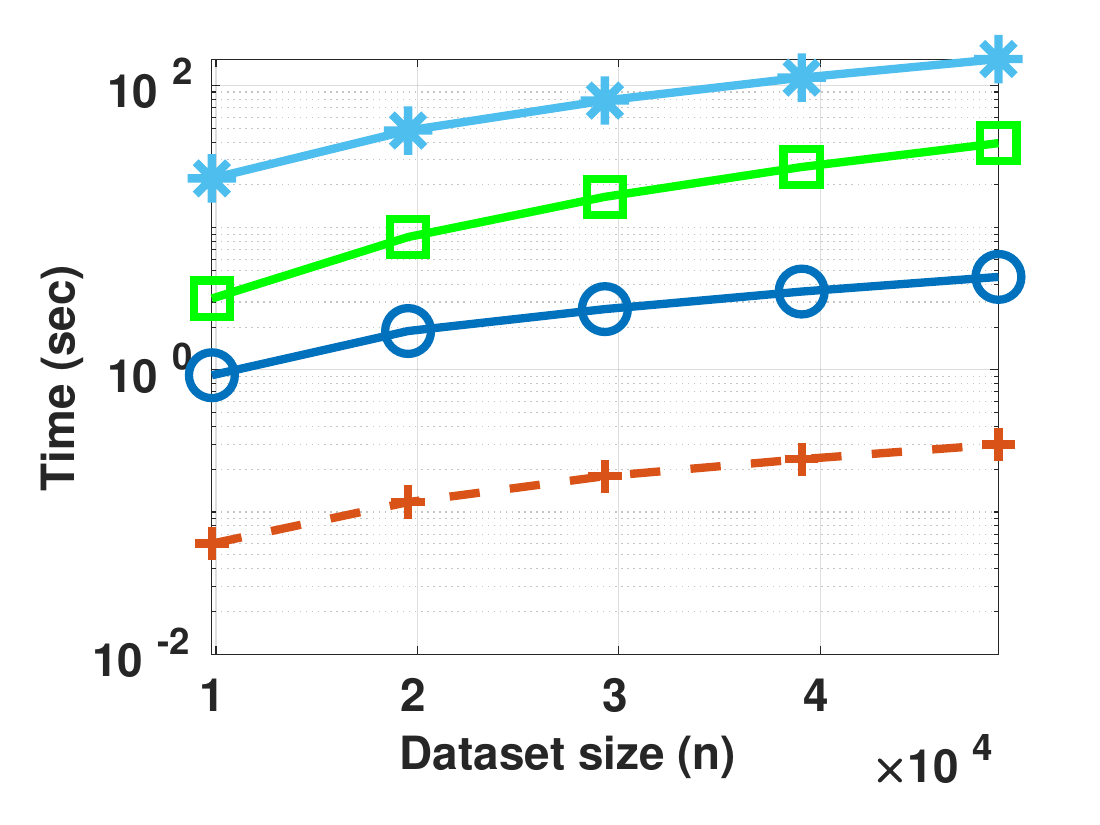}
        \caption{\adult}
    \end{subfigure}
    \hfill
    \begin{subfigure}[t]{0.24\textwidth}
        \centering
        \includegraphics[width=\textwidth]{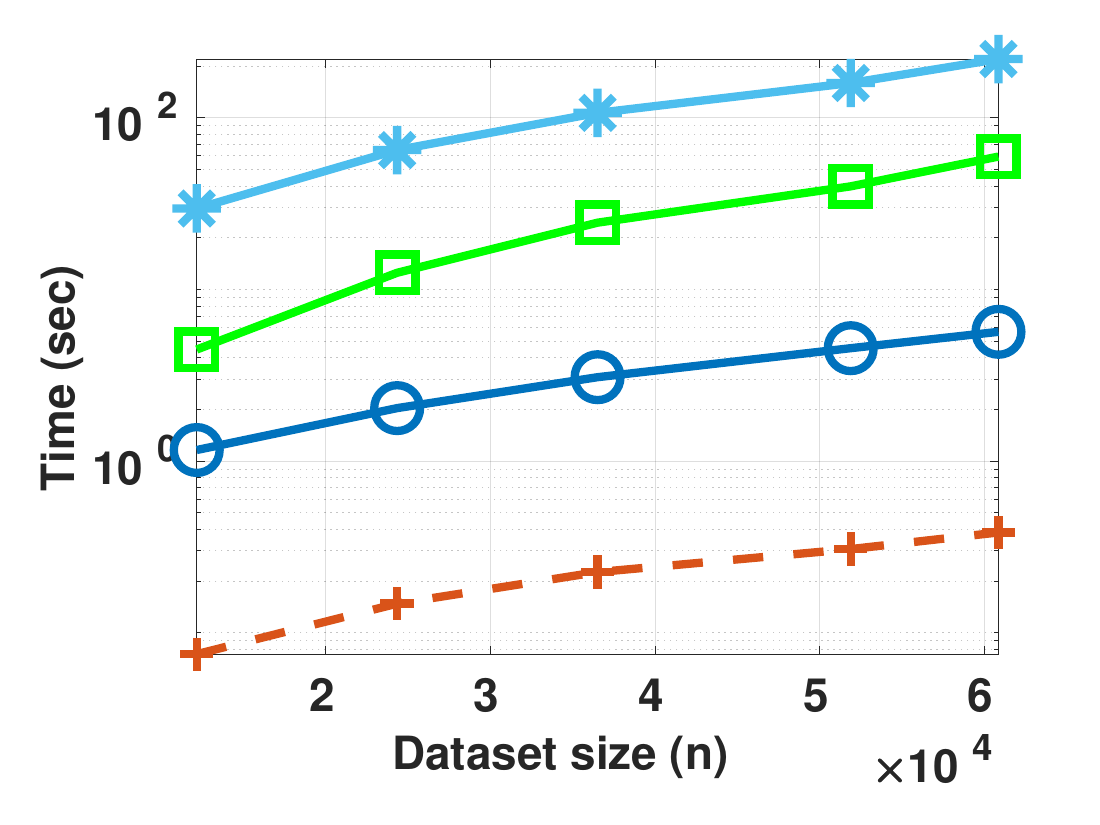}
        \caption{\compas}
    \end{subfigure}
    \hfill
    \begin{subfigure}[t]{0.24\textwidth}
        \centering
        \includegraphics[width=\textwidth]{plots/diabetes/diabetes_prep_time_varying_size.pdf}
        \caption{\diabetes}
    \end{subfigure}
    \hfill
    \begin{subfigure}[t]{0.24\textwidth}
        \centering
        \includegraphics[width=\textwidth]{plots/popsim/popsim_prep_time_varying_size.pdf}
        \caption{\popsim}
    \end{subfigure}
\caption{Effect of varying dataset size $n$ on preprocessing time}
\end{minipage}
\end{figure*}

\begin{figure*}[!tb]
\begin{minipage}[t]{\linewidth}
    \begin{subfigure}[t]{0.24\textwidth}
        \centering
        \includegraphics[width=\textwidth]{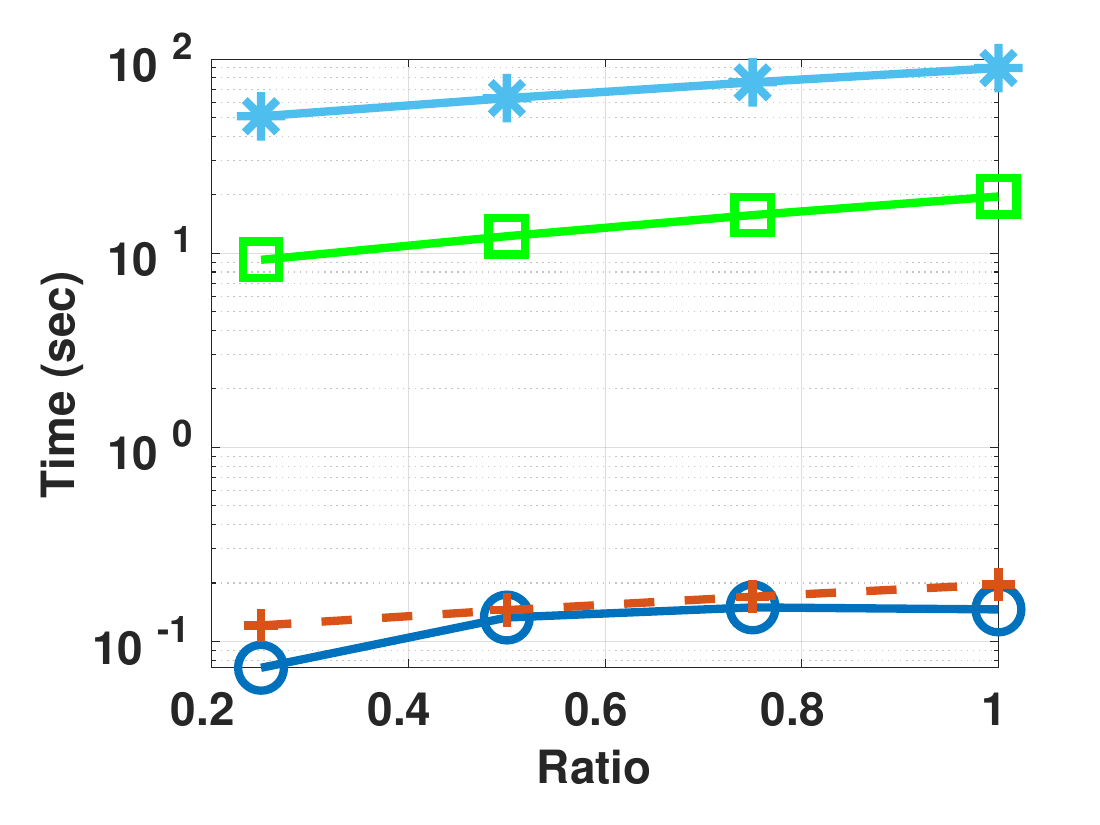}
        \caption{\adult}
    \end{subfigure}
    \hfill
    \begin{subfigure}[t]{0.24\textwidth}
        \centering
        \includegraphics[width=\textwidth]{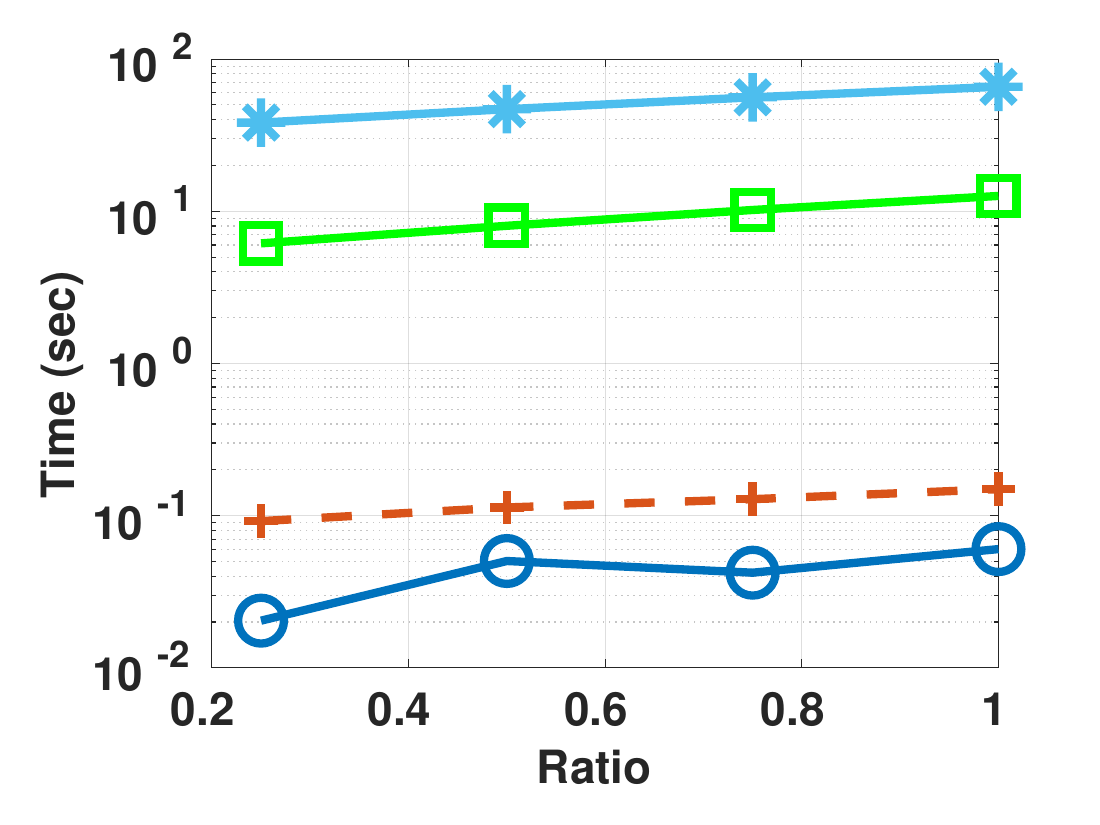}
        \caption{\compas}
    \end{subfigure}
    \hfill
    \begin{subfigure}[t]{0.24\textwidth}
        \centering
        \includegraphics[width=\textwidth]{plots/diabetes/diabetes_prep_time_varying_ratio.pdf}
        \caption{\diabetes}
    \end{subfigure}
    \hfill
    \begin{subfigure}[t]{0.24\textwidth}
        \centering
        \includegraphics[width=\textwidth]{plots/popsim/popsim_prep_time_varying_ratio.pdf}
        \caption{\popsim}
    \end{subfigure}
    \caption{Effect of varying majority to minority ratio on preprocessing time}
\end{minipage}
\end{figure*}

\begin{figure*}[!tb]
\begin{minipage}[t]{\linewidth}
    \begin{subfigure}[t]{0.24\textwidth}
        \centering
        \includegraphics[width=\textwidth]{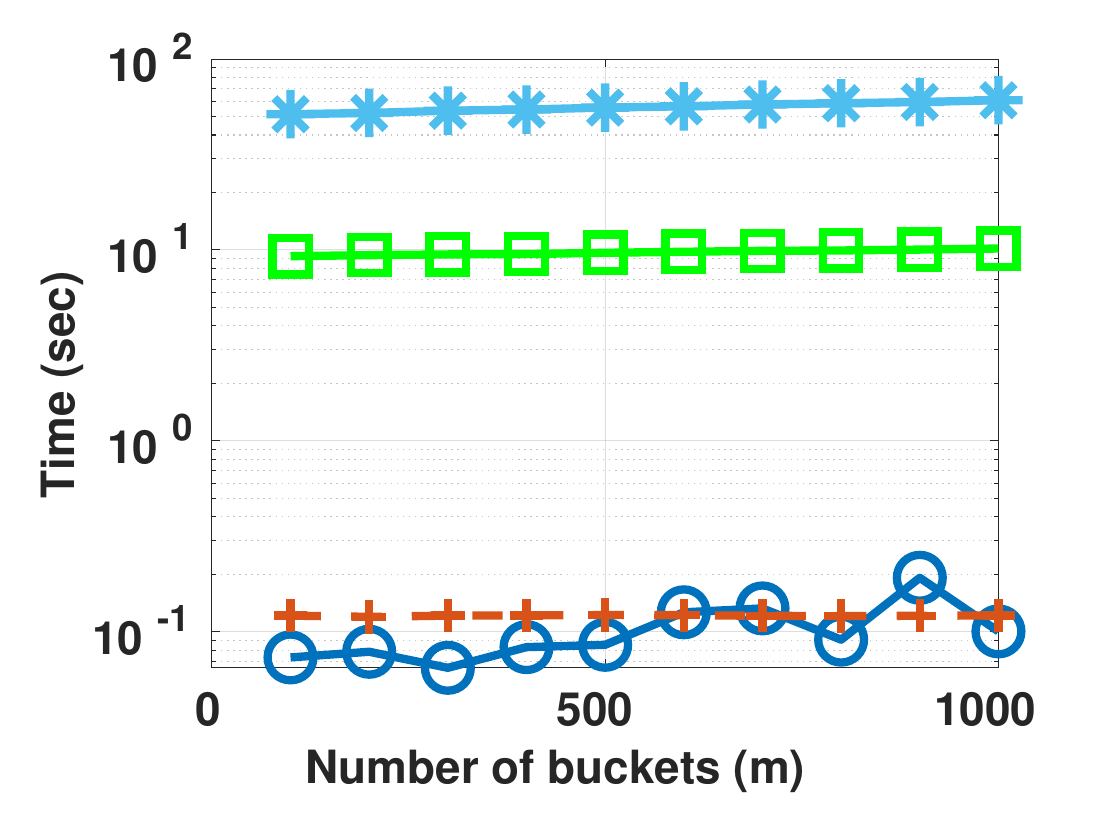}
        \caption{\adult}
    \end{subfigure}
    \hfill
    \begin{subfigure}[t]{0.24\textwidth}
        \centering
        \includegraphics[width=\textwidth]{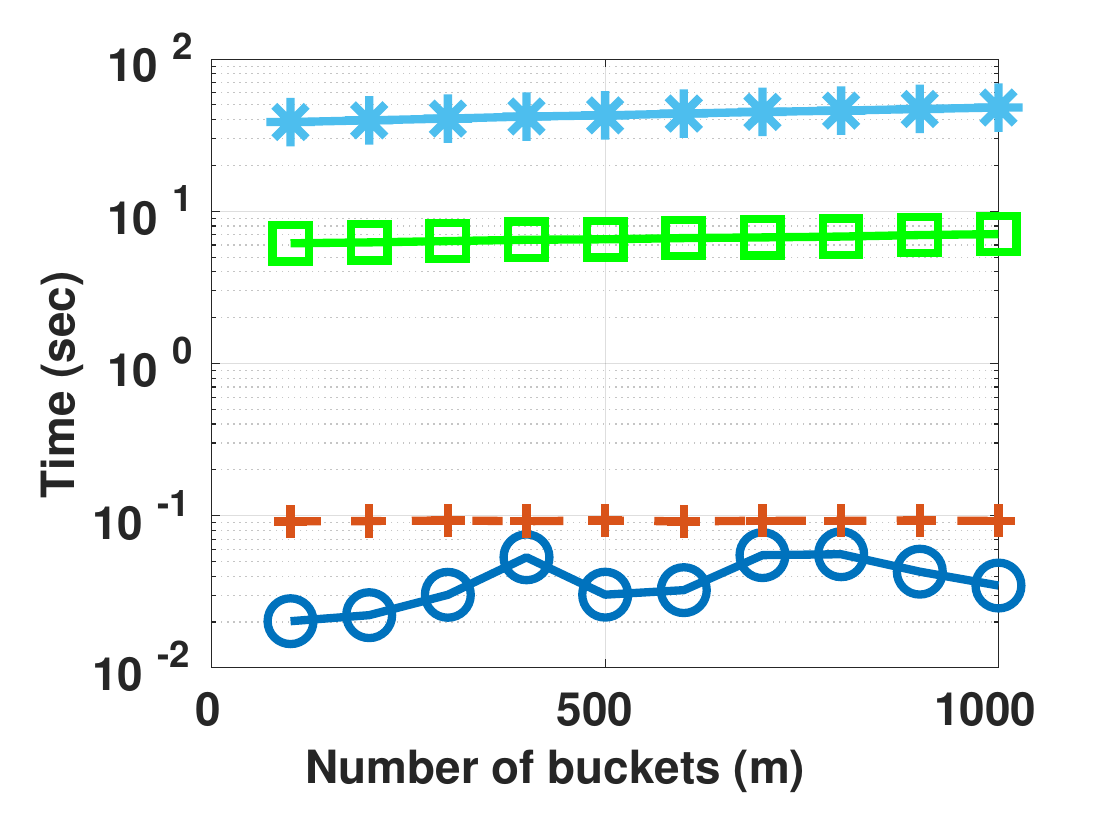}
        \caption{\compas}
    \end{subfigure}
    \hfill
    \begin{subfigure}[t]{0.24\textwidth}
        \centering
        \includegraphics[width=\textwidth]{plots/diabetes/diabetes_prep_time_varying_number_of_buckets.pdf}
        \caption{\diabetes}
    \end{subfigure}
    \hfill
    \begin{subfigure}[t]{0.24\textwidth}
        \centering
        \includegraphics[width=\textwidth]{plots/popsim/popsim_prep_time_varying_number_of_buckets.pdf}
        \caption{\popsim}
    \end{subfigure}
    \caption{Effect of varying number of buckets $m$ on preprocessing time}
\end{minipage}
\end{figure*}

\begin{figure*}[!tb]
\begin{minipage}[t]{\linewidth}
    \begin{subfigure}[t]{0.24\textwidth}
        \centering
        \includegraphics[width=\textwidth]{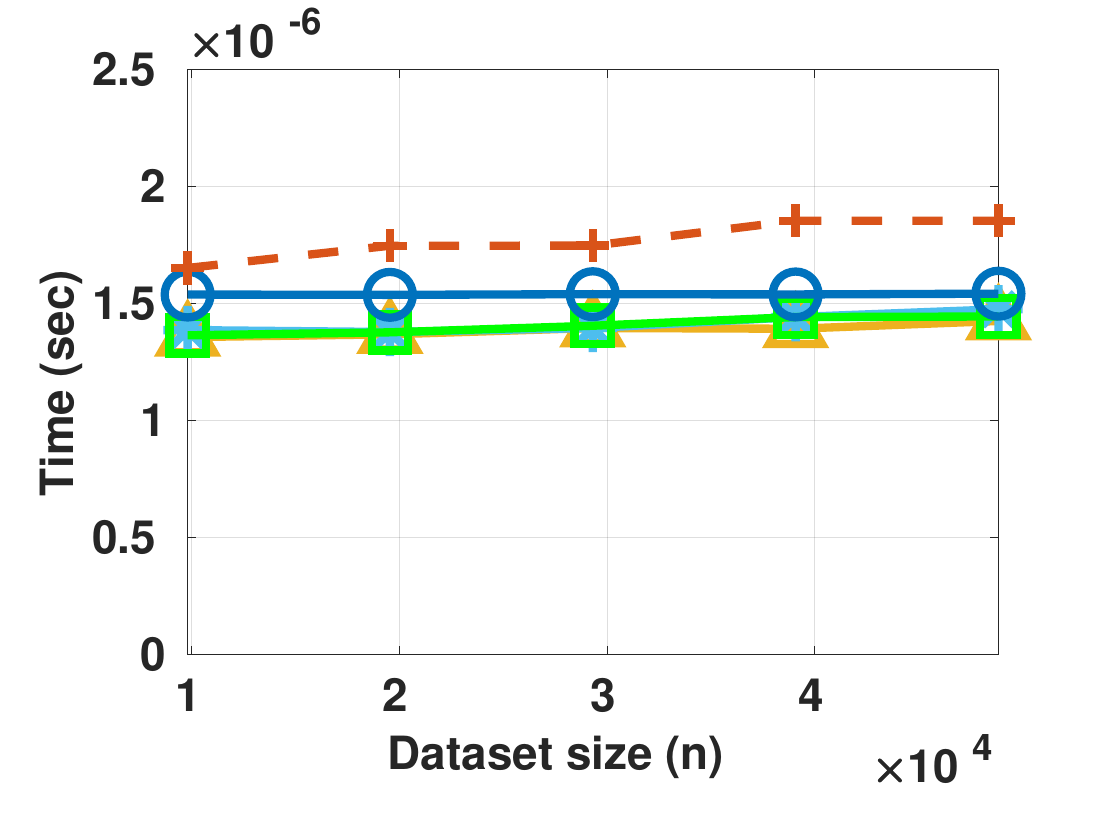}
        \caption{\adult}
    \end{subfigure}
    \hfill
    \begin{subfigure}[t]{0.24\textwidth}
        \centering
        \includegraphics[width=\textwidth]{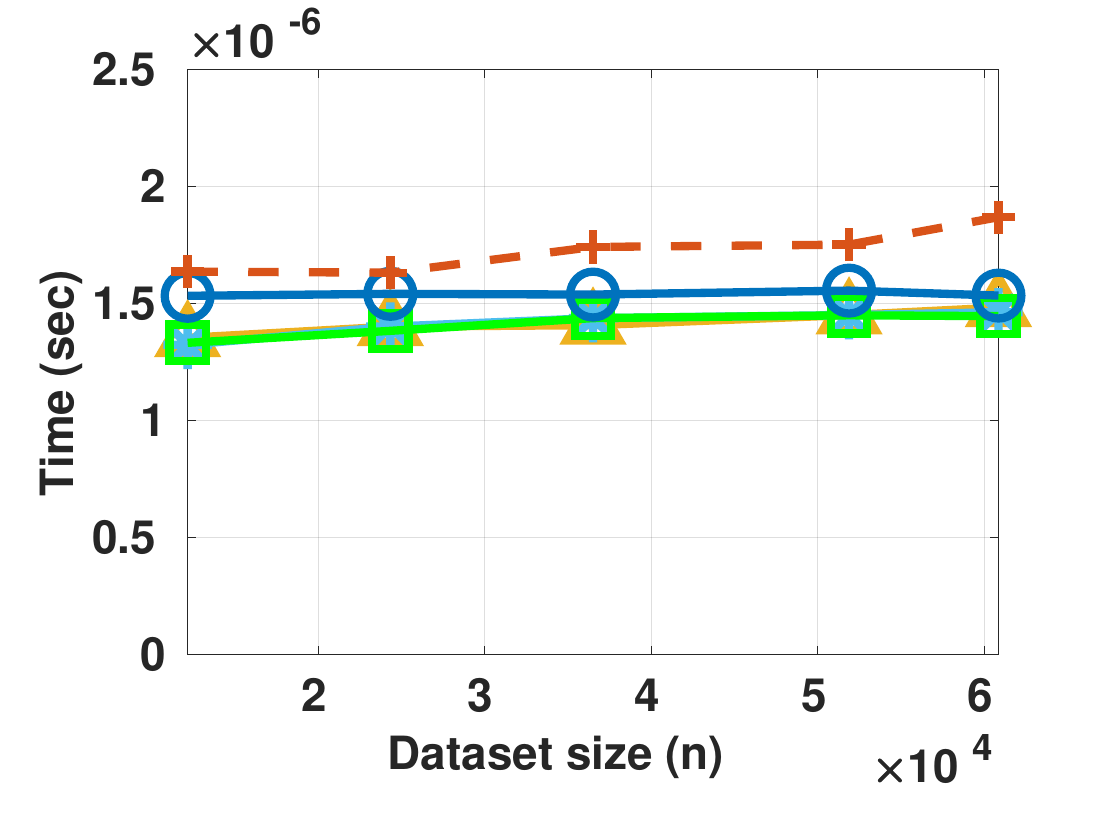}
        \caption{\compas}
    \end{subfigure}
    \hfill
    \begin{subfigure}[t]{0.24\textwidth}
        \centering
        \includegraphics[width=\textwidth]{plots/diabetes/diabetes_query_time_varying_size.pdf}
        \caption{\diabetes}
    \end{subfigure}
    \hfill
    \begin{subfigure}[t]{0.24\textwidth}
        \centering
        \includegraphics[width=\textwidth]{plots/popsim/popsim_query_time_varying_size.pdf}
        \caption{\popsim}
    \end{subfigure}
\caption{Effect of varying dataset size $n$ on query time}
\end{minipage}
\end{figure*}

\begin{figure*}[!tb]
\begin{minipage}[t]{\linewidth}
    \begin{subfigure}[t]{0.24\textwidth}
        \centering
        \includegraphics[width=\textwidth]{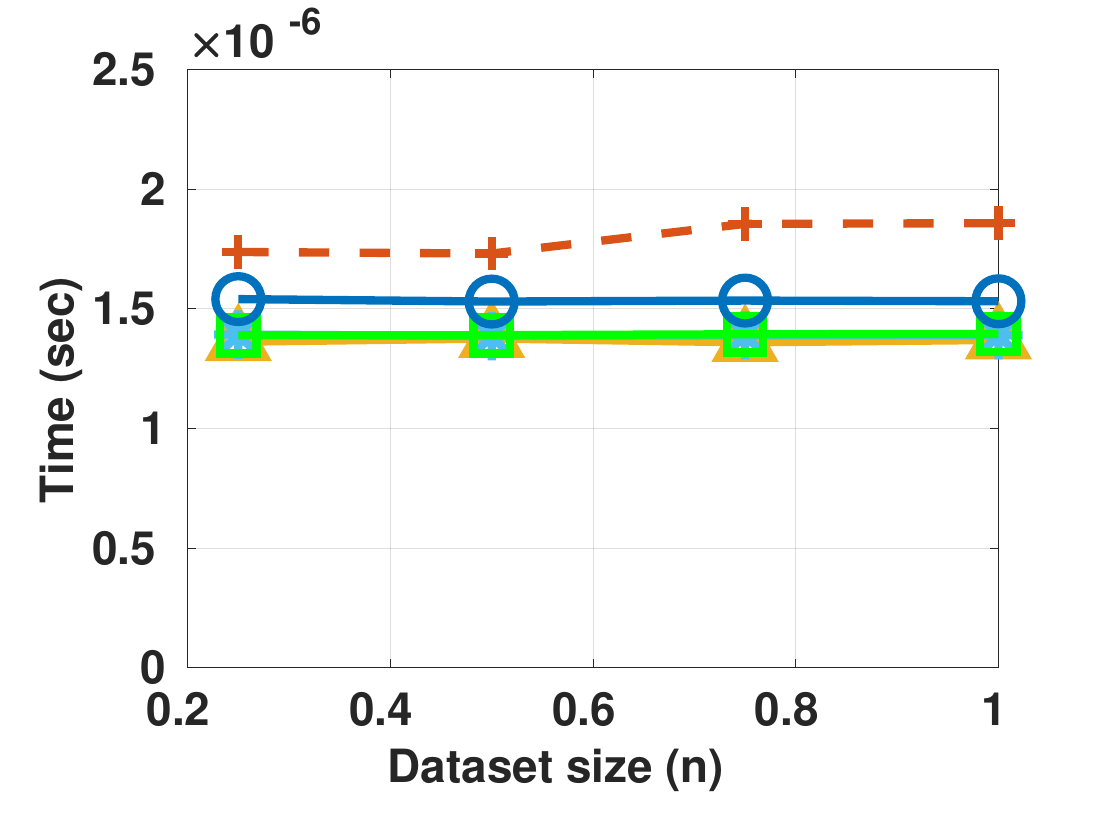}
        \caption{\adult}
    \end{subfigure}
    \hfill
    \begin{subfigure}[t]{0.24\textwidth}
        \centering
        \includegraphics[width=\textwidth]{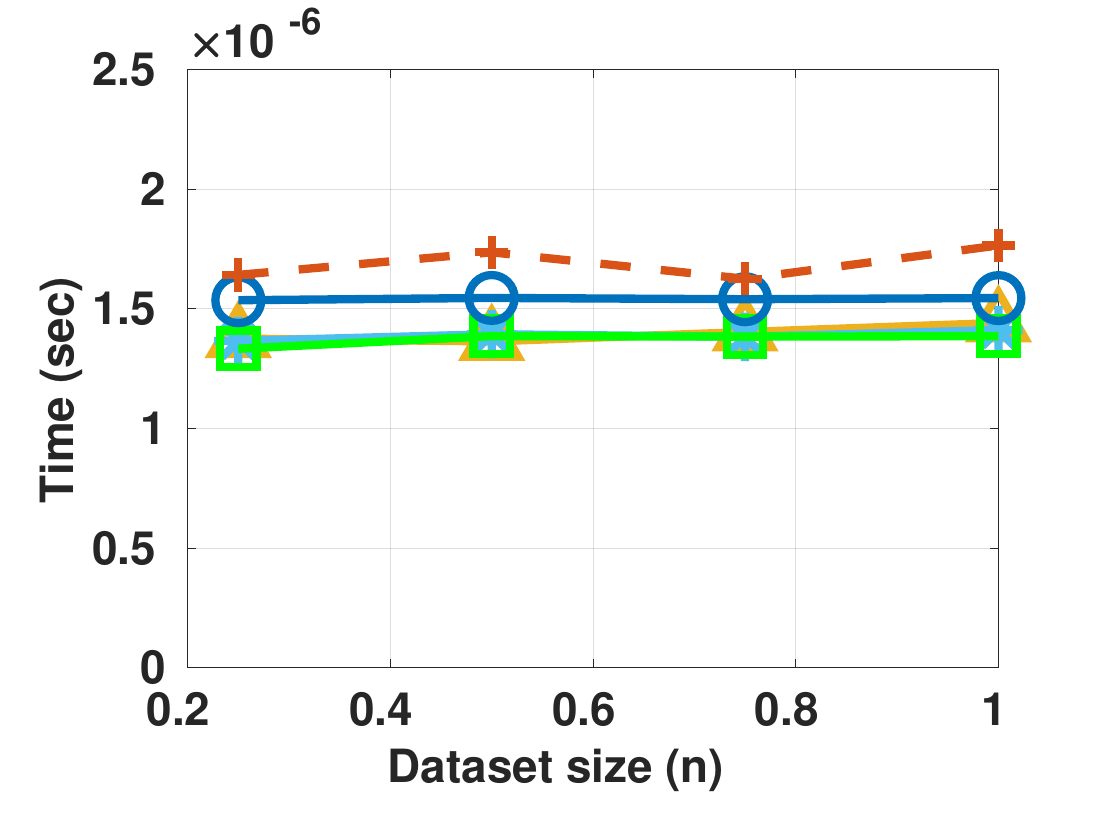}
        \caption{\compas}
    \end{subfigure}
    \hfill
    \begin{subfigure}[t]{0.24\textwidth}
        \centering
        \includegraphics[width=\textwidth]{plots/diabetes/diabetes_query_time_varying_ratio.pdf}
        \caption{\diabetes}
    \end{subfigure}
    \hfill
    \begin{subfigure}[t]{0.24\textwidth}
        \centering
        \includegraphics[width=\textwidth]{plots/popsim/popsim_query_time_varying_ratio.pdf}
        \caption{\popsim}
    \end{subfigure}
    \caption{Effect of varying majority to minority ratio on query time}
\end{minipage}
\end{figure*}

\begin{figure*}[!tb]
\begin{minipage}[t]{\linewidth}
    \begin{subfigure}[t]{0.24\textwidth}
        \centering
        \includegraphics[width=\textwidth]{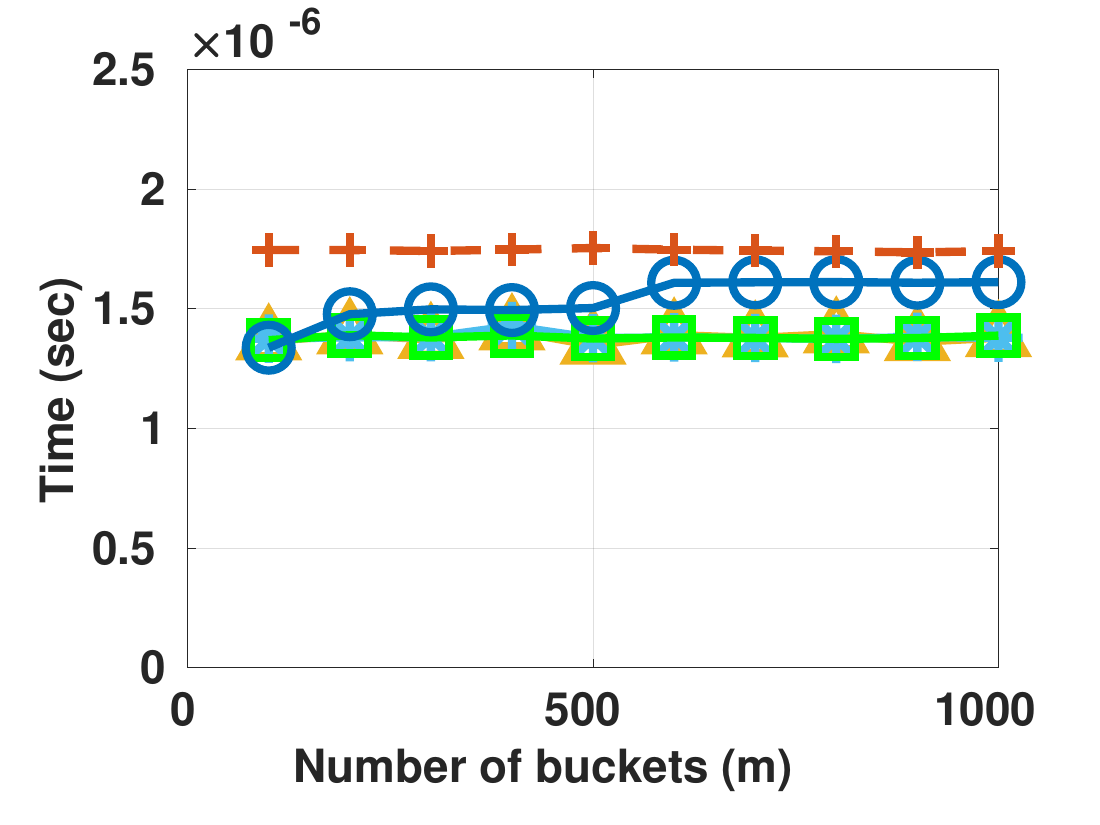}
        \caption{\adult}
    \end{subfigure}
    \hfill
    \begin{subfigure}[t]{0.24\textwidth}
        \centering
        \includegraphics[width=\textwidth]{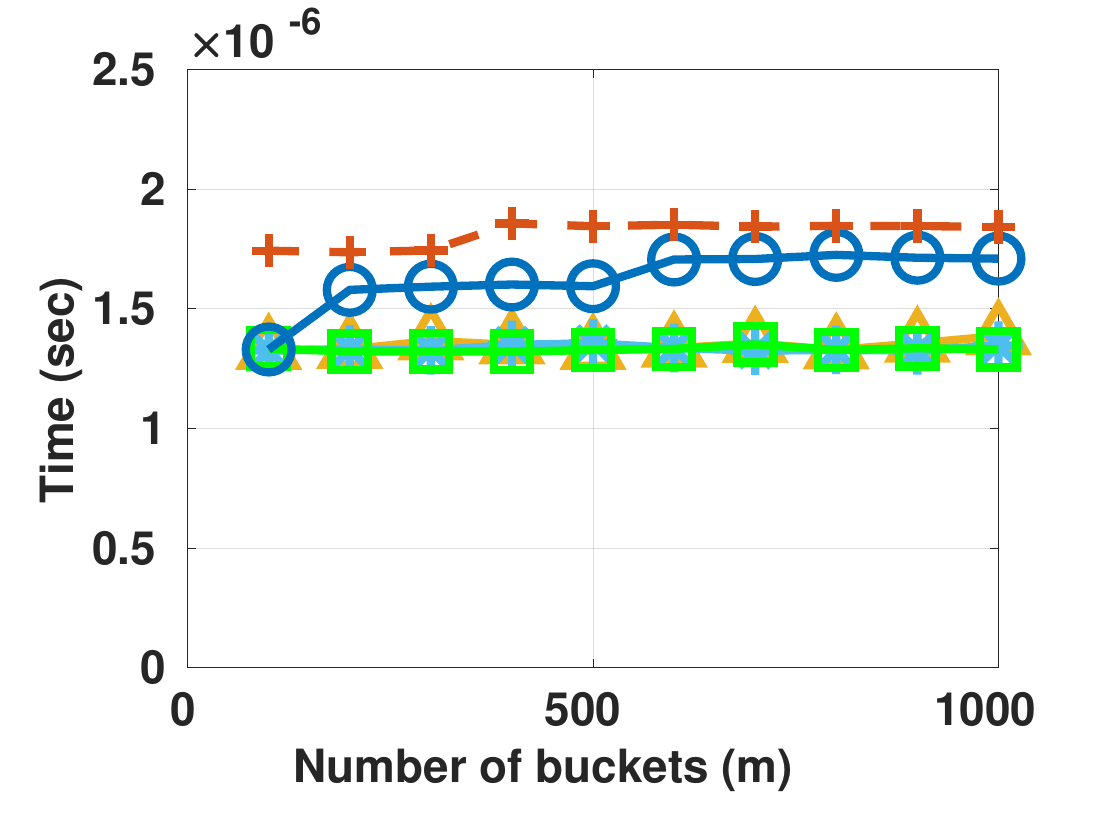}
        \caption{\compas}
    \end{subfigure}
    \hfill
    \begin{subfigure}[t]{0.24\textwidth}
        \centering
        \includegraphics[width=\textwidth]{plots/diabetes/diabetes_query_time_varying_number_of_buckets.pdf}
        \caption{\diabetes}
    \end{subfigure}
    \hfill
    \begin{subfigure}[t]{0.24\textwidth}
        \centering
        \includegraphics[width=\textwidth]{plots/popsim/popsim_query_time_varying_number_of_buckets.pdf}
        \caption{\popsim}
    \end{subfigure}
    \caption{Effect of varying number of buckets $m$ on query time}
\end{minipage}
\end{figure*}

\begin{figure*}[!tb]
    \begin{minipage}[t]{0.24\linewidth}
        \centering
        \includegraphics[width=\textwidth]{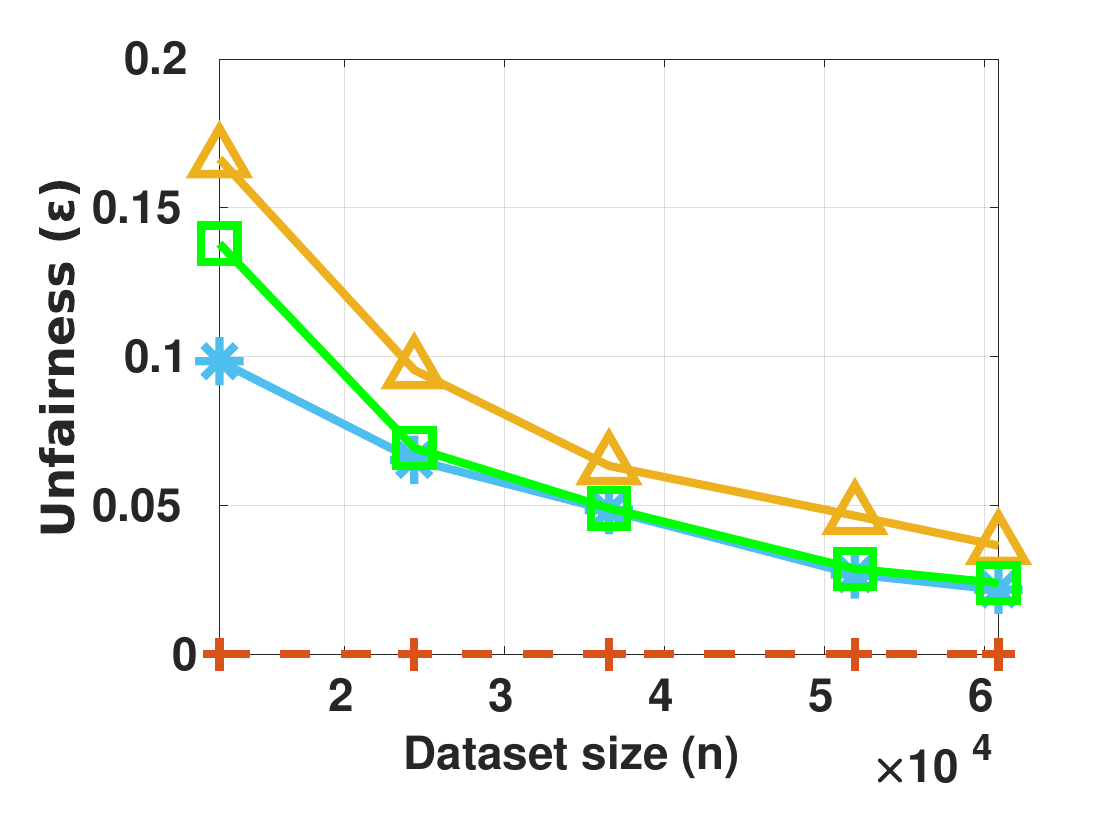}
        \vspace{-2.5em}
        \caption[]{Effect of varying dataset size $n$ on unfairness, \compas, {\tt race}}
        \vspace{-1em}
        \label{fig:compas_non_binary_n_vs_unfairness}
    \end{minipage}
    \hfill
    \begin{minipage}[t]{0.24\linewidth}
        \centering
        \includegraphics[width=\textwidth]{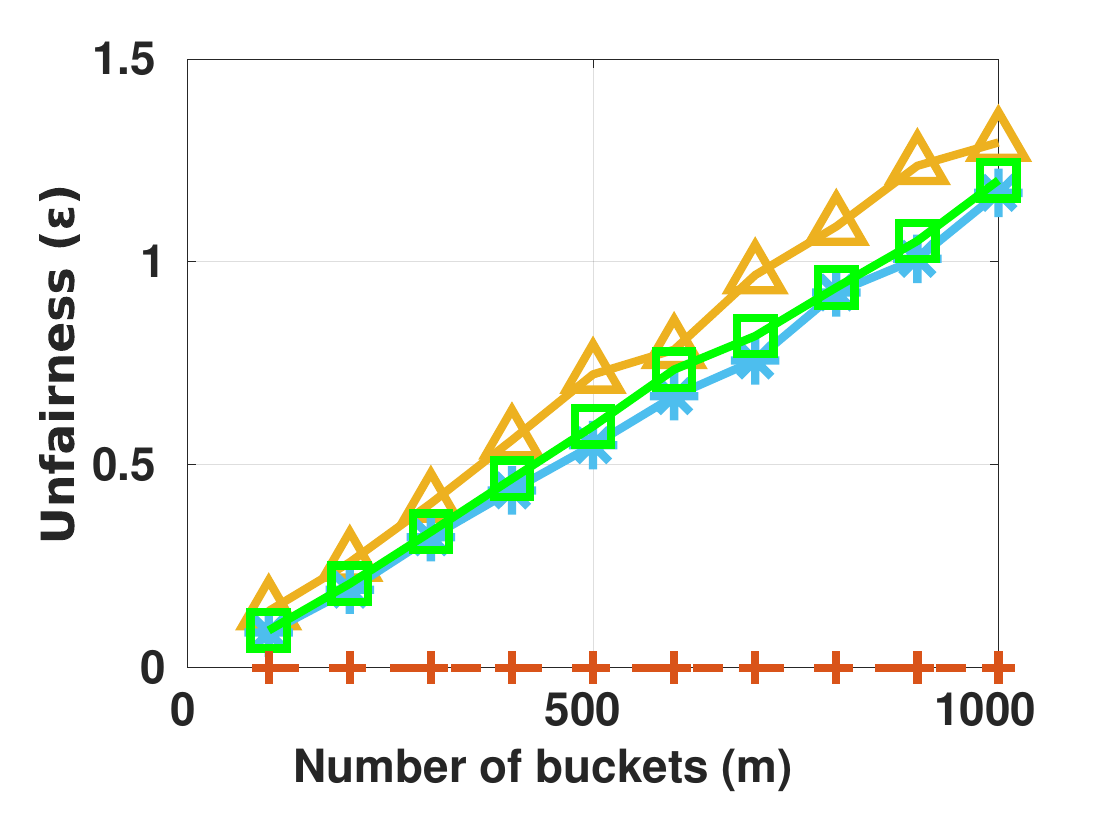}
        \vspace{-2.5em}
        \caption[]{Effect of varying number of buckets $m$ on unfairness, \compas, {\tt race}}
        \vspace{-1em}
        \label{fig:compas_non_binary_m_vs_unfairness}
    \end{minipage}
    \hfill
    \begin{minipage}[t]{0.24\linewidth}
        \centering
        \includegraphics[width=\textwidth]{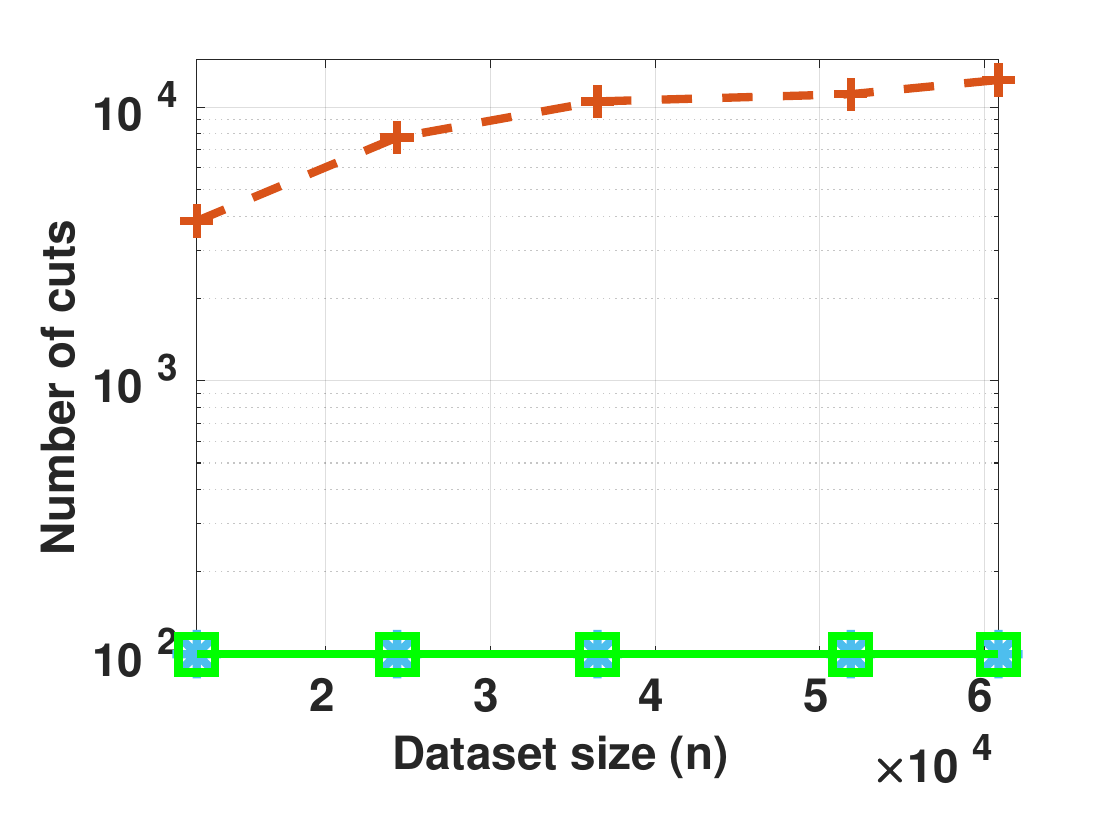}
        \vspace{-2.5em}
        \caption{Effect of varying dataset size $n$ on space, \compas, {\tt race}}
        \vspace{-1em}
        \label{fig:compas_non_binary_n_vs_space}
    \end{minipage}
    \hfill
    \begin{minipage}[t]{0.24\linewidth}
        \centering
        \includegraphics[width=\textwidth]{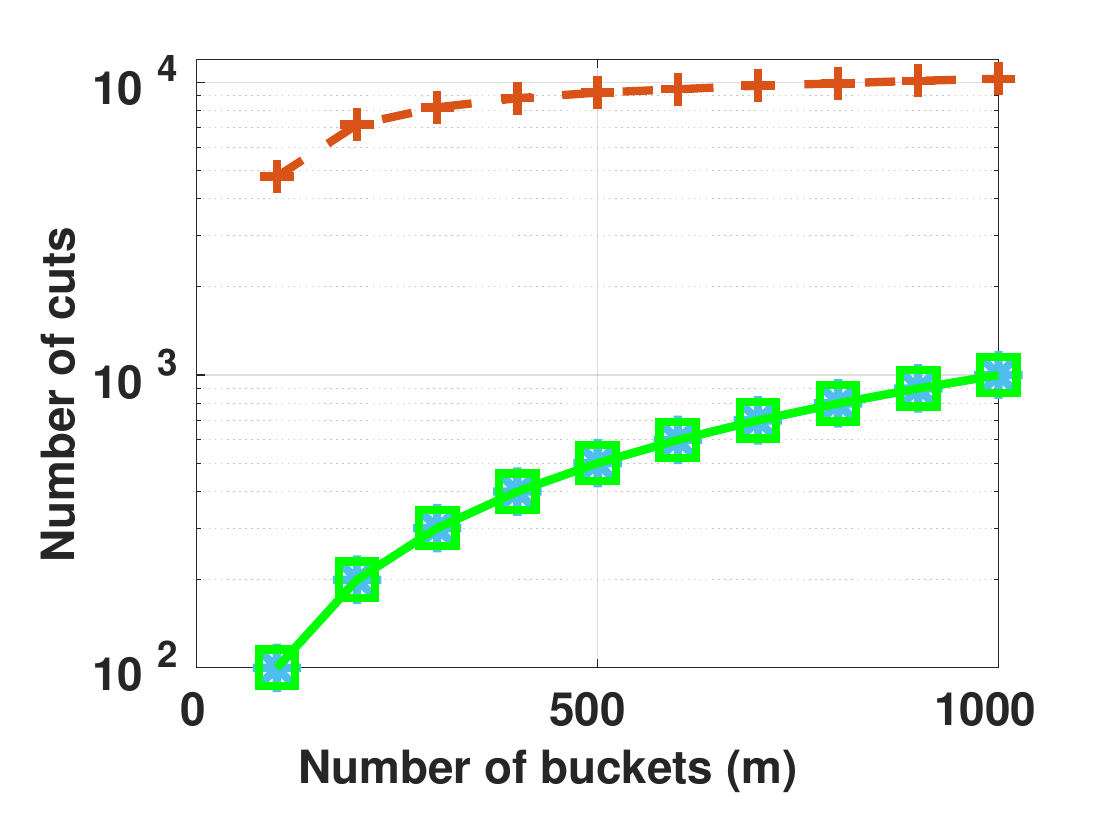}
        \vspace{-2.5em}
        \caption[]{Effect of varying number of buckets $m$ on space, \compas, {\tt race}}
        \vspace{-1em}
        \label{fig:compas_non_binary_m_vs_space}
    \end{minipage}
\end{figure*}

\begin{figure*}[!tb]
    \begin{minipage}[t]{0.24\linewidth}
        \centering
        \includegraphics[width=\textwidth]{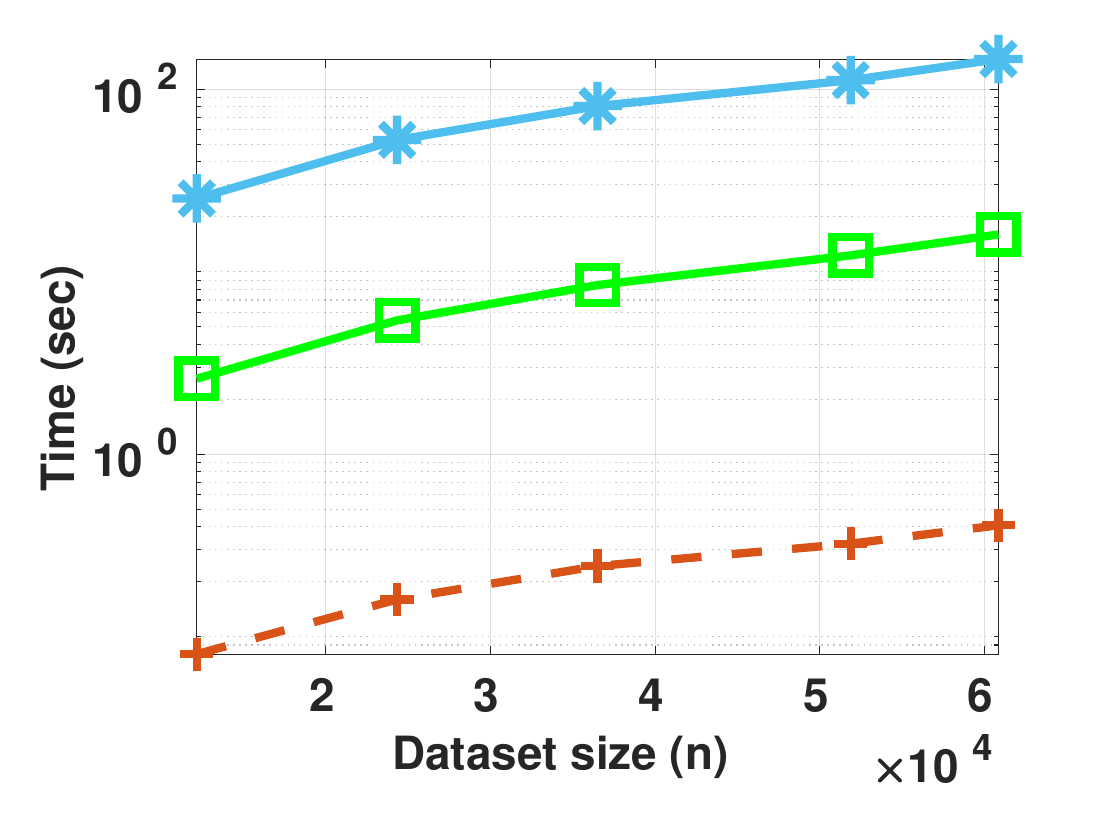}
        \vspace{-2.5em}
        \caption{Effect of varying dataset size $n$ on preprocessing time, \compas, {\tt race}}
        \vspace{-1em}
        \label{fig:compas_non_binary_n_vs_prep_time}
    \end{minipage}
    \hfill
    \begin{minipage}[t]{0.24\linewidth}
        \centering
        \includegraphics[width=\textwidth]{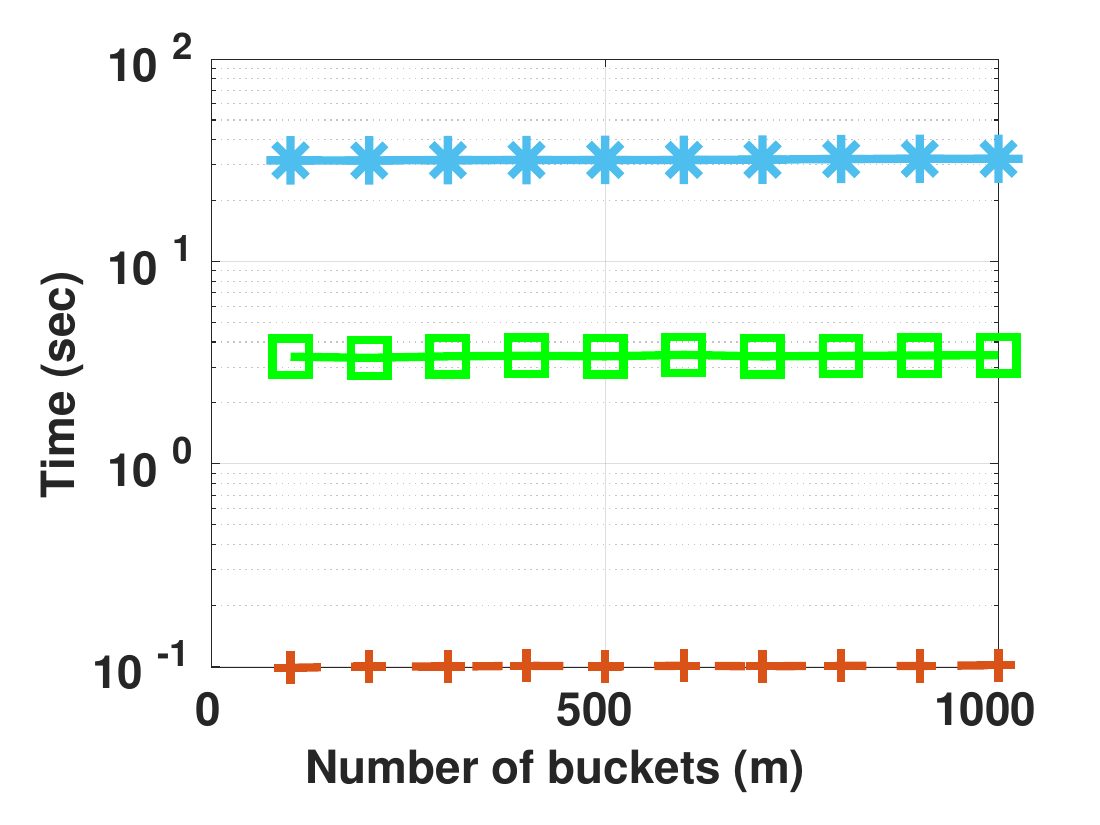}
        \vspace{-2.5em}
        \caption{Effect of varying number of buckets $m$ on preprocessing time, \compas, {\tt race}} 
        \vspace{-1em}
        \label{fig:compas_non_binary_m_vs_prep_time}
    \end{minipage}
    \hfill
    \begin{minipage}[t]{0.24\linewidth}
        \centering
        \includegraphics[width=\textwidth]{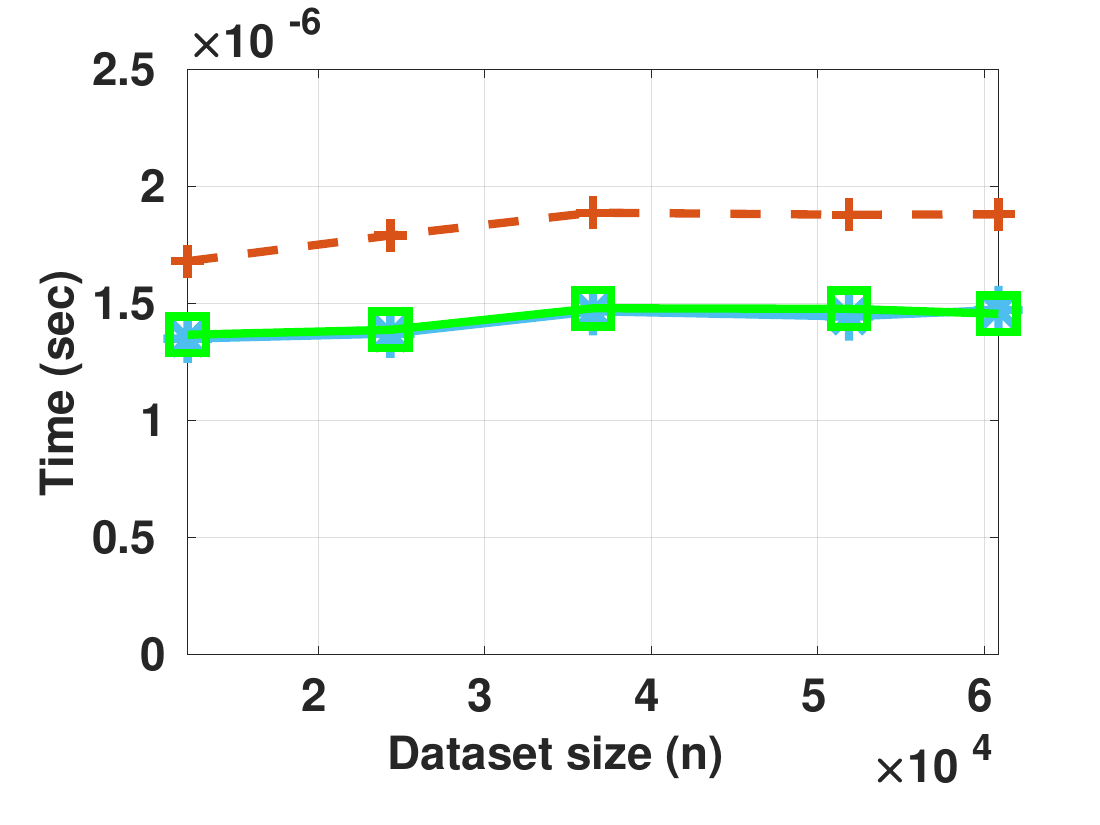}
        \vspace{-2.5em}
        \caption{Effect of varying dataset size $n$ on query time, \compas, {\tt race}}
        \vspace{-1em}
        \label{fig:compas_non_binary_n_vs_query_time}
    \end{minipage}
    \hfill
    \begin{minipage}[t]{0.24\linewidth}
        \centering
        \includegraphics[width=\textwidth]{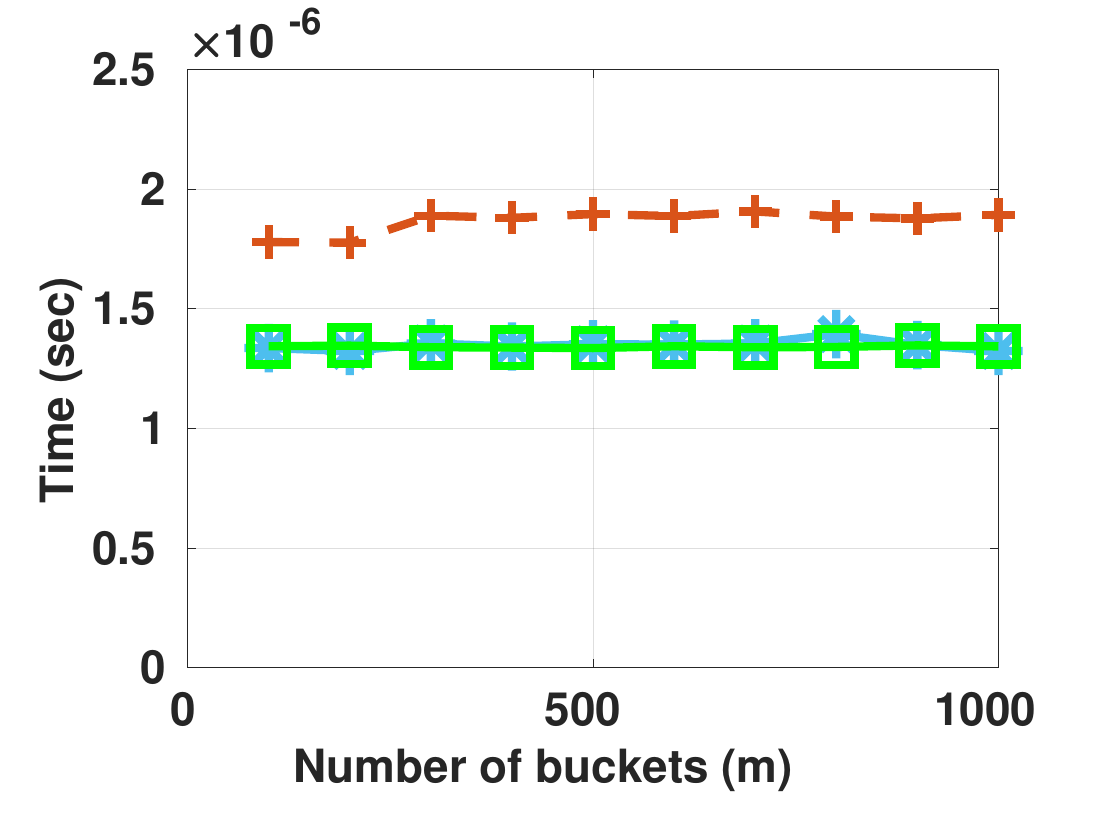}
        \vspace{-2.5em}
        \caption[]{Effect of varying number of buckets $m$ on query time, \compas, {\tt race}}
        \vspace{-1em}
        \label{fig:compas_non_binary_m_vs_query_time}
    \end{minipage}
\end{figure*}

\begin{figure*}[!tb]
    \begin{minipage}[t]{0.24\linewidth}
        \centering
        \includegraphics[width=\textwidth]{plots/adult_learned.pdf}
        \vspace{-2.5em}
        \caption{Learning setting: Unfairness evaluation over held out data, \adult} 
        \vspace{-1em}
    \end{minipage}
    \hfill
    \begin{minipage}[t]{0.24\linewidth}
        \centering
        \includegraphics[width=\textwidth]{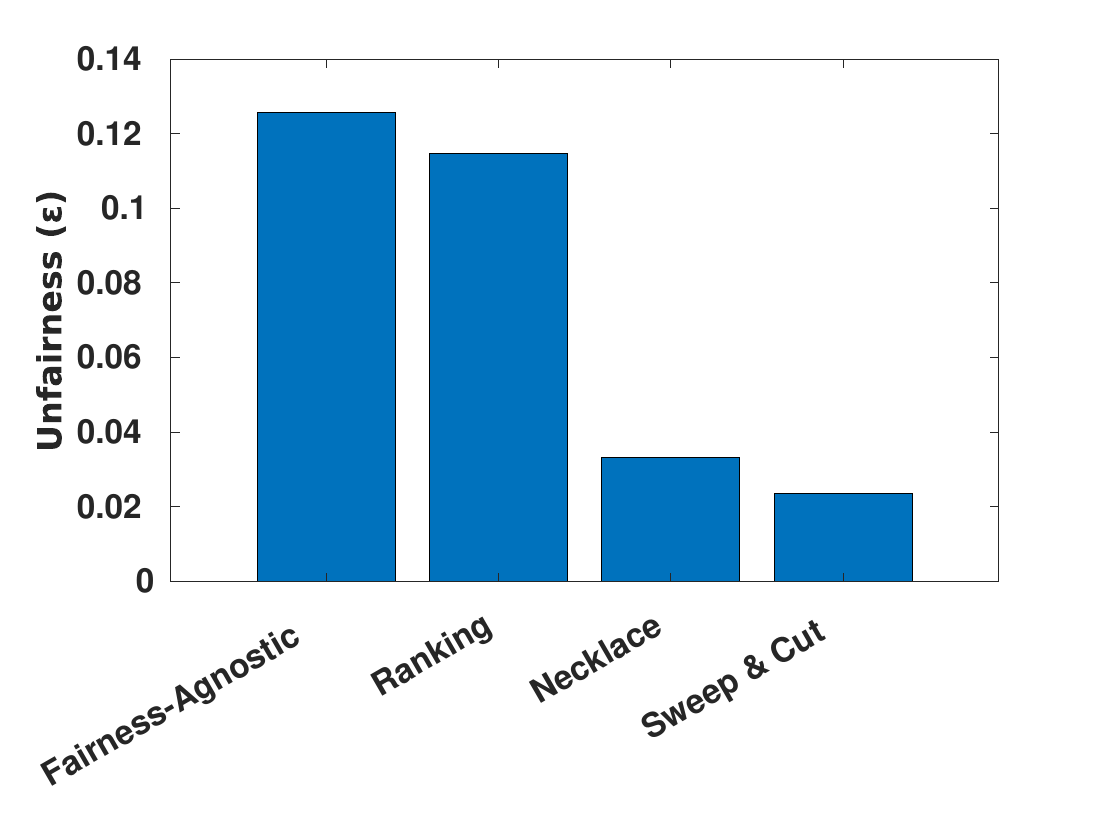}
        \vspace{-2.5em}
        \caption{Learning setting: Unfairness evaluation over held out data, \compas}
        \vspace{-1em}
    \end{minipage}
    \hfill
    \begin{minipage}[t]{0.24\linewidth}
        \centering
        \includegraphics[width=\textwidth]{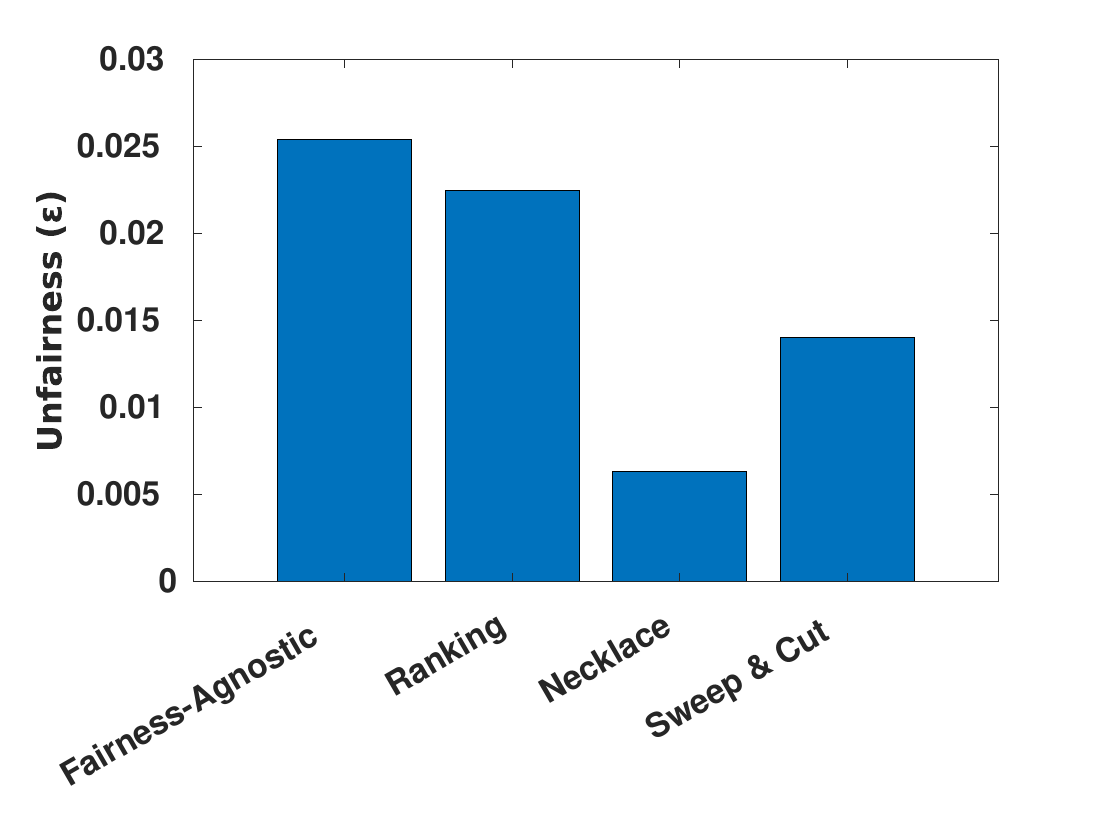}
        \vspace{-2.5em}
        \caption{Learning setting: Unfairness evaluation over held out data, \diabetes}
        \vspace{-1em}
    \end{minipage}
    \hfill
    \begin{minipage}[t]{0.24\linewidth}
        \centering
        \includegraphics[width=\textwidth]{plots/popsim_learned.pdf}
        \vspace{-2.5em}
        \caption{Learning setting: Unfairness evaluation over held out data, \popsim}
        \vspace{-1em}
    \end{minipage}
\end{figure*}

\begin{figure*}[!tb]
    \begin{minipage}[t]{0.24\linewidth}
        \centering
        \includegraphics[width=\textwidth]{plots/adult/local_search_adult.pdf}
        \vspace{-2.5em}
        \caption{Effect of local search based algorithm on unfairness, \adult}
        \vspace{-1em}
    \end{minipage}
    \hfill
    \begin{minipage}[t]{0.24\linewidth}
        \centering
        \includegraphics[width=\textwidth]{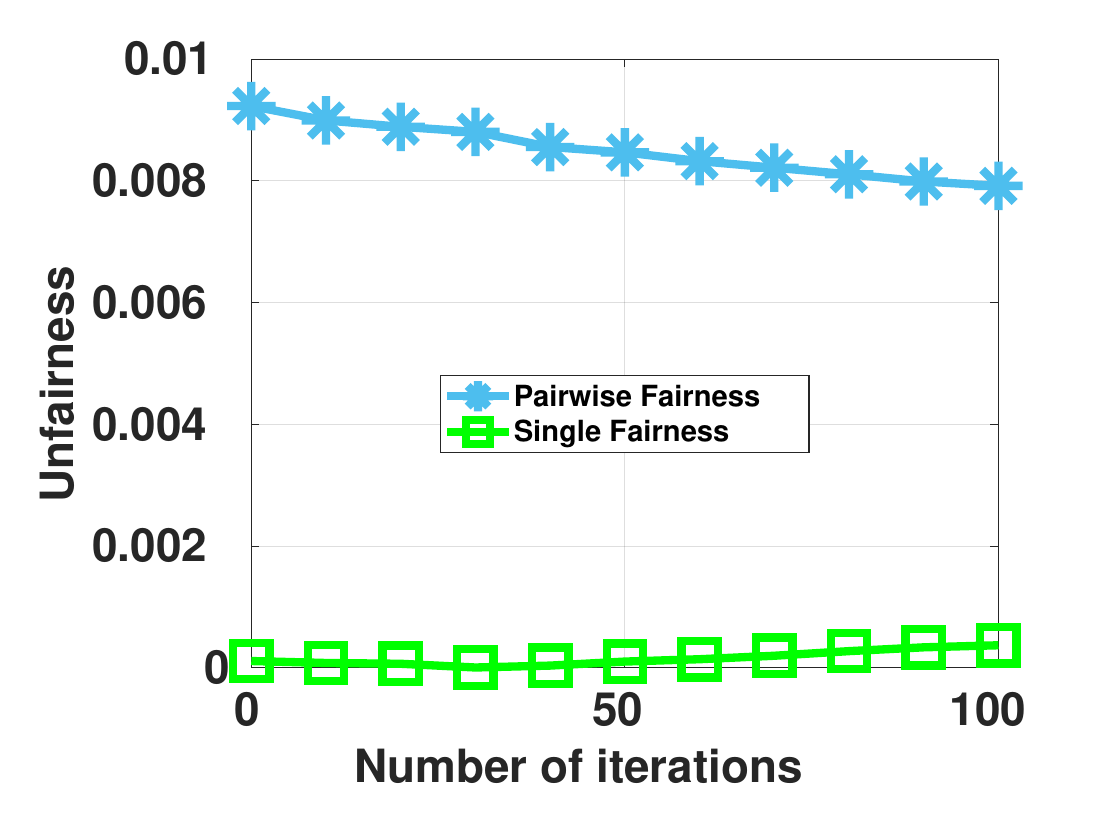}
        \vspace{-2.5em}
        \caption{Effect of local search based algorithm on unfairness, \compas}
        \vspace{-1em}
    \end{minipage}
    \hfill
    \begin{minipage}[t]{0.24\linewidth}
        \centering
        \includegraphics[width=\textwidth]{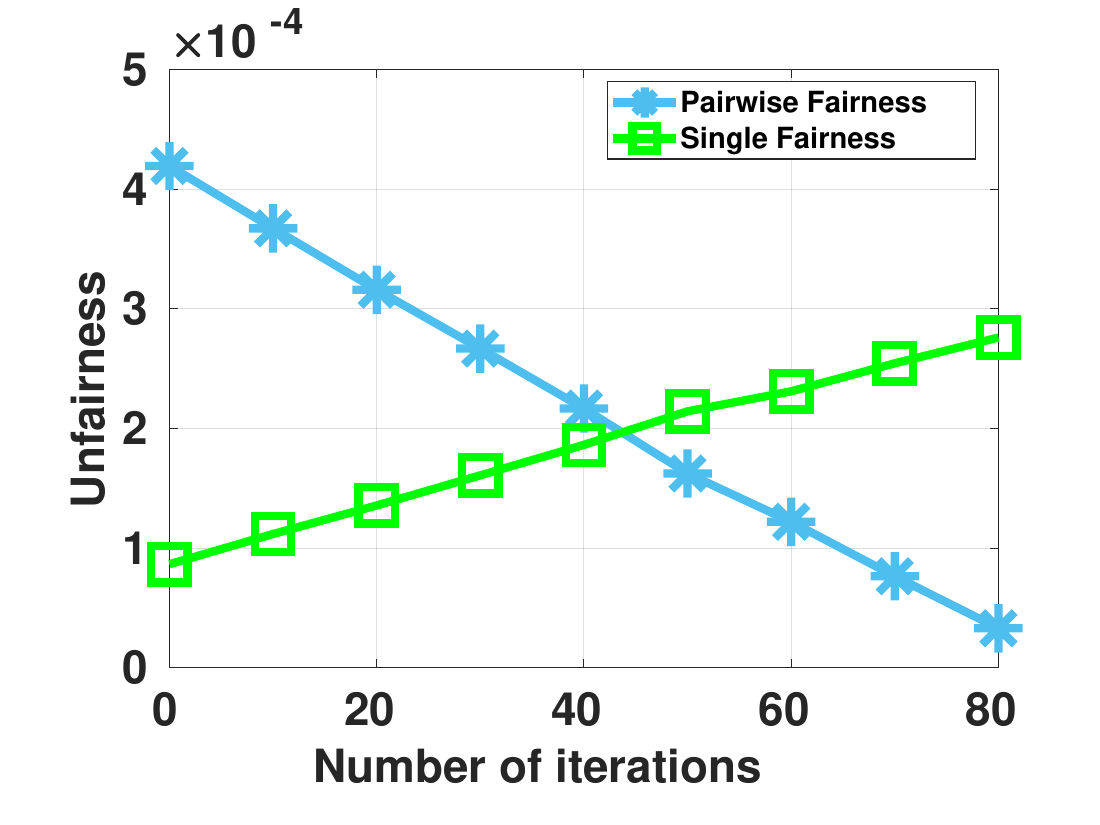}
        \vspace{-2.5em}
        \caption{Effect of local search based algorithm on unfairness, \diabetes}
        \vspace{-1em}
    \end{minipage}
    \hfill
    \begin{minipage}[t]{0.24\linewidth}
        \centering
        \includegraphics[width=\textwidth]{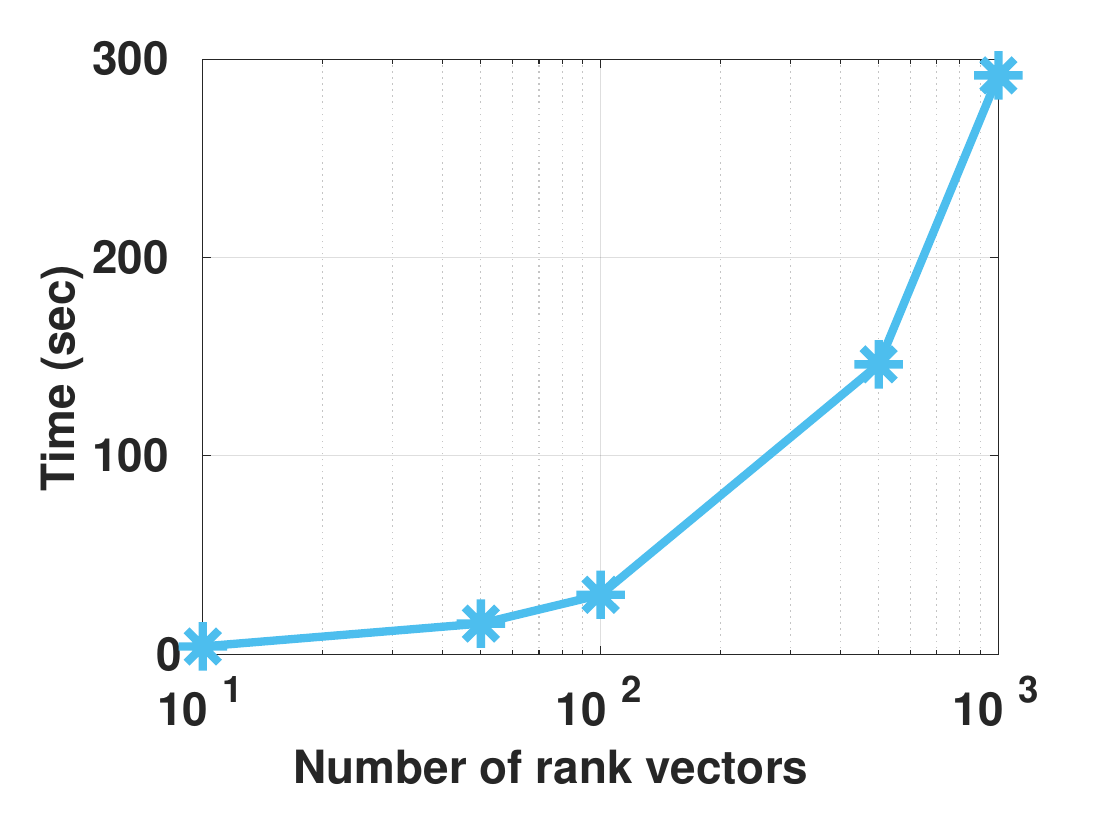}
        \vspace{-2.5em}
        \caption{Effect of varying number of sampled vectors on preprocessing time, \diabetes}
        \vspace{-1em}
    \end{minipage}
\end{figure*}

\begin{figure*}[!tb]
    \begin{minipage}[t]{0.24\linewidth}
        \centering
        \includegraphics[width=\textwidth]{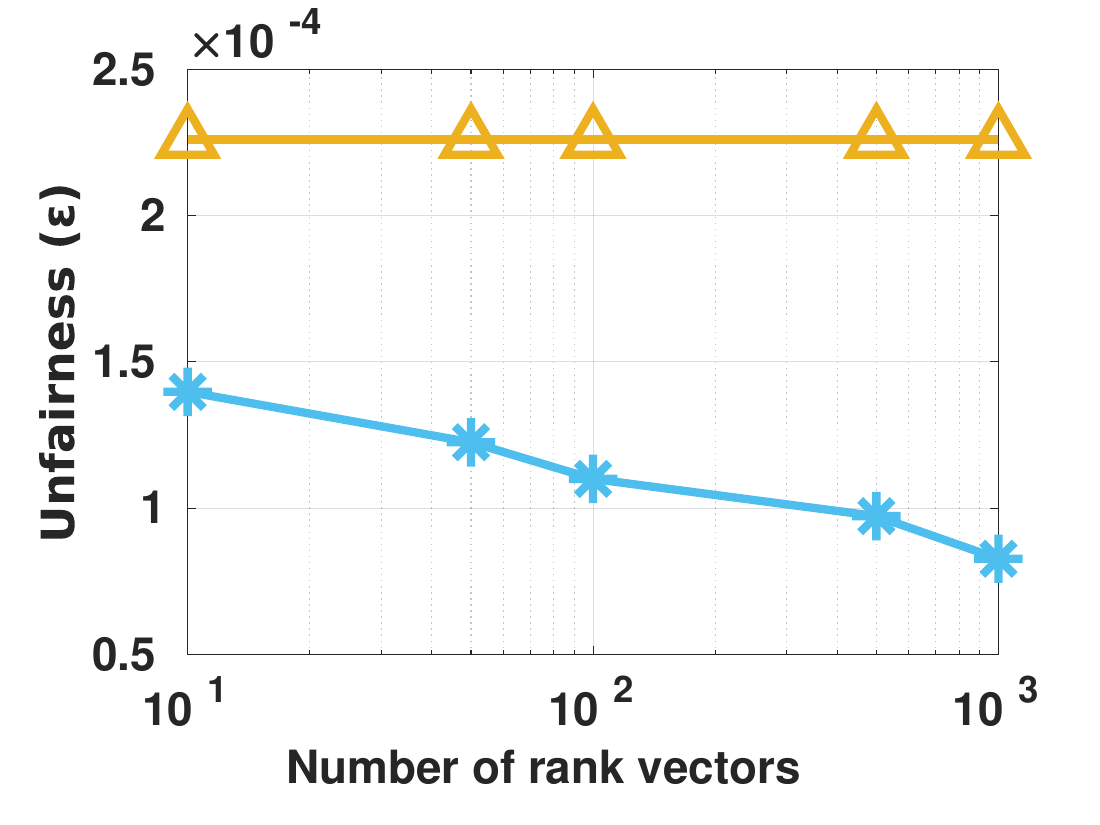}
        \vspace{-2.5em}
        \caption{Effect of varying number of sampled vectors on unfairness, \diabetes}
        \vspace{-1em}
    \end{minipage}
    \hfill
    \begin{minipage}[t]{0.24\linewidth}
        \centering
        \includegraphics[width=\textwidth]{plots/adult/adult_unfairness_varying_number_of_vectors.pdf}
        \vspace{-2.5em}
        \caption{Effect of varying number of sampled vectors on unfairness, \adult}
        \vspace{-1em}
    \end{minipage}
    \hfill
    \begin{minipage}[t]{0.24\linewidth}
        \centering
        \includegraphics[width=\textwidth]{plots/compas/compas_unfairness_varying_number_of_vectors.pdf}
        \vspace{-2.5em}
        \caption{Effect of varying number of sampled vectors on unfairness, \compas, {\tt sex}}
        \vspace{-1em}
    \end{minipage}
    \hfill
    \begin{minipage}[t]{0.24\linewidth}
        \centering
        \includegraphics[width=\textwidth]{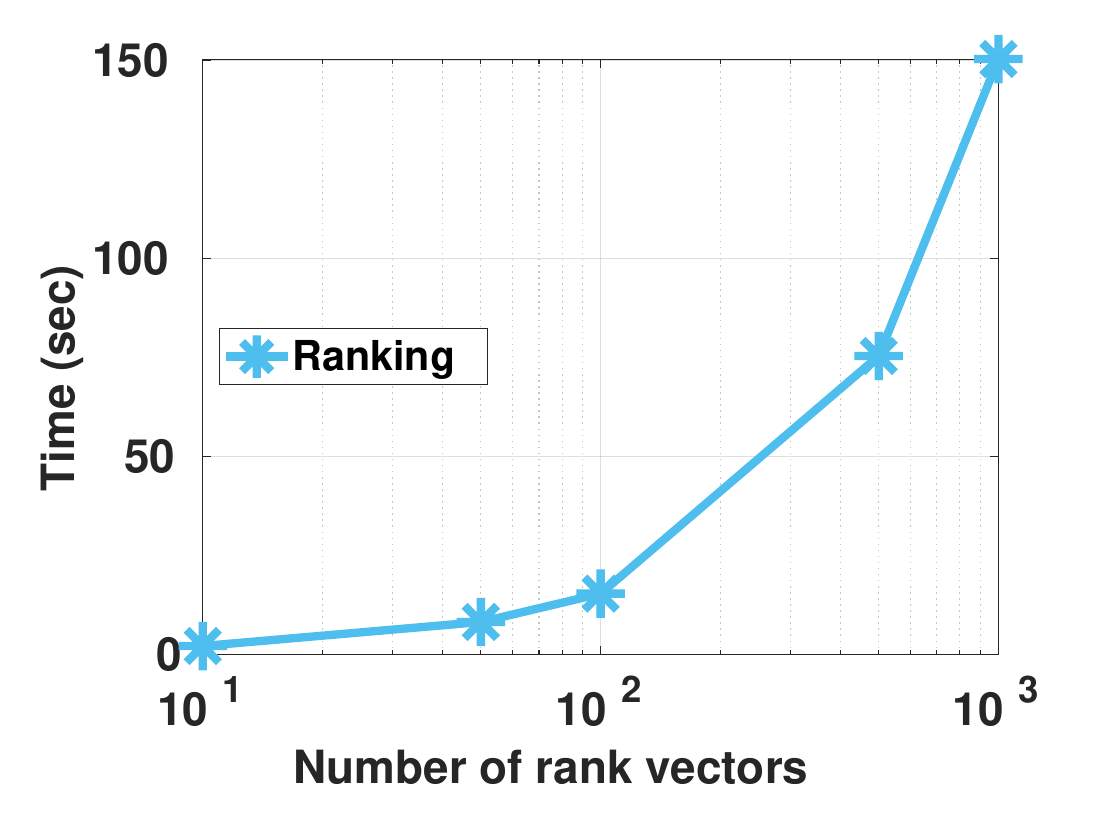}
        \vspace{-2.5em}
        \caption{Effect of varying number of sampled vectors on preprocessing time, \compas} 
        \vspace{-1em}
    \end{minipage}
\end{figure*}

\end{document}